%% file: Main_arxiv_v1.tex
\documentclass{article}

\input{Packages}
\input{Commandes}

\usepackage[normalem]{ulem}

\title{\bf{Large $N$ factorization of families of tensor trace-invariants}}
\author[1]{Sylvain Carrozza\thanks{sylvain.carrozza@ube.fr}}
\author[1,2]{Johann Chevrier\thanks{johann.chevrier@ube.fr}}
\author[2]{Luca Lionni\thanks{luca.lionni@ens-lyon.fr}}
\affil[1]{\normalsize{Universit\'{e} Bourgogne Europe, CNRS, IMB UMR 5584, 21000 Dijon, France}}
\affil[2]{\normalsize{CNRS, ENS de Lyon, LPENSL, UMR5672, 69342 Lyon cedex 07, France}}
\date{\today}

\begin{document}

\maketitle

\begin{abstract}
\noindent It was recently proven that, in contrast to their matrix analogues, the moments of a real Gaussian tensor of size $N$ do not in general factorize over their connected components in the asymptotic large $N$ limit. While the original proof of this rather surprising result was not constructive, explicit examples of non-factorizing moments, which are expectation values of trace-invariants, have since then been discovered. We explore further aspects of this problem, with a focus on Haar-distributed (or Gaussian) \emph{complex} random tensors, which are more directly relevant to quantum information. \\ We start out by exhibiting an explicit example of non-factorizing trace-invariant, thereby filling a gap in the recent literature. We then turn to the opposite question: that of finding interesting families of trace-invariants that \emph{do} in fact factorize at large $N$. We establish three main theorems in this regard. The first one provides a sufficient combinatorial bound ensuring large $N$ factorization, that is also simple enough to be applicable to various cases of practical relevance. Our second main result shows that the expectation value of any \emph{compatible} trace-invariant is dominated by certain tree-like combinatorial structures at large $N$, which we refer to as \emph{tree-like dominant pairings}. Our third main theorem establishes that any trace-invariant admitting tree-like dominant pairings does actually factorize at large $N$. In this way, we are able to prove that various families of trace-invariants that have been previously studied in the literature do factorize at large $N$.\\ We apply our findings to the theory of multipartite quantum entanglement: to any trace-invariant is associated a multipartite generalization of R\'{e}nyi entanglement entropy, whose typical expectation value in the uniform random quantum state can be explicitly computed assuming large $N$ factorization.
\end{abstract}

\setcounter{tocdepth}{2}
\tableofcontents

\section{Introduction}

The present contribution is a companion paper to Ref.~\cite{Carrozza:2026qcf}. Our main purpose here is to investigate the consequences of the fundamental large $N$ non-factorization result established in Ref.~\cite{Gurau:2025evo}, and further studied in Ref.~\cite{Berthold:2026zxk}. In contrast to those prior works, we will work in a complex setting (rather than a real one): since complex random tensors can be used to model random multipartite quantum states, this makes the present work directly relevant to Ref.~\cite{Carrozza:2026qcf}, as advertised.

In more detail, the random tensors considered are arrays of complex numbers $S=\{S_{i_1, \ldots, i_D}\}$, where the $D\ge 2$ indices are ordered (no symmetry is \apriori assumed) and take value in $\{1, \ldots, N\}$. For $D=2$, $S$ is a random matrix. In particular, we will mainly discuss the two following examples in this paper (for any fixed number of indices $D$): a) the complex Gaussian tensor $T$, whose $N^D$ components are independent and identically distributed complex Gaussian random variables with zero mean and variance $1/N^D$; and b) the Haar random tensor $T_U$, obtained by applying a $U(N^D)$ Haar-distributed unitary matrix to a fixed vector of norm 1 in $\mathbb{C}^{N^D}$.\footnote{Equivalently, $T_U$ can be defined as a normalized Gaussian tensor $T/\norm{T}$.} In both cases, the resulting random vector is interpreted as a random $D$-index tensor after subdividing its index set according to the isomorphism $\mathbb{C}^{N^D}\simeq (\mathbb{C}^{N})^{\otimes D}$.

A \emph{local unitary} ($\LU$) transformation on $(\mathbb{C}^{N})^{\otimes D}$ is a $D$-tuple of unitary transformations $\vec{U}=(U_1 , \ldots , U_D)$ on $\mathbb{C}^N$. Given a random tensor $S$, $S_{\vec{U}}$ denotes the random tensor obtained from $S$ by applying the $N\times N$ unitary matrix $U_c$ to the $c^{\rm{th}}$ index of $S$, for every $c$. $S$ is then said to be local-unitary invariant ($\LU$-invariant) if, for any local unitary $\vec{U}$, $S$ and $S_{\vec U}$ have the same distribution. Such distributions are for instance important for applications to random geometry \cite{Gurau:2016cjo, PhysRevD.85.084037, Bonzom:2011zz, Bonzom:2012cu, Bonzom:2015axa, gurau_random_2017, Lionni2018, Gurau:2024nzv, Carrozza:2024gnh} and quantum information \cite{Page:1993df,  PhysRevLett.72.1148, Sanchez-Ruiz:1995bhf, Sen:1996ph, Hayden:2005sqo, Iizuka:2024pzm,  dartois_injective_2024, Carrozza:2026qcf, Collins:2024pip, bucdalché2026, marginals, RMTQI, Hayden:2016cfa, Kudler-Flam:2021efr,  Penington:2022dhr, cheng2024random}. Both distributions $T$ and $T_U$ are $\LU$-invariant. A polynomial $P$ in the coefficients of a tensor $S$ and of its complex conjugate $\bar S$ is $\LU$-invariant if $P(S, \bar S)=P(S_{\vec U}, \bar{S}_{\vec U})$ for any local unitary $\vec{U}$. The quantities that define the moments (or correlations) of a $\LU$-invariant random tensor $S$ are the averages (or expectations) $\langle P(S,\bar S)\rangle$, where $P$ belongs to a subset of $\LU$-invariant polynomials  called \emph{trace-invariants}. 

One is naturally interested in understanding tensor distributions in the limit where $N$ is large. In this context, it is important to characterize the large $N$  asymptotics of the  moments $\langle P(S,\bar S)\rangle$ of $\LU$-invariant random tensors. 

\paragraph{The factorization problem.} For random matrices, it is sufficient to characterize the large $N$ asymptotics of a subset of trace-invariants: those that do not themselves factorize as products of lower degree trace-invariants. The following \emph{large $N$ factorization property} is indeed known to hold for a large class of random matrix distributions, including the previously introduced Gaussian and Haar-distributed matrices:\footnote{For a ``classical'' example of complex LU-invariant random matrix, one indeed expects the covariance of two polynomials $P_1, P_2$ as above to scale as $N$ and $\langle P_1\rangle \langle P_2 \rangle$ to scale as $N^2$ (see \eg Ref.~\cite{Collins:2024pip} Sec.~3.4), which justifies Eq.~\eqref{eq:facto-intro} using Eq.~\eqref{eq:facto-intro-conn-corr}.}
\begin{equation}
\label{eq:facto-intro}
    \langle P_1(S,\bar S)P_2(S,\bar S) \rangle \underset{N \to \infty}{\sim}   \langle P_1(S,\bar S)\rangle \langle P_2(S,\bar S) \rangle \;.
\end{equation}
When such an equation holds, we say that $P_1$ and $P_2$ factorize at large $N$ for the random tensor or matrix $S$. 
The large $N$ factorization Eq.~\eqref{eq:facto-intro} can be related to the fact that \emph{connected correlations} (or \emph{cumulants}) are suppressed by powers of $1/N$ with respect to products of correlations. One indeed has: 
\begin{equation}
\label{eq:facto-intro-conn-corr}
    \langle P_1(S, \bar S)P_2(S, \bar S)\rangle =  \langle P_1(S, \bar S)\rangle \langle P_2(S, \bar S)\rangle + \langle P_1(S, \bar S),P_2(S, \bar S)\rangle_{\textrm{conn}}  \;,
\end{equation}
where if $x, y$ are complex random variables, $\langle x , y \rangle_{\textrm{conn}}$ is the covariance of $x, \bar y$.

More broadly, such factorization properties are commonly encountered in large $N$ field theories, where they significantly simplify the analysis of strong-coupling effects. In particular, in vector or matrix models (including gauge theories), they lead to a uniform $1/N$ suppression of the fluctuations of relevant observables which allows one to reinterpret the large $N$ regime as a semi-classical one, via the identification $\hbar \sim 1/N$ (see \eg\cite{Yaffe:1981vf, Rossi:1996hs, Lucini:2012gg, GarciaVera:2018kfu} and references therein). In random tensor models (see \cite{gurau_random_2017} and references therein) and tensor field theories (see \cite{Klebanov:2018fzb, Gurau:2019qag, Benedetti:2020seh} and references therein), the large $N$ factorization property of specific families of observables was also noticed and taken advantage of, most notably in the context of large $N$ limits dominated by so-called \emph{melonic graphs}, of which we should distinguish two rather distinct types: melonic limits which generate \emph{local} radiative corrections \cite{PhysRevD.85.084037} and thus behave similarly to vector models in the quantum setting (see \eg \cite{Benedetti:2018ghn}); melonic limits which, by means of \emph{bilocal} radiative corrections \cite{Gurau:2010ba, Bonzom:2011zz, Carrozza:2015adg, Ferrari:2017jgw, Gurau:2017qya, Benedetti:2017qxl, Carrozza:2018ewt, Carrozza:2021qos}, can reproduce the interesting strong-coupling effects of the Sachdev-Ye-Kitaev quantum-mechanical model \cite{Witten:2016iux, Klebanov:2016xxf}, and give rise to an interesting class of large $N$ QFTs in higher dimensions \cite{Benedetti:2017fmp, Giombi:2018qgp, Benedetti:2019eyl, Benedetti:2019ikb, Benedetti:2020yvb}. In Ref.~\cite{Bonzom:2012cu}, large $N$ factorization was even taken as an input in order to derive the existence and properties of large $N$ limits of (non-Gaussian) random tensors from the infinite tower of Schwinger-Dyson equations obeyed by their correlation functions. To our knowledge, no example of random tensor distribution or large $N$ field theory that breaks the generally expected large $N$ factorization property was known until recently.

\

Last year, Ref.~\cite{Gurau:2025evo} provided a probabilistic proof of the existence of trace-invariants for which the large-$N$ factorization property of Eq.~\eqref{eq:facto-intro} does not hold in the real Gaussian case (and their proof generalizes straightforwardly to the complex setting). Moreover, such non-factorizing trace-invariants are in some sense generic (see Ref.~\cite{Gurau:2025evo} for a precise statement). Extrapolating to the quantum setting, one may interpret this result as the statement that, in large $N$ tensor field theories, generic observables should not be expected to behave semi-classically in the large $N$ limit. In light of this remarkable fact, one would like to first produce explicit examples that do not factorize at large $N$ for the complex Gaussian tensor, and then, eventually, to precisely determine the subset of trace-invariants that do not factorize. Very recently, Ref.~\cite{Berthold:2026zxk} provided some explicit examples of polynomials that do not factorize at large $N$ for \emph{real} random tensors, but the corresponding polynomials do not correspond to trace-invariants of \emph{complex} $\LU$-invariant tensors, so, as of today, there is still no known counterexample to large $N$ factorization for the complex case. 

In \textbf{Sec.~\ref{sec:counterex}}, we present the first counterexample to large $N$ factorization for a complex random tensor. Unlike the previous counterexamples of Ref.~\cite{Berthold:2026zxk}, which are based on non-bipartite colored graphs and thus cannot support complex tensor invariants, our result overcomes this limitation. While adapted to the complex setting, the method we followed to generate an explicit counterexample is completely analogous to the one proposed by the authors of Ref.~\cite{Berthold:2026zxk}: indeed, just like their counterexamples, ours falls in the class of \emph{maximally single-trace} invariants previously introduced in Ref.~\cite{Ferrari:2017jgw}. Additionally, we investigate conditions under which all cumulants may contribute to the leading term of a given moment at large $N$.

\ 
 
Approaching the question of large $N$ factorization from a broader perspective, we may study under what conditions some given collection $\{P_1, \ldots, P_s\}$ of trace-invariants factorizes. In Ref.~\cite{Gurau:2025evo}, a combinatorial condition is given, that involves the polynomials $P_1, \ldots, P_s$ themselves, but also a much larger set of polynomials built from those elements. As a result, it is arguably very difficult to verify in general whether the collection $\{ P_1, \ldots, P_s \}$ satisfies the aforementioned condition. 

It is the latter point of view that we adopt in this paper and especially in \textbf{Sec.~\ref{sec:largeN}}. Indeed, we provide two sufficient conditions for a restricted family of trace-invariants $\{P_1, \ldots, P_s\}$ to factorize at large $N$ for complex Gaussian tensors or, equivalently, for  Haar random tensors. On the one hand, Thm.~\ref{main-theo1} ensures the factorization of invariants associated with graphs satisfying a certain combinatorial bound. In the process, it also provides the tools to show that the so-called \textit{compatible graphs} necessarily exhibit a \textit{tree-like behavior at leading order} (see Def.~\ref{def:DEF} and Thm.~\ref{main-theo2}). On the other hand, the assumptions required for Thm.~\ref{main-theo3} to hold are verified by most examples of trace-invariants whose large $N$ asymptotics have been studied in the literature. They imply for instance the large $N$ factorization of planar trace-invariants for the complex Gaussian tensor with $D=3$ indices, a question raised in Ref.~\cite{Gurau:2025evo}. 

While the three main theorems -- Thm.~\ref{main-theo1}, Thm.~\ref{main-theo2} and Thm.~\ref{main-theo3} -- are conceptually simple, they provide considerable improvements on the aforementioned condition: they are of the same combinatorial nature, but have the significant advantage of only involving the polynomials $P_1, \ldots, P_s$ themselves.

\paragraph{Multipartite entanglement structure of random $D$-partite quantum states.} In quantum information, one may interpret a (normalized) complex tensor $S$ (as introduced above) as providing a coordinate-dependent definition of a \emph{$D$-partite quantum state} $\ket{S} \in (\mathbb{C}^{N})^{\otimes D}$. In this context, $\LU$ transformations parametrize independent local changes of frames in each of the $D$ subsystems constituting $\ket{S}$. The \emph{entanglement structure} of $\ket{S}$ can be defined as its orbit under the action of the group of $\LU$ transformations. Two states lying in the same $\LU$-orbit are then deemed $\LU$-equivalent, meaning that they share the same entanglement properties: indeed, this formalizes the intuitive idea that entanglement is precisely what classifies the non-local properties of $D$-partite quantum states. In the companion paper \cite{Carrozza:2026qcf}, we reviewed how trace-invariants are in general sufficient to separate $\LU$-orbits of deterministic states, and used that fact to extend the notion of $\LU$-equivalence to random quantum states $\ket{S}$ (which, after fixing local orthonormal bases, are represented by random tensors $S$): two random quantum states are $\LU$-equivalent if and only if their trace-invariants have the same correlation functions. The large $N$ factorization problem is thus clearly relevant to an \emph{asymptotic} version of $\LU$-entanglement classification of random states, in a large $N$ regime where one only retains information about the leading (resp.~leading and next-to-leading) order behavior of trace-invariants. In the bipartite setting ($D=2$), natural observables to consider in this context are the expectation values of entanglement R\'{e}nyi-$k$ entropies for integer $k$, which all take the form $- \ln P(S,\bar{S})$ where $P$ is a trace-invariant (that also happens to be positive). For a large class of random quantum states, such as the Haar random state defined by $T_U$ or generalizations known as \emph{random tensor networks} (which are built out from a collection of independent and identically distributed Haar random states), it can be shown that (see the original Ref.~\cite{Hayden:2016cfa}, and also Ref.~\cite{Cheng:2022ori} for generalizations): (i) at large $N$, the \emph{typical} value of $- \ln P(S,\bar{S})$ is $- \ln \mean{P(S,\bar{S})}$; (ii) moreover, the asymptotic relation
\begin{equation}\label{eq:average_renyi_intro}
    \mean{- \ln P(S,\bar{S})} \underset{N \to \infty}{\sim} - \ln \mean{P(S,\bar{S})}  
\end{equation}
holds \ie the asymptotic expectation of the entanglement R\'{e}nyi entropy associated to $P$ (which takes the form of an \emph{annealed} average) is directly determined by the asymptotic expectation of $P$ itself (or, in other words, by a \emph{quenched} computation that is much easier to perform than its annealed cousin). As was shown in Ref.~\cite{Carrozza:2026qcf}, in the multipartite setting ($D\geq 3$) it is possible to construct an infinite family of $\LU$-invariant functions which can be interpreted as multipartite generalizations of entanglement R\'{e}nyi entropies, in the sense that they take a similar functional form and can be proven to vary \emph{monotonically} under local operations. Any such quantity takes the form $- \ln \abs{P(S,\bar{S})}$, where $P$ is a trace-invariant that is in general complex-valued, and may in particular admit zeroes (in which case the value of the entropy at any zero is conventionally fixed to $+\infty$). 
When $D=2$, the validity of property (i) is intertwined with the large $N$ factorization property of bipartite trace-invariants, while property (ii) follows from additional uniform lower-bounds on those trace-invariants. Since neither of those results are universally true in the multipartite setting, extending properties (i) and (ii) to $D\geq 3$ requires extra care. Clarifying under which hypotheses such results can be proven to hold is the purpose of \textbf{Sec.~\ref{subsec:averageApprox}}. Our two main results in that section are: the concentration measure phenomenon of Prop.~\ref{prop:concentration}, which, for trace-invariants that obey a certain large $N$ factorization criterion (Eq.~\eqref{eq:Ansatz_large-N}), allows one to determine the typical value of the associated multipartite R\'{e}nyi entropies; and Prop.~\ref{prop:asymptotic_R_G}, which, under the additional assumption that a given trace-invariant obeys a power law (see Def.~\ref{def:power_law}), establishes a result analogous to Eq.~\eqref{eq:average_renyi_intro}. While our main focus in the present contribution is the application of those two results to Gaussian or Haar random tensors (see Cor.~\ref{cor:Lit_inv_and_exchange}), we expect them to be applicable to a large class of random tensors, including random tensor networks.

\paragraph{Acknowledgements.} We thank Razvan Gurau and Hannes Keppler for insightful discussions on the large $N$ factorization problem.
This work was supported by the  ANR JCJC project ``RTFPQuEnt" (ANR-25-CE40-5465). The IMB receives support from the EIPHI Graduate School (contract ANR-17-EURE-0002).  L.L. also acknowledges support  from the  ANR PRC project  ``TAGADA" (ANR-25-CE40-5672).

\section{Prerequisites and preliminary results} \label{sec:Prerequisites}

\paragraph{Colored graphs.} A bipartite  $D$-edge-colored graph $G$, or simply a \textit{$D$-colored graph}, is specified by a pair consisting of a vertex set and an edge set. The vertex set is bipartite, \ie it is the disjoint union of a set of white vertices and a set of black vertices, and edges can only link black and white vertices. The edge set is partitioned into $D$ disjoint subsets, each corresponding to a label in $\{1, \ldots, D\}$ called \emph{color}, and there is exactly one edge of each color incident to a given vertex.

As a consequence, a $D$-colored graph $G$ has the same number of black vertices and white vertices, which we denote $k(G)$. The set of $D$-colored graphs (resp.~connected $D$-colored graphs) is denoted by $\cG_D$ (resp.~$\cG_{D}^{\conn}$) and the number of connected components of a $D$-colored graph $G$ is denoted by $\kappa(G)$ (in particular, $\kappa(G)=1$ whenever $G \in \cG_{D}^{\conn}$). An example of a graph $G\in\cG_3^{\conn}$ is shown on the left of Fig.~\ref{fig:Graph-and-poly-ex}.

\begin{figure}[ht]
    \centering
    \includegraphics[height = 2cm]{pdf/ContractionG_3.pdf}
    \caption{Left: a connected 3-colored graph $G$ in $\cG_3^{\conn}$. Right: the corresponding trace-invariant $\tr_G$. }
    \label{fig:Graph-and-poly-ex}
\end{figure}

The disjoint union of $G_1$ and $G_2$ -- denoted $G_1 \sqcup G_2$ -- is the $D$-colored graph whose connected components are exactly those of $G_1$ and those of $G_2$.

Given a multiset\footnote{That is, the elements of $\gG = \{G_1, \ldots, G_p\}$ are accounted for with multiplicities.} $\gG = \{G_1, \ldots, G_p\}$ where each $G_i \in \cG_D$, and letting $G=G_1\sqcup \cdots \sqcup G_p$, we define  $\cG_{D+1}(G)$ as the set of graphs $\widehat G\in\cG_{D+1}$ obtained by adding edges of color $0$ to G in all possible ways. Note that in this definition, the $G_i$ are not required to be connected (so $\gG$ is equivalent to the data of $G$ together with a partition of its connected components). We also define  $\cG_{D+1}^{\conn}(G)$ as the subset of connected graphs of $\cG_{D+1}(G)$. Moreover, given $\widehat G \in \cG_{D+1}(G)$, we can construct a graph $K(\widehat{G}; \gG)$ whose vertex-set is $\{G_i\}$, and whose edge-set is defined as follows: a pair $(G_i , G_j)$ (with $i\neq j$) is an edge of $K(\widehat{G}; \gG)$ if and only if there exists an edge of color $0$ in $\widehat{G}$ that connects a vertex of $G_i$ to a vertex of $G_j$. With this definition at hand, we will say that $\widehat G$ \emph{connects} $\gG = \{G_1, \ldots, G_p\}$ if and only if $K(\widehat{G}; \gG)$ is a connected graph. Note that if the $\{G_i\}$ are all themselves connected, then  $\cG_{D+1}^{\conn}(\gG)=\cG_{D+1}^{\conn}(G)$, but that this relation is not true in general. 

\ 

Given $G \in \cG_D$ and $1\le i<j\le D$, we denote by $F_{ij}(G)$ the number of connected components of the graph obtained by removing all the edges of $G$ whose colors differ from both $i$ and $j$. Because $G$ is $D$-edge-colored, this graph is $2$-edge-colored, and all its vertices have valency two: it is a collection of cycles called \emph{faces of color} $ij$. If $\widehat G\in\cG_{D+1}$ has edges with colors in $\{0, \ldots, D\}$,  the faces of color $0$ refer to  all of its faces of color $0i$ for $1\le i \le D$, and we let 
\begin{equation}
    F_0(\widehat G) = \sum_{i=1}^D F_{0i}(\widehat G),\hspace{1.2cm} \textrm{ and }  \hspace{1.2cm} F(G) = \sum_{1 \leq i<j \leq D} F_{ij}(G)\;. 
\end{equation} 
One also defines $F(\widehat G) = F(G) + F_0(\widehat G)$, as well as:
\begin{equation}
 \mF(G)= \max_{\widehat G \in \cG_{D+1}(G)}F_0(\widehat G),  \hspace{1.2cm} \textrm{ and }  \hspace{1.2cm}   \mFc(\gG)= \max_{\widehat G \in \cG^{\conn}_{D+1}(\gG)} F_0(\widehat G)  \;.
\end{equation}
The quantity $\mF(G)$ is related to the so-called \textit{degree of compatibility} studied in Refs.~\cite{Collins:2024pip,Carrozza:2026qcf} and defined as:
\begin{equation} \label{eq:Delta_0}
    \Delta(G)\ =  \  \frac{D(D-1)}4 k(G) + \frac 1 2 F(G) - \frac {D-1} 2 \mF(G)\,.  
\end{equation} 
For $\widehat G \in \cG_{D+1}(G)$, we denote by $\Delta_0(\widehat G)$ the quantity defined similarly as Eq.~\eqref{eq:Delta_0} replacing $\mF(G)$ by $F_0(\widehat G)$.

\

We finally introduce the following quantity for $G\in\cG_D$:
\begin{align}
    &\frac{2}{(D-2)!}\omega(G) = (D-1)\kappa(G) + \frac{(D-1)(D-2)}{2}k(G) - F(G)\,,
\end{align}
which is known in the literature on random tensors as the \emph{Gurau degree}.\footnote{In the reference textbook by Gurau \cite{gurau_random_2017}, $\omega$ is referred to as the \emph{degree}, and $\frac{2}{(D-2)!}\omega$ as the \emph{reduced degree}. As those two quantities merely differ by a normalization convention, either one of them is sometimes referred to as the \emph{Gurau degree} in publications by other authors. This is in particular the case in Ref.~\cite{Carrozza:2026qcf}, where $\frac{2}{(D-2)!}\omega$ was also called the $2$\textit{-complete degree} and denoted by $\omega_2$.} In Ref.~\cite[605]{GURAU2011592} and Ref.~\cite[46]{gurau_random_2017}, it was proven that for any $D$-colored graph $G$ and any $\widehat G \in \cG_{D+1}(G)$: $\omega(\widehat G) \ge D \omega(G)$. Interpreted in terms of faces of color $0$, this inequality can be reformulated as the following lemma.

\begin{lem}[Gurau \cite{GURAU2011592}] \label{lem:Razvans-bound-on-F}
    Let $D \geq 2$ and $G\in\cG_D$. For all $\widehat G \in \cG_{D+1}(G)$, we have:
    \begin{equation}
        \label{eq:Razvans-bound-on-F}
        F_0(\widehat G) \le \frac D 2 k(G) + \frac {F(G)} {D-1}  -D\bigl(\kappa(G) - \kappa(\widehat G)\bigr)\,, \qquad \rm{or} \qquad \Delta_0(\widehat G) \geq \frac{D(D-1)}{2}(\kappa(G) - \kappa(\widehat G)) \,.
    \end{equation}
\end{lem}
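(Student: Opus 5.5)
The plan is to take Gurau's inequality $\omega(\widehat G)\ge D\,\omega(G)$ from Refs.~\cite{GURAU2011592,gurau_random_2017} as given, as recalled just before the statement. The lemma then reduces to rewriting both Gurau degrees in terms of face counts. The one point to watch is that $\widehat G$ is $(D+1)$-colored, so its degree has to be computed with $D+1$ in place of $D$ in the definition of $\omega$. Since $k(\widehat G)=k(G)=:k$ and $F(\widehat G)=F(G)+F_0(\widehat G)$, we have
\begin{equation*}
\frac{2}{(D-1)!}\,\omega(\widehat G)= D\,\kappa(\widehat G)+\frac{D(D-1)}{2}k-F(G)-F_0(\widehat G),
\end{equation*}
\begin{equation*}
\frac{2}{(D-2)!}\,\omega(G)=(D-1)\kappa(G)+\frac{(D-1)(D-2)}{2}k-F(G).
\end{equation*}

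Next we multiply the inequality $\omega(\widehat G)\ge D\,\omega(G)$ by $2/(D-2)!$, which is positive for all $D\ge 2$. This turns it into
\begin{equation*}
(D-1)\Bigl[D\kappa(\widehat G)+\tfrac{D(D-1)}{2}k-F(G)-F_0(\widehat G)\Bigr]\ \ge\ D(D-1)\kappa(G)+\tfrac{D(D-1)(D-2)}{2}k-D\,F(G).
\end{equation*}
Isolating $F_0$ gives
\begin{equation*}
(D-1)F_0(\widehat G)\le \tfrac{D(D-1)}{2}k+F(G)-D(D-1)\bigl(\kappa(G)-\kappa(\widehat G)\bigr),
\end{equation*}
using $(D-1)-(D-2)=1$ for the $k$ coefficient and $D-(D-1)=1$ for the $F$ coefficient. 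Dividing by $D-1>0$ yields the first form of Eq.~\eqref{eq:Razvans-bound-on-F}.

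For the second form, we insert this upper bound on $F_0(\widehat G)$ into
\begin{equation*}
\Delta_0(\widehat G)=\tfrac{D(D-1)}{4}k+\tfrac12 F(G)-\tfrac{D-1}{2}F_0(\widehat G).
\end{equation*}
The $k$ terms cancel, since $\tfrac{D-1}{2}\cdot\tfrac D2=\tfrac{D(D-1)}{4}$. The $F(G)$ terms also cancel, since $\tfrac{D-1}{2}\cdot\tfrac1{D-1}=\tfrac12$. What remains is exactly $\tfrac{D(D-1)}{2}\bigl(\kappa(G)-\kappa(\widehat G)\bigr)$, so both forms are equivalent reformulations of the same inequality.

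The only substantive step is Gurau's inequality itself, which we import rather than reprove. If a self-contained argument were required, I would use the jacket decomposition. The Gurau degree of a colored graph equals the sum of the genera of its jackets (ribbon embeddings indexed by cyclic orderings of the colors). Each jacket of the $(D+1)$-colored graph $\widehat G$ would then be compared with a jacket of $G$ obtained by deleting color $0$. Genus does not increase under this deletion, and connectivity is tracked by $\kappa(G)-\kappa(\widehat G)$. The delicate part is the counting showing that each jacket of $G$ is covered the right number of times, which produces the factor $D$. This is carried out in Refs.~\cite{GURAU2011592,gurau_random_2017}.
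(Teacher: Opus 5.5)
Your proposal is correct and takes the same route as the paper, which obtains the lemma purely as a reformulation of Gurau's inequality $\omega(\widehat G)\ge D\,\omega(G)$ in terms of color-$0$ faces. Your algebra correctly carries out that rewriting, including the shift $D\to D+1$ in the normalization of $\omega(\widehat G)$ and the equivalence with the $\Delta_0$ form.
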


\paragraph{Trace-invariants and correlations.} Let $D \geq 2$. We consider a tensor  $S = \paa{S_{i_1 \dots i_D}}\in\mathbb{C}^{N^D}$ with $D$ ordered indices, such that the index $i_c$ in position $c$ -- said to have color $c$ -- ranges from $1$ to $N$, and its conjugate $\bar{S} = \paa{\bar{S}_{i_1 \dots i_D}}$. \textit{Trace-invariants} are a class of homogeneous polynomials in the tensor entries that can be put in one-to-one correspondence with $D$-colored graphs. 

To a given $D$-colored graph $G$, we associate the trace-invariant denoted by $\tr_G(S,\bar{S})$ as follows: consider a copy of $S$ for every white vertex of $G$ and a copy of $\bar S$ for every black vertex. For every edge of color $c$ of $G$, the indices of color $c$ of the corresponding copies of $\bar S$ and $S$ are then set as equal and summed from 1 to $N$. This results in a $\LU$-invariant homogeneous polynomial of degree $k(G)$ in $S$ (resp.~in $\bar S$). An example of graph $G$ together with its corresponding trace-invariant $\tr_G$ is shown in Fig.~\ref{fig:Graph-and-poly-ex}.

A $D$-colored graph $G$, or equivalently a trace-invariant $\tr_G$ can be (non-uniquely) encoded into a list of $D$ permutations $\vec \sigma = (\sigma_1,\ldots,\sigma_D)$ (see Ref.~\cite{BenGeloun:2013lim}). To define $\vec \sigma$, we start by assigning distinct labels from $1$ to $k(G)$ to all white (resp.~black) vertices of $G$. Then, for each label $s \in \paa{1,\dots,k(G)}$ and each color $c \in \paa{1,\dots,D}$, an edge of color $c$ links the white vertex with label $s$ to the black vertex with label $\sigma_c(s)$. This means that the graph $G$, together with its labeled vertices, can be fully encoded by the $D$-tuple of permutations $\vec \sigma \in S_{k(G)}^D$. So, there is a one-to-one correspondence between labeled $D$-colored graphs and $D$-tuple of permutations, that induces a one-to-one correspondence between unlabeled $D$-colored graphs and certain equivalence classes of $D$-tuples of permutations (for more information, see Ref.~\cite[11]{Carrozza:2026qcf} and references therein). We will denote by $S_{k(G)}^D(G)$ the set of $D$-tuples of permutations of $S_{k(G)}$ associated to $G$.

\

Let $T$ be a Gaussian random tensor whose components $T_{i_1 \dots i_D}$ are centered i.i.d. Gaussian complex random variables with variance $1/N^D$. The density of the distribution is then given by $\cZ^{-1} \e{-N^D \norm{T}^2}$, where $\displaystyle\norm{T}^2 =  \sum_{i_1 = 1}^N \cdots \sum_{i_D=1}^N T_{i_1 \ldots i_D} \bar{T}_{i_1 \ldots i_D}$ and, letting $\displaystyle\d T = \prod_{i_1=1}^N\cdots\prod_{i_D=1}^N \d T_{i_1 \dots i_D}$ and $\displaystyle\d \bar{T} = \prod_{i_1=1}^N\cdots\prod_{i_D=1}^N \d \bar{T}_{i_1 \dots i_D}$,
\begin{equation}
    \cal{Z} = \int \e{-N^D \norm{T}^2} \d T \d \bar{T}\,.
\end{equation}
For any function $f$ of the components of $T$  and its conjugate, the average -- or expectation --  of $f(T,\bar{T})$ is defined as:
\begin{equation}
    \mean{f(T,\bar{T})} = \int_{\bb{C}^{N^D}} f(T,\bar{T}) \, \e{-N^D \norm{T}^2}\frac{\d T \d \bar{T}}{\cZ} \,.
\end{equation}

Consider any deterministic tensor $T_0$ of norm  $\norm{T_0} = 1$, the random tensor $T_U = U T_0$, where $U$ is sampled according to the (normalized) Haar measure on the  unitary group $U(N^D)$,  will be  called \emph{Haar random tensor} (the distribution is independent from the particular choice of $T_0$). The average of a function $f(T_U, \bar T_U)$ is then expressed as:
\begin{equation}
    \mean{f(T_U,\bar{T}_U)} = \int_{U(N^D)} f(T_U, \bar{T}_U) \, \d U\,.
\end{equation}

For the complex Gaussian $T$, Wick's theorem applied to a trace-invariant translates into the following graph expansion:
\begin{equation}
\label{eq:exp-and-asympt-moments}
    \langle \tr_G(T,\bar{T}) \rangle = \sum_{\nu \in S_{k(G)}} N^{\sum_{c=1}^D\#(\sigma_c \nu^{-1}) - Dk(G)} = \sum_{\widehat G \in \cG_{D+1}(G)} \Xi(\widehat G) N^{F_0(\widehat G) - Dk(G)} \underset{N \to \infty}{\sim} \mu_G N^{\mF(G) - Dk(G)}\,,
\end{equation}
where, for a permutation $\pi$ of $S_k$, $\#(\pi)$ counts the number of cycles of $\pi$ and $\Xi(\widehat G)$ is a degeneracy factor defined, for $\vec\sigma \in S_{k(G)}^D(G)$, by:
\begin{equation}
    \Xi(\widehat G) = \abs{\paa{\nu \in S_{k(G)} \;\; \rm{s.t.} \;\;\pa{\sigma_1,\dots,\sigma_D,\nu} \in S_{k(G)}^{D+1}(\widehat G)}}  \,.
\end{equation}
Moreover, the factor $\mu_G$ counts the number of extremizers: fixing $\vec\sigma \in S_{k(G)}^D(G)$, we have
\begin{equation} \label{eq:mu_def}
    \mu_G = \abs{\paa{\nu \in S_{k(G)} \, \rm{ s.t. } \,\sum_{c = 1}^D \#(\sigma_c\nu^{-1}) = \mF(G) }}\,.
\end{equation}
Equivalently, one can introduce the set 
\begin{equation}
    \label{eq:MD}
    \MD{}(G) = \paa{\widehat G \in\cG_{D+1}(G) \;\; \rm{s.t.} \;\; F_{0}(\widehat G) = \mF(G)} \,,
\end{equation}
to formulate the combinatorial factor in terms of $\Xi$ as
\begin{equation} \label{eq:mu_vs_chi}
    \mu_G = \sum_{\widehat G \in \MD{}(G)} \Xi(\widehat G) \,,
\end{equation} 
We call \emph{pairing} of the vertices of a graph $G\in\cG_D$ a partition of the vertices of $G$ whose blocks each contain exactly one white vertex and one black vertex. 
For every $\widehat G^\textrm{opt}\in\MD{}(G) $ the edges of color 0 induce a pairing of $G$. We call the set of pairings induced by the graphs $\widehat G^\textrm{opt}\in\MD{}(G) $ the \emph{dominant pairings of $G$}. There is a one-to-one correspondence between this set and  $\MD{}(G)$. 

One can show (see \eg Ref.~\cite{Nechita:2007bpy,Carrozza:2026qcf}) that for any $D$-colored graph $G$ with $k=k(G)$: 
\begin{equation}
    \label{eq:Haar-vs-gaussian}
     \langle \tr_G(T_U,\bar{T}_U) \rangle = f_{k,D,N}\   \langle \tr_G(T,\bar{T}) \rangle\;, \qquad \mathrm{with} \qquad    f_{k,D,N}=\frac{N^{Dk}\pa{N^D - 1}!}{\pa{N^D - 1 + k}!}\ \underset{N \to \infty}{\longrightarrow}\   1\;,
\end{equation}
thus making the study of dominant pairings similar in both cases. In particular: 
\begin{equation}
\label{eq:equiv-Haar-gaussian}
    \langle \tr_G(T_U,\bar{T}_U) \rangle \underset{N \to \infty}{\sim}   \langle \tr_G(T,\bar{T}) \rangle \;.
\end{equation}
\begin{rem}
    In the literature on random tensor models, the variance of the tensor components of the Gaussian random tensor with $D$ indices is usually taken to be $1/N^{D-1}$. We choose the variance to be $1/N^D$ in order to have the asymptotic equivalence of Eq.~\eqref{eq:equiv-Haar-gaussian}.
\end{rem}

The connected correlations -- or cumulants -- of trace-invariants are defined for a distribution $\d\mu(S)$ as 
\begin{equation}
     \left\langle \tr_{G_1}(S,\bar{S}) , \ldots , \tr_{G_p}(S,\bar{S})  \right\rangle_{\rm{conn}} = \frac{\partial}{\partial t_1} \cdots \frac{\partial}{\partial t_p} \log \mean{ \rm{e}^{\sum_{i=1}^p t_i \tr_{G_i}(S,\bar{S}) }} \biggr\rvert_{t_1=\cdots =t_p=0}\;.
\end{equation}
For the complex Gaussian tensor $T$, they can be expressed as follows: for any $\gG = \{G_1 , \ldots, G_p\}$ 
\begin{equation} \label{eq:Exp_gG}
    \left\langle \tr_{G_1}(T,\bar{T}) , \ldots , \tr_{G_p}(T,\bar{T})  \right\rangle_{\rm{conn}} = \sum_{\widehat G \in \cG_{D+1}^{\conn}(\gG)} \Xi(\widehat G) N^{F_0(\widehat G) - Dk(\widehat G)}  \underset{N \to \infty}{\sim} \mu_{\gG}^{\conn} N^{\mFc(\gG) - Dk(\widehat G)} \,,
\end{equation}
where we note that $k(\widehat G) = \sum_{i=1}^p k(G_i)$. The combinatorial constant $\mu_\gG^{\conn}$ and the set $\MD{}^{\conn}(\gG)$ are defined in a similar fashion as $\mu_G$ and $\MD{}(G)$ by replacing $\mF(G)$ by $\mFc(\gG)$ and $\cG_{D+1}(G)$ by $\cG_{D+1}^{\conn}(\gG)$.

The general relations between the connected correlations of trace-invariants for $(T_U,\bar{T}_U)$ and those of $(T,\bar{T})$ are more involved than Eq.~\eqref{eq:equiv-Haar-gaussian}; exploring their implications is deferred to future work.

\paragraph{Large $N$ factorization.}
\begin{defi}
\label{def:facto}
We say that $\{G_1, \dots, G_p\}\in \cG_D$ \emph{factorizes at large} $N$ if for the complex Gaussian tensor $(T, \bar T)$:
\begin{equation}
\label{eq:large-N-facto}
   \langle \tr_G(T,\bar{T}) \rangle  \ \underset{N \to \infty}{\sim}\  \prod_{i=1}^p \ \langle \tr_{G_i} (T,\bar{T}) \rangle \,, 
\end{equation}
where $G = G_1 \sqcup \cdots \sqcup G_p$. If, in addition, the graphs $G_i$ are all connected, we say that $G$ \emph{factorizes over its connected components at large} $N$.    
\end{defi}

Note that, due to Eq.~\eqref{eq:equiv-Haar-gaussian}, $\{G_1, \dots, G_p\}\in \cG_D$ factorizes at large $N$ for the Haar random tensor if and only if it factorizes at large $N$ for the complex Gaussian tensor.

\

Fixing $G_1, \ldots, G_p\in\cG_D$ not necessarily connected and letting $G=G_1\sqcup \cdots \sqcup G_p $, one can write the moment-cumulant formula for the invariant $\tr_{G}$  as:
\begin{equation}
\label{eq:mom-cum-formula}
\begin{split}
    \langle \tr_G(T,\bar{T}) \rangle  &\ =\  \sum_{\pi\in\mathcal{P}(p)} \ \prod_{B\in\pi}  \    \left\langle \bigl\{\tr_{G_i}(T,\bar{T})\bigr\}_{i\in B}  \right\rangle_{\rm{conn}}\\&\ =\  \prod_{i=1}^p\  \langle \tr_{G_i}(T,\bar{T}) \rangle \  + \ \cdots\  +\      \left\langle \tr_{G_1}(T,\bar{T}) , \dots , \tr_{G_p}(T,\bar{T})  \right\rangle_{\rm{conn}}\;, 
\end{split}
\end{equation}
where, for $B = \paa{i_1,\dots,i_{\abs{B}}}$, we use the notation 
\begin{equation} \label{eq:partial_cum}
    \left\langle \bigl\{\tr_{G_i}(T,\bar{T})\bigr\}_{i\in B}  \right\rangle_{\rm{conn}} = \mean{\tr_{G_{i_1}}(T,\bar{T}),\dots,\tr_{G_{i_{\abs{B}}}}(T,\bar{T})}_{\rm{conn}} \underset{N \to \infty}{\sim} \mu_{\{G_i\}_{i \in B}}^{\conn} N^{\mFc(\{G_i\}_{i \in B}) - D\sum_{i \in B} k(G_i)}\,.
\end{equation}
We emphasize that for $\widehat G_B \in \cG_{D+1}^{\conn}(\{G_i\}_{i \in B})$, the label ``$\mathrm{conn}$'' does not refer to the connectedness of $\widehat G_B$ but instead to the connectedness of $K(\widehat G_B ; \{G_i\}_{i \in B})$. On the first line of Eq.~\eqref{eq:mom-cum-formula}, the product is then over the blocks $B$ of the partition $\pi$. On the second line of Eq.~\eqref{eq:mom-cum-formula}, the completely factorized term is obtained for $\pi$ the partition $0_p$ whose blocks each have one element, and the completely connected term for $\pi$ the one-block partition $1_p$. 
Remark that Eq.~\eqref{eq:facto-intro-conn-corr} is the particular case of Eq.~\eqref{eq:mom-cum-formula} for $p=2$. 

From this expansion, it is clear that \emph{the large $N$ factorization Eq.~\eqref{eq:large-N-facto} occurs if and only if the expansion Eq.~\eqref{eq:mom-cum-formula} is dominated by  the completely factorized term} (for $\pi=0_p$), while the other terms (for any other $\pi \neq 0_p$) in the expansion are suppressed in comparison, that is, Eq.~\eqref{eq:large-N-facto} holds if and only if
\begin{equation}
\label{eq:cond-on-faces-for-facto}
    \forall \pi \neq 0_p,\qquad \sum_{B\in \pi} \mFc\bigl(\{G_i\}_{i\in B}\bigr)  < \sum_{i=1}^p \mF(G_i)\;.
\end{equation}
In particular, if $\{G_1, \ldots, G_p\}$ factorizes at large $N$, then necessarily:
\begin{equation}
\label{eq:mF-equal-parts}
    \mF(G) = \sum_{i=1}^D \mF(G_i)\;,
\end{equation}
but this condition is \apriori unlikely to be sufficient, as for some $\pi \neq 0_p$, one might still have $\mF(G)  = \sum_{B\in \pi} \mFc\bigl(\{G_i\}_{i\in B}\bigr)$. And, indeed, this intuition will be confirmed below.

\paragraph{Boundary graphs and factorization conditions from Ref.~\cite{Gurau:2025evo}.}
If $G\in \cG_D$, we let $\cG_{D+1}^\circ(G)$ be the set of graphs    obtained from $G$ by adding a certain number $k'< k(G)$ of edges of color 0 linking some black and white vertices of $G$. Equivalently, any graph in $\cG_{D+1}^\circ(G)$ can be obtained by removing $k(G) - k'$ color-$0$ edges from a graph $\widehat G\in \cG_{D+1}(G)$. If $\mathring{G}\in \cG_{D+1}^\circ(G)$, we call \emph{boundary vertices} the  subset of its vertices with no edge of color 0 attached. For every $c\in\{1, \ldots, D\}$, after removing all the edges that are not of color 0 or $c$ from $\mathring{G}\in \cG_{D+1}^\circ(G)$, all the vertices have valency one or two: any connected component is either a cycle -- called an \emph{internal face} -- or an open path connecting a pair of boundary vertices of $\mathring{G}$ -- called an \emph{external face}. We will denote by $F_0 (\mathring{G})$ the number of internal faces of $\mathring{G}$ (which consistently extends the previous definition of $F_0$).

The boundary graph $\partial \mathring{G}$ of $\mathring{G}$ is built as follows: its vertices are the boundary vertices of $\mathring{G}$, and for every external face of color $0c$ of $\mathring{G}$ whose extremities are the boundary vertices $v, v'$, the boundary graph $\partial \mathring{G}$  has an edge of color $c$ linking $v$ and $v'$. Note that if $\mathring{G}=G$ ($k'=0$), then $\partial  \mathring{G} = G$. An example of boundary graph is given in Fig.~\ref{fig:boundaryGraph}. We can finally introduce an auxiliary graph $K(\mathring{G};\gG)$, defined in a similar way as $K(\widehat{G}; \gG)$ above, such that $\mathring{G}$ will be said to connect $\gG$ whenever $K(\mathring{G};\gG)$ is connected. More broadly, the integer-valued map $|K(\,\cdot\, ; \gG)|$ can now be applied to graphs with empty and non-empty boundaries alike, and provides a quantitative measure of how well such graphs connect $\gG$. 

\begin{figure}[ht]
	\centering
	\includegraphics[width = \textwidth]{pdf/Boundary_graph-1.pdf}
	\caption{From left to right: a $3$-colored graph $G\in \cG_3$. A graph $\mathring{G} \in \cG_{4}^\circ(G)$ with $k' = 3 < 6 = k(G)$ and where the boundary vertices have been highlighted by gray boxes. The three restrictions of $G$ to colors $01$, $02$, and $03$, and where the internal faces have been stressed by the use of gray bubbles. The boundary graph $\partial \mathring{G}$ where we emphasize the boundary vertices. A simplified representation of the boundary graph $\partial\mathring{G}$.}
	\label{fig:boundaryGraph}
\end{figure}

In the paragraph ``A necessary and sufficient condition for factorization.'' of Ref.~\cite{Gurau:2025evo}, conditions can be extracted to imply factorization or non-factorization for a real random tensor. These conditions can be easily adapted to the complex case, as presented in the following lemma.

\begin{lem} \label{lem:conditions_Gurau}
    Let $D \geq 2$ and $H,G_1,\ldots,G_p \in \cG_D$ be $D$-colored graphs. We let $\bar H$ be the $D$-colored graph obtained from $H$ by flipping the colors (black or white) of all its vertices.\footnote{While for $D=2$, $H=\bar H$, this is, \lat{a priori}, not the case for $D>2$, and an explicit counterexample is given for $D=6$ in the last section of Ref.~\cite{Lionni2018}. However, one always has $\tr_{\bar{H}}(\cdot) = \overline{\tr_H (\cdot)}$, \ie $\mF(H) = \mF(\bar H)$.}
    \begin{enumerate}
        \item If $\mF(H) \le \frac D 2 k(H)$, then, $\{H,\bar H \}$ does not factorize at large $N$, \ie 
        \begin{equation}\label{eq:no-facto-Razvan}
            \bigl\langle \tr_H(T, \bar T) \tr_{\bar H}(T, \bar T) \bigr\rangle  \underset{N \to \infty}{\sim}    \bigl\langle \tr_H(T, \bar T) \tr_{\bar H}(T, \bar T) \bigr\rangle_{\mathrm{conn}} \,.
        \end{equation}
        \item If, for any $i \in \paa{1,\dots,p}$ and any boundary graph $\mathring{G_i} \in \cG_{D+1}^\circ(\mathring{G_i})$, we have
        \begin{equation}\label{eq:Razvans-condition-for-facto}
            \mF(\partial \mathring G_i ) > \frac D 2 k(\partial \mathring G_i) \,,
        \end{equation}
        then $\{G_1, \ldots, G_p\}$ factorizes at large $N$.
    \end{enumerate}
\end{lem}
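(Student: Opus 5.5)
The proof has two parts, both adapting the argument of Ref.~\cite{Gurau:2025evo} to the complex setting. Everything rests on the moment-cumulant expansion Eq.~\eqref{eq:mom-cum-formula} and the criterion Eq.~\eqref{eq:cond-on-faces-for-facto}.

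For item 1, I will exhibit a connected pairing of $H\sqcup\bar H$ that beats the factorized term. Let $H$ have white vertices $w_s$ and black vertices $b_s$. In $\bar H$ the same vertices appear with colors swapped, written $\bar w_s$ (black) and $\bar b_s$ (white). I add a color-$0$ edge between each vertex of $H$ and its mirror copy in $\bar H$; these edges respect bipartiteness.

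Each edge of color $c$ in $H$, together with its mirror edge in $\bar H$ and the two color-$0$ edges, forms a face of color $0c$ of length $4$. This gives $F_0 = D\,k(H)$, and the resulting graph connects $\{H,\bar H\}$. Hence $\mFc(\{H,\bar H\}) \ge D k(H)$.

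The factorized term, by contrast, has exponent $2\mF(H)$, using $\mF(\bar H)=\mF(H)$. Under the hypothesis $\mF(H)\le \frac D2 k(H)$ this is $\le Dk(H)$. Equality must still be handled: in that case the connected term carries at least the same power of $N$, and I then need to compare the combinatorial constants $\mu$. I will argue that the connected term dominates strictly, either by a degeneracy count or because the covariance grows at least like $\langle \tr_H\rangle\langle\tr_{\bar H}\rangle$. Whether strict domination holds at equality is the delicate point. I would first check whether the intended statement is really a strict inequality $\mF(H)<\frac D2 k(H)$ as in Ref.~\cite{Gurau:2025evo}, and otherwise compare the constants $\mu_H^2$ and $\mu^{\conn}_{\{H,\bar H\}}$.

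For item 2, I will verify Eq.~\eqref{eq:cond-on-faces-for-facto} for every partition $\pi\ne 0_p$. It suffices to show that any $\widehat G$ connecting a block $B$ with $|B|\ge2$ satisfies $F_0(\widehat G) < \sum_{i\in B}\mF(G_i)$. I proceed as follows.

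\begin{enumerate}
\item Split the color-$0$ edges of $\widehat G$ into those internal to some $G_i$ and those linking different $G_i$. Deleting the linking edges gives, for each $i$, a graph $\mathring G_i\in\cG^\circ_{D+1}(G_i)$ with boundary $\partial\mathring G_i$.
\item Regroup the faces. Internal faces of $\mathring G_i$ are counted in $F_0(\mathring G_i)$. The remaining faces of $\widehat G$ are exactly the faces of a pairing of the disjoint union $\bigsqcup_i\partial\mathring G_i$ formed by the linking edges. This gives
\begin{equation*}
F_0(\widehat G)=\sum_i F_0(\mathring G_i)+F_0\Bigl(\widehat{\textstyle\bigsqcup_i \partial\mathring G_i}\Bigr).
\end{equation*}
\item Bound each piece. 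For any pairing $\widehat{\partial\mathring G_i}$ of $\partial\mathring G_i$, we have $F_0(\mathring G_i)+F_0(\widehat{\partial\mathring G_i})\le\mF(G_i)$. Moreover, the gluing of the boundaries that produces $\widehat G$ is a connected gluing across distinct boundary graphs. Lemma~\ref{lem:Razvans-bound-on-F}, together with the fact that gluing reduces the number of connected components, shows this gluing loses at least $D$ faces per merge relative to the bound $\frac D2 k+\frac{F}{D-1}$.
\item Compare. The hypothesis $\mF(\partial\mathring G_i)>\frac D2 k(\partial\mathring G_i)$ then makes separate optimal pairings of each boundary strictly better than the connecting gluing.
\end{enumerate}

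The main obstacle is step 3. The quantity $F(\partial\mathring G_i)$ enters Lemma~\ref{lem:Razvans-bound-on-F}, and what the comparison requires is the cruder bound $\frac D2 k$ for the connecting gluing. I would try to obtain it by applying Item 1's mirror-doubling trick in reverse: treat the connecting gluing as a pairing bounded via Cauchy–Schwarz, $|\langle \tr_A\tr_B\rangle_{\rm conn}|^2\lesssim\langle\tr_A\tr_{\bar A}\rangle\langle\tr_B\tr_{\bar B}\rangle$, which caps connected boundary gluings by $\frac D2(k(A)+k(B))$.
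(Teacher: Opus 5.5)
\textbf{Item 1.} This is the paper's argument: the mirror pairing gives $\mFc(\{H,\bar H\})\ge Dk(H)\ge 2\mF(H)$, which violates Eq.~\eqref{eq:cond-on-faces-for-facto}. At equality, non-factorization needs no comparison of constants. Every term of the expansion is nonnegative, and the connected part carries at least the same power of $N$ with a positive coefficient. Hence the ratio of the moment to $\langle\tr_H\rangle\langle\tr_{\bar H}\rangle$ tends to $1+\mu^{\conn}_{\{H,\bar H\}}/\mu_H^2>1$, or to $+\infty$. Do not try to prove strict dominance of the connected term in that case, though. If $\mF(H)=\frac D2k(H)$ and $\mFc(\{H,\bar H\})=Dk(H)$, the ``i.e.'' equivalence with the connected part is simply false. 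The paper acknowledges this in Lem.~\ref{lem:non-facto-MST}, and its proof of item 1 only establishes non-factorization.

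\textbf{Item 2: the missing estimate.} Your steps 1--2, the first claim of step 3 and the comparison in step 4 coincide with the paper's proof. However, the estimate everything hinges on,
\[
F_0\Bigl(\widehat{\bigsqcup_{i\in B}\partial\mathring{G}_i}\Bigr)\le\frac D2\sum_{i\in B}k(\partial\mathring{G}_i),
\]
is never obtained, and neither of your routes can give it. Lemma~\ref{lem:Razvans-bound-on-F} leaves a term $\sum_i F(\partial\mathring{G}_i)/(D-1)$ that the hypothesis does not control.

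The Cauchy--Schwarz route fails for a structural reason. The quantity $\langle\tr_A\tr_{\bar A}\rangle$ includes pairings with color-$0$ edges inside $A$, such as $\widehat A\sqcup\widehat{\bar A}$ with $2\mF(A)$ faces. Under the very hypothesis $\mF(A)>\frac D2k(A)$, this already exceeds $Dk(A)$. So the cap you obtain is at least $\mF(A)+\mF(B)$, and it can never yield the strict inequality. Passing to covariances does not help: $\mFc(\{A,\bar A\})$ generally exceeds $Dk(A)$. For instance, for $D\ge3$ and $A$ the $4$-vertex melonic graph, $\mF(A)=2D-1$, and one flip between $\widehat A$ and $\widehat{\bar A}$ gives $3D-2>2D$ faces. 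For $|B|\ge 3$ you would also need a multilinear version.

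\textbf{What is missing.} The bound follows from an elementary count that uses exactly the constraint from your step 1: linking edges always join two \emph{different} $G_i$.
\begin{enumerate}
\item An edge of color $c$ of $\partial\mathring{G}_i$ joins two vertices of the same $\partial\mathring{G}_i$. So no face of color $0c$ of the glued boundaries can consist of one boundary edge and one linking edge; every such face contains at least two linking edges.
\item There are $\sum_{i\in B}k(\partial\mathring{G}_i)$ linking edges, and each lies in exactly one face of color $0c$ for every $c$. Hence $F_{0c}\le\frac12\sum_{i\in B}k(\partial\mathring{G}_i)$, which gives the displayed bound after summing over $c$.
\item All these boundaries are non-empty, since $\widehat G$ connects $B$. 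The hypothesis then gives $\frac D2\sum_{i\in B}k(\partial\mathring{G}_i)<\sum_{i\in B}\mF(\partial\mathring{G}_i)$.
\end{enumerate}
The first claim of your step 3 then closes the argument exactly as in the paper.
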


\begin{proof}
    The proof is inspired by Ref.~\cite{Gurau:2025evo}. Regarding point 1, we know that: if $\mathbf{H} = \{H, \bar H\}$ factorizes at large $N$, then, $\mF(H) + \mF(\bar H) > \mFc(\mathbf{H})$. Among the pairings that connect $H$ and $\bar H$, the one that connects any vertex $v$ of $H$ to its image $\bar v$ in $\bar H$ (see the right-hand side of Fig.~\ref{fig:MST_incomp}), and denoted by $\widehat{H\sqcup \bar H}$, satisfies:
    \begin{equation}
        F_0(\widehat{H\sqcup \bar H}) = Dk(H) \,.
    \end{equation}
    Therefore, $\mFc(\mathbf{H}) \geq Dk(H)$. Since $\mF(H) = \mF(\bar H)$ we have that: if $\mathbf{H} = \{H, \bar H\}$ factorizes, then, $2 \mF(H) > Dk(H)$. The contrapositive of the latter proves the first statement of Lem.~\ref{lem:conditions_Gurau}.
    
    We now address the second statement of Lem.~\ref{lem:conditions_Gurau}. Assume that for any $i \in \paa{1,\dots,p}$, and any boundary graph $\mathring{G_i} \in \cG_{D+1}^\circ(\mathring{G_i})$, the following holds: 
    \begin{equation} \label{eq:assumpt_proof}
        \mF(\partial \mathring G_i ) > \frac D 2 k(\partial \mathring G_i) \,.
    \end{equation}
    To establish the large $N$ factorization, it suffices to verify that Eq.~\eqref{eq:cond-on-faces-for-facto} is satisfied. \\
    Let $G = G_1 \sqcup \cdots \sqcup G_p$, $\gG = \{G_1,\ldots,G_p\}$ and consider a non-trivial partition $\pi \neq 0_p$ of $\{1, \ldots, p\}$. For any $B \in \pi$, we can find $\widehat{G}_B \in \cG_{D+1}^{\conn}(\{G_i\}_{i \in B})$ such that $F_0(\widehat G_B) = \mFc(\{G_i\}_{i \in B})$ (see Eq.~\eqref{eq:partial_cum}). We then define $\displaystyle\widehat G = \bigsqcup_{B \in \pi} \widehat G_B \in \cG_{D+1}(\gG)$. \\
    For each $i\in \paa{1,\ldots,p}$, let $\mathring{G_i} \in \cG_{D+1}^\circ(G_i)$ be the graph obtained from $\widehat G$ by retaining the graph $G_i$ and all the color-$0$ edges of $\widehat{G}$ whose endpoints are both vertices in $G_i$. Let us fix a block $B \in \pi$. If $B$ is a trivial block (\ie $B=\{i_0\}$ for some $i_0\in \{1, \ldots, p\}$), one has the obvious bound
    \begin{equation}\label{eq:trivial_bound}
       F_0 (\widehat{G}_B) \leq  \mF(G_{i_0}) = \sum_{i \in B} \mF(G_i)\,.
    \end{equation}
    Assume now that $B$ is a non-trivial block, \ie $|B|\geq 2$. One can decompose the set of faces of $\widehat{G}_B$ as
    \begin{equation}
        F_0(\widehat{G}_B) = \sum_{i \in B} F_0(\mathring{G}_i) +  F_0\pa{\widehat{\bigsqcup_{i \in B} \partial \mathring{G_i} }}\,,
    \end{equation}
    where $\displaystyle\widehat{\bigsqcup_{i \in B} \partial \mathring{G_i} }$ is an element of $\cG_{D+1}^{\conn}(\{\mathring{\partial G_i}\}_{i \in B})$. In this equation, the term $F_0(\mathring{G}_i)$ accounts for all the faces that are internal to some $\mathring{G}_i$, while the last term counts all the faces that intersect at least two elements of $\{G_i\}_{i \in B}$. By construction, for any $c \in \{1, \ldots, D\}$, any face of color $0c$ in $\displaystyle\widehat{\bigsqcup_{i \in B} \partial \mathring{G_i}}$ must contain at least two edges of color $0$ (otherwise this face would be internal to some $\mathring{G}_i$). But the total number of color-$0$ edges in $\displaystyle\widehat{\bigsqcup_{i \in B} \partial \mathring{G_i}}$ is $\displaystyle\sum_{i \in B} k(\mathring{\partial G_i})$; as a result, one necessarily has
    \begin{equation}
         F_0\pa{\widehat{\bigsqcup_{i \in B} \partial \mathring{G_i} }} \leq \frac{D}{2} \sum_{i \in B}  k(\mathring{\partial G_i})< \sum_{i \in B} \mF(\mathring{\partial G_i}) \,,
        \end{equation}
    where we have used Eq.~\eqref{eq:assumpt_proof} to obtain the second inequality. Thus,
    \begin{equation}
        F_0(\widehat{G}_B) < \sum_{i \in B} \left( F_0(\mathring{G_i}) + \mF (\mathring{\partial G_i})\right).
    \end{equation}
    Given that, for any $i \in B$, we can  clearly find a graph $\widehat{G_i}\in \cG_{D+1}(G_i)$ such that $F_0(\widehat{G_i})= F_0(\mathring{G_i}) + \mF (\mathring{\partial G_i})$, it follows that
    \begin{equation}\label{eq:less_trivial_bound}
        F_0(\widehat{G}_B) < \sum_{i \in B} \mF(G_i)\,.
    \end{equation}
    
    Finally, combining the bounds from Eqs.~\eqref{eq:trivial_bound} and \eqref{eq:less_trivial_bound}, and noting that $\pi$ contains at least one non-trivial block, we obtain:
    \begin{equation}
         \sum_{B\in \pi} \mFc (\{G_i\})_{i \in B} = F_0(\widehat G)  = \sum_{B \in \pi} F_0(\widehat{G}_B) < \sum_{B \in \pi} \sum_{i \in B} \mF(G_i)  = \sum_{i=1}^p \mF(G_i) \,.
    \end{equation}
    This proves that Eq.~\eqref{eq:cond-on-faces-for-facto} holds for any non-trivial partition $\pi$, and therefore, that $\{G_1, \ldots, G_p\}$ factorizes at large $N$. 
\end{proof}

\paragraph{Tree-like families and the two-cut property.}
 
The following concepts introduced in Refs.~\cite{Lionni2018, Bonzom:2018btd} will be important for the second and third main theorems. 
 
\begin{defi} \label{def:DEF}
Consider $\widehat G\in\cG_{D+1}^{\conn}(\gG)$,  $\gG=\{G_i\}_{i=1, \ldots, p}$, $G_i\in\cG_D$. We define the following notions:
\begin{enumerate}
    \item A two-cut in $\widehat G$ is a pair of edges whose removal increases the number of connected components of $K(\widehat{G}; \gG)$. In particular, when the graphs $G_i$ are all connected, a \emph{two-cut} in $\widehat{G}$ is more simply a pair of edges whose removal increases the number of connected components of $\widehat{G}$.
    
     \item $G_i$ satisfies the  maximal two-cut property in $\widehat G$ if there exists a dominant pairing $\pi_\textrm{opt}(G_i)$ such that for each pair of vertices in $\pi_\textrm{opt}(G_i)$, either the edge of color 0 links the two vertices in the pair, or the two edges of color 0 attached to these vertices in $\widehat G$ form a two-cut. 
     
     \item We say that $\widehat G$ is tree-like on $\gG$ if $\widehat G\in\cG_{D+1}^{\conn}(\gG)$ and every $G_i$ satisfies the maximal two-cut property in $\widehat G$.
     
     \item\label{it:HAVEtree} $\gG$ is said to have tree-like dominant pairings if the graphs in $\MD{}^{\conn}(\gG)$ (that is, the graphs that dominate the expansion Eq.~\eqref{eq:Exp_gG} of $\langle \tr_{G_1}(T, \bar T), \ldots,  \tr_{G_p}(T, \bar T) \rangle_{\mathrm{conn}}$) contain the graphs $\widehat G$ that are tree-like  on $\gG$.\footnote{Importantly, if the graphs $G_i$ are not connected,  this does not imply that $\MD{}^{\conn}(\gG)$ contains the $\widehat G$ that are tree-like on the connected components of $G_i$. }
    
    \item $\gG$ is said to have only tree-like dominant pairings if $\MD{}^{\conn}(\gG)$ is exactly the set of $\widehat G$ that are tree-like  on $\gG$.
\end{enumerate}
\end{defi}

Given two edges $e,e'$ of color 0 in a graph $\widehat G$, we call \emph{flip of $e$ and $e'$} the graph operation which consists in exchanging these two edges of color 0 in the only possible way which preserves bipartiteness. 

Equivalently, any $\widehat G$ that is tree-like on $\gG$ can be obtained from a collection of dominant graphs $\widehat{G_i}^\textrm{opt} \in \MD{}(G_i)$ ($1\leq i \leq p$), that are subsequently connected by a tree-like structure of flips. In more detail, choosing \eg $H_1\eqdef\widehat{G_1}^\textrm{opt}$ as the root of the tree-like structure, one can recursively build it out as follows: 1) we initialize the tree-like structure as ${\widehat G}^{(1)}\eqdef H_1$; 2) recursively, for any $k \in \{2, \ldots , p\}$ one selects an element $H_k$ of $\{\widehat{G_i}^\textrm{opt}\}_{1\leq i \leq p}\setminus \{H_1, \ldots , H_{k-1}\}$, an edge $e_{k}$ of ${\widehat G}^{(k-1)}$ as well as an edge $\tilde{e}_k$ of $H_k$, and define $\widehat{G}^{(k)}$ as the graph obtained by flipping the pair $(e_{k},\tilde{e}_{k})$ in $H_k \sqcup \widehat G^{(k-1)}$; 3) finally, we set $\widehat{G}\eqdef \widehat{G}^{(p)}$. Any tree-like $\widehat{G}$ on $\gG$ can be obtained in this way for some collection $\{\widehat{G_i}^\textrm{opt}\}_{1\leq i \leq p}$, and this is the reason why we use the qualifier \emph{tree-like}.

\begin{lem}
 \label{lem:faces-when-flip}
 Consider $\widehat G_1, \widehat G_2\in \cG_{D+1}$ and choose a color-$0$ edge in each graph: $e_1$ in $\widehat G_1$ and $e_2$ in  $\widehat G_2$. Flipping $e_1$ and $e_2$ in $\widehat{G}_1 \sqcup \widehat{G}_2$, one obtains a graph $\widehat G$ that satisfies:
 \begin{equation}
     F_0(\widehat G) = F_0(\widehat G_1) + F_0(\widehat G_2) - D\;. 
 \end{equation}
\end{lem}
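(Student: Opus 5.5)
The plan is a local count: a flip only modifies the faces of color $0c$ that pass through $e_1$ or $e_2$, so it suffices to track, for each color $c\in\{1,\ldots,D\}$, what happens to these faces. Fix $c$. In $\widehat G_1$, the edge $e_1$ belongs to exactly one face of color $0c$. Call it $f_1$; it is a cycle alternating between edges of color $0$ and color $c$. Likewise, $e_2$ belongs to exactly one face $f_2$ of color $0c$ in $\widehat G_2$. Since $\widehat G_1$ and $\widehat G_2$ are disjoint in $\widehat G_1\sqcup\widehat G_2$, the faces $f_1$ and $f_2$ are distinct cycles.

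Write $e_1=(w_1,b_1)$ and $e_2=(w_2,b_2)$ with $w_i$ white and $b_i$ black. The flip replaces these two edges by $(w_1,b_2)$ and $(w_2,b_1)$, which is the only bipartiteness-preserving exchange. Removing $e_1$ from $f_1$ leaves an open path from $w_1$ to $b_1$, and removing $e_2$ from $f_2$ leaves an open path from $w_2$ to $b_2$. Adding the two new color-$0$ edges joins $b_1$ to $w_2$ and $b_2$ to $w_1$. The result is therefore a single cycle: it runs along the first path from $w_1$ to $b_1$, uses the new edge to $w_2$, runs along the second path to $b_2$, and closes back to $w_1$. This cycle still alternates between colors $0$ and $c$, since every vertex keeps exactly one edge of each color.

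Every other face of color $0c$ contains neither $e_1$ nor $e_2$, so it is unchanged. Hence, for each $c$, exactly two faces of color $0c$ merge into one, and $F_{0c}(\widehat G)=F_{0c}(\widehat G_1)+F_{0c}(\widehat G_2)-1$. Summing over $c=1,\ldots,D$ gives $F_0(\widehat G)=F_0(\widehat G_1)+F_0(\widehat G_2)-D$.

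The only point that needs care is that $f_1\neq f_2$ for every color. This is where the disjointness of $\widehat G_1$ and $\widehat G_2$ enters. If instead both edges lay on the same face, the same local surgery could split that face into two rather than merging, and the count would change.
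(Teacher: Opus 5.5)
Your proof is correct and is essentially the paper's argument: for each color $c$, you count the $0c$-faces through the flipped edges locally, using that $\widehat G_1$ and $\widehat G_2$ are disjoint. The only difference is direction. The paper argues in reverse: the new edges form a two-cut of $\widehat G$, so they share a face of each color $0i$, and flipping them back splits each such face in two. You instead exhibit the merged cycle directly, which makes that ``same face'' step explicit.
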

\proof We let $e, e'$ be the two edges of color 0 of $\widehat G$ resulting from the flip of $e_1, e_2$. Flipping $e, e'$ in $\widehat G$, one recovers $e_1, e_2$ and  $\widehat G_1 \sqcup\widehat G_2$. The only faces  affected in this last operation are the ones that contain the edges $e$ or $e'$. Since $e$ and $e'$ form a two-cut in $\widehat G$, they belong to the same face of colors $0i$ of $\widehat G$ for every $1\le i \le D$, while after the flip, since $e_1$ and $e_2$ are in two different connected components, they each belong to $D$ face of colors $0i$. Therefore,  $F_0(\widehat G) = F_0(\widehat G_1) + F_0(\widehat G_2) - D$.  \qed

\ 
 
\begin{lem}[Lionni \cite{Lionni2018}, Sec.~4.2.6]
\label{lem:faces-of-treelike}
 Consider  $\gG=\{G_i\}_{i=1, \ldots, p}$ with $G_i\in\cG_D$, and $\widehat G\in\cG_{D+1}^{\conn}(\gG)$.  If $\widehat G$ is tree-like on $\gG$, then:
 \begin{equation}
 \label{eq:faces-of-tree-like}
     F_0(\widehat G) = D + \sum_{i=1}^p \bigl( \mF(G_i) - D\bigr)\;.
 \end{equation}
 In particular, all such $\widehat G$ have the same $F_0$. 
\end{lem}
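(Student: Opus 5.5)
The plan is an induction on $p$ that uses the recursive description of tree-like graphs given just before the statement, with Lem.~\ref{lem:faces-when-flip} providing the face count at each step. The first task is to show that any $\widehat G$ tree-like on $\gG$ can be taken apart into dominant graphs $\widehat{G_i}^{\textrm{opt}}\in\MD{}(G_i)$, glued back together by $p-1$ flips. This is done by \emph{undoing} two-cuts. For each $i$, the maximal two-cut property supplies a dominant pairing $\pi_{\textrm{opt}}(G_i)$. For every pair $(v,w)\in\pi_{\textrm{opt}}(G_i)$ not already joined by a color-$0$ edge, the two color-$0$ edges at $v$ and $w$ form a two-cut. Flipping them so that $v$ and $w$ become joined disconnects $K$ into two pieces.

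The argument proceeds as follows.
\begin{enumerate}
\item Apply such a flip to one two-cut edge pair. Since these edges form a two-cut, $\widehat G$ splits as a flip of two graphs $\widehat G'$ and $\widehat G''$, each connecting a sub-collection of $\gG$. The two sub-collections partition $\gG$, because $G_i$ lies on one side of $K$.
\item Lem.~\ref{lem:faces-when-flip} then gives
\[
F_0(\widehat G)=F_0(\widehat G')+F_0(\widehat G'')-D .
\]
\item Check that $\widehat G'$ and $\widehat G''$ are again tree-like on their sub-collections, with the same dominant pairings. The new edge joining $v$ to $w$ is internal to $G_i$, so it respects $\pi_{\textrm{opt}}(G_i)$. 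The other pairs of $\pi_{\textrm{opt}}(G_j)$ either already had their edge, or their two color-$0$ edges still form a two-cut in the relevant piece.
\item Induct on $p$. In the base case $p=1$, the graph is connected on $\{G_1\}$ and every pair is either joined or forms a two-cut. But a two-cut within $K$ is impossible when $K$ has a single vertex, so every pair of $\pi_{\textrm{opt}}(G_1)$ is joined by its own edge. Hence $\widehat G\in\MD{}(G_1)$ and $F_0=\mF(G_1)$.
\end{enumerate}

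The induction then gives
\[
F_0(\widehat G)=\sum_i \mF(G_i)-(p-1)D = D+\sum_i\bigl(\mF(G_i)-D\bigr).
\]

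The main obstacle is Step 3, namely that the maximal two-cut property survives the split. We also need the edges at $v,w$ to really separate pieces lying in different components of $K$, so that Lem.~\ref{lem:faces-when-flip} applies to a genuine disjoint union. If the graphs $G_i$ are not connected, a two-cut in the sense of $K$ might not disconnect $\widehat G$ itself. Lem.~\ref{lem:faces-when-flip}, however, only needs the two edges to lie in distinct parts that become separated. Its proof uses that the edges $e,e'$ share a face of each color $0i$, and this holds whenever removing them disconnects the relevant union, since each face through $e$ must return through $e'$. So the first thing to verify carefully is that face argument in the $K$-sense.

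Alternatively, one can run the induction forwards using the recursive construction stated before the lemma: by construction $F_0(\widehat G^{(k)})=F_0(\widehat G^{(k-1)})+\mF(H_k)-D$. The recursion then gives the formula directly, as the paper asserts that every tree-like graph arises this way.
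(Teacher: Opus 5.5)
Your proposal is correct and follows essentially the same route as the paper. It is an induction on $p$: the base case forces $\widehat G\in\MD{}(G_1)$, and the inductive step flips a two-cut pair of color-$0$ edges at a dominant pair, gaining exactly $D$ faces by Lem.~\ref{lem:faces-when-flip}, then applies the induction hypothesis to the two tree-like pieces. The paper simply asserts your Step~3 (that the pieces remain tree-like), so your explicit attention to it, including the case where a flipped edge was attached to another dominant pair, is if anything slightly more careful.
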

\proof This is easily seen by induction: for $p=1$, $\widehat G=\widehat{G_1}$ so that  $F_0(\widehat G)= F_0(\widehat {G_1})$. Since $G_1$ satisfies the maximal two-cut property in  $\widehat{G_1}$:    $\widehat{G_1}\in \MD{}(G_1)$, so that   $F_0(\widehat {G_1})= \mF(G_1)$, so that Eq.~\eqref{eq:faces-of-tree-like} holds. Assume now that $p\geq 2$, that Eq.~\eqref{eq:faces-of-tree-like} holds for any $p'<p$, and consider $\widehat G\in\cG_{D+1}^{\conn}(\gG)$ tree-like on $\gG$. There exists a pair of edges of color 0 in $\widehat{G}$ attached to the vertices of a dominant pairing and which form a two-cut. Flipping these two edges, one obtains two smaller graphs which are both tree-like  on the $G_i$ that they contain. We let $I_1, I_2$ be the corresponding subsets of $\{1, \ldots, p\}$.  In this process, from Lem.~\ref{lem:faces-when-flip}, $F_0$ increases by  exactly $D$. By induction:
\begin{equation}
    F_0(\widehat G) = \Bigl[D + \sum_{i\in I_1} \bigl( \mF(G_i) - D\bigr) \Bigr] +   \Bigl[D + \sum_{i\in I_2} \bigl( \mF(G_i) - D\bigr) \Bigr] - D\;,
\end{equation}
which gives the desired result.   \qed

\

\begin{lem}[Lionni \cite{Lionni2018}, Sec.~4.2.6]
\label{lem:tree-dom-iff-faces}
With the notations above, $\gG$ has tree-like dominant pairings if and only if
 \begin{equation}
     \mFc(\gG) = D + \sum_{i=1}^p \bigl( \mF(G_i) - D\bigr). 
 \end{equation}
\end{lem}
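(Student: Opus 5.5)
Both implications rest on Lem.~\ref{lem:faces-of-treelike}, which says that every graph $\widehat G\in\cG_{D+1}^{\conn}(\gG)$ that is tree-like on $\gG$ has the same value
\begin{equation*}
F_0(\widehat G)=D+\sum_{i=1}^p\bigl(\mF(G_i)-D\bigr).
\end{equation*}
Call this value $F^{\rm tree}$. The whole statement then reduces to comparing $\mFc(\gG)$ with $F^{\rm tree}$, together with the fact that tree-like graphs exist.

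\emph{Existence.} For each $i$ pick some $\widehat{G_i}^{\rm opt}\in\MD{}(G_i)$. Then connect these graphs by the recursive flip construction described after Def.~\ref{def:DEF}, for instance along a path: at step $k$, flip one color-$0$ edge of $\widehat G^{(k-1)}$ with one color-$0$ edge of $\widehat{G_k}^{\rm opt}$.

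The resulting $\widehat G$ connects $\gG$, since each flip merges a new $G_k$ into the connected component. We still have to check that every $G_i$ satisfies the maximal two-cut property, using the dominant pairing of $\widehat{G_i}^{\rm opt}$. Pairs whose color-$0$ edge was never flipped remain linked directly. For a pair whose edge was flipped, the two color-$0$ edges now attached to it form a two-cut: removing them separates the part of $K$ hanging on either side of that flip. This is the step needing care, since later flips may reuse an edge produced by an earlier flip.

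The cleanest fix is to choose, at each step, an edge of $\widehat G^{(k-1)}$ that is still an original dominant edge of some $\widehat{G_j}^{\rm opt}$. Such an edge always exists, because each $G_j$ with $k(G_j)\geq 2$ has unused edges. Alternatively, one can argue directly that the flip structure is a tree, so every flip is a bridge of $K(\widehat G;\gG)$. In either case the set of tree-like graphs is nonempty.

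\emph{($\Rightarrow$).} Suppose $\gG$ has tree-like dominant pairings. Then the nonempty set of tree-like graphs is contained in $\MD{}^{\conn}(\gG)$. Taking any tree-like $\widehat G$, we get $\mFc(\gG)=F_0(\widehat G)=F^{\rm tree}$ by Lem.~\ref{lem:faces-of-treelike}.

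\emph{($\Leftarrow$).} Suppose $\mFc(\gG)=F^{\rm tree}$. Every tree-like $\widehat G$ lies in $\cG_{D+1}^{\conn}(\gG)$ and, by Lem.~\ref{lem:faces-of-treelike}, has $F_0(\widehat G)=F^{\rm tree}=\mFc(\gG)$. Hence every tree-like $\widehat G$ belongs to $\MD{}^{\conn}(\gG)$, which is exactly item~\ref{it:HAVEtree} of Def.~\ref{def:DEF}.

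The main obstacle is therefore not the equivalence itself but the existence argument, specifically verifying the two-cut property for all flipped pairs. I expect to handle it by the edge-choice trick above, combined with an induction on $p$ mirroring the proof of Lem.~\ref{lem:faces-of-treelike}.
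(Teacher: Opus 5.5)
The two implications are exactly the paper's proof. Both directions just compare $\mFc(\gG)$ with the common value $D+\sum_i(\mF(G_i)-D)$ that Lem.~\ref{lem:faces-of-treelike} assigns to every tree-like graph. The paper never proves that a tree-like graph exists; it takes this for granted from the recursive flip description after Def.~\ref{def:DEF}. You are right that ($\Rightarrow$) needs existence, but your argument for it has flaws.

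\textbf{The edge-choice trick can fail.} Each step consumes one original dominant edge of $\widehat G^{(k-1)}$ and one of $\widehat{G_k}^{\textrm{opt}}$. The number of surviving original edges therefore changes by $k(G_k)-2$ at each step. With $p\geq 3$ copies of the $2$-vertex graph, no original edge is left after the second flip.

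\textbf{The fallback is false as stated.} The claim that every flip is a bridge of $K(\widehat G;\gG)$ does not hold. Gluing three $2$-vertex graphs cyclically gives a tree-like graph whose $K$ is a triangle, which has no bridges.

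\textbf{A simple construction that works.} Your worry about reusing created edges is unfounded, and the simplest construction reuses them. For each $i$, choose one dominant pair $(w_i,b_i)$ of $\widehat{G_i}^{\textrm{opt}}$. Keep all other color-$0$ edges, and replace each edge $(w_i,b_i)$ by $(w_i,b_{i+1})$, with indices mod $p$. This is the path of flips that always reuses the edge just created.
\begin{itemize}
\item $K(\widehat G;\gG)$ is then a cycle (a single edge if $p=2$), so $\widehat G$ connects $\gG$.
\item Deleting the two color-$0$ edges $(w_{i-1},b_i)$ and $(w_i,b_{i+1})$ attached to the pair $(w_i,b_i)$ isolates $G_i$ in $K$, because all other color-$0$ edges at $G_i$ are internal.
\item Hence every $G_i$ has the maximal two-cut property.
\end{itemize}

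More generally, any choice of edges in the recursive construction gives a tree-like graph. The two color-$0$ edges attached to a flipped dominant pair always disconnect $K$, even when individual edges of $K$ are not bridges.
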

\proof All the graphs  $\widehat G \in \MD{}^{\conn}(\gG)$  have $F_0(\widehat G) =  \mFc(\gG)$. If $\gG$ has tree-like dominant pairings, then $\MD{}^{\conn}(\gG)$ contains the $\widehat G$ that are tree-like  on $\gG$, which satisfy Eq.~\eqref{eq:faces-of-tree-like}. Reciprocally, if $  \mFc(\gG) = D + \sum_{i=1}^p \bigl( \mF(G_i) - D\bigr)$, then all the $\widehat G$ that are tree-like  on $\gG$ have $F_0(\widehat G) =  \mFc(\gG)$, so they all belong to $\MD{}^{\conn}(\gG)$. \qed 

\

We will need the lemmas regarding tree-like graphs. 

\begin{lem}
\label{lem:tree-like-subsets}
With the notations above, if $\gG$ has tree-like dominant pairings (resp.~only tree-like dominant pairings), then so do its subsets $\{G_i\}_{i\in I}$, where $I\subset \{1, \ldots, p\}$. 
\end{lem}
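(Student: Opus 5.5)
We use the characterization of Lem.~\ref{lem:tree-dom-iff-faces}. Write $\gG_I=\{G_i\}_{i\in I}$ and $J=\{1,\ldots,p\}\setminus I$. Assume $I$ and $J$ are non-empty; otherwise there is nothing to prove.

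\textbf{Tree-like dominant pairings.} By Lem.~\ref{lem:faces-of-treelike}, any tree-like graph on $\gG_I$ gives
\[
\mFc(\gG_I)\ \ge\ D+\sum_{i\in I}\bigl(\mF(G_i)-D\bigr).
\]
For the reverse inequality, argue by contradiction. Suppose some $\widehat G_I\in\cG^{\conn}_{D+1}(\gG_I)$ has $F_0(\widehat G_I)$ strictly larger than this value. For each $j\in J$, pick $\widehat{G_j}^{\rm opt}\in\MD{}(G_j)$. Attach these graphs one at a time to $\widehat G_I$ by flipping a color-$0$ edge of the current graph with a color-$0$ edge of $\widehat{G_j}^{\rm opt}$. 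After each flip, $K(\,\cdot\,;\gG)$ stays connected, because the new component is linked to the existing connected structure. By Lem.~\ref{lem:faces-when-flip}, each flip adds exactly $\mF(G_j)-D$ to $F_0$. The result is a graph in $\cG^{\conn}_{D+1}(\gG)$ with
\[
F_0\ >\ D+\sum_{i=1}^p\bigl(\mF(G_i)-D\bigr)=\mFc(\gG),
\]
where the equality is Lem.~\ref{lem:tree-dom-iff-faces} applied to $\gG$. This contradicts the definition of $\mFc(\gG)$. Hence equality holds for $\gG_I$, and Lem.~\ref{lem:tree-dom-iff-faces} shows that $\gG_I$ has tree-like dominant pairings.

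\textbf{Only tree-like dominant pairings.} By the first part, $\MD{}^{\conn}(\gG_I)$ contains all tree-like graphs on $\gG_I$. It remains to show that every $\widehat G_I\in\MD{}^{\conn}(\gG_I)$ is tree-like on $\gG_I$. Apply the same flip construction to $\widehat G_I$. The resulting $\widehat G$ has $F_0(\widehat G)=\mFc(\gG)$, so $\widehat G\in\MD{}^{\conn}(\gG)$, and by assumption $\widehat G$ is tree-like on $\gG$. Now transfer the maximal two-cut property back to $\widehat G_I$. Take $i\in I$ with its dominant pairing $\pi_{\rm opt}(G_i)$ witnessing the property in $\widehat G$. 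Let $(v,w)$ be a pair in $\pi_{\rm opt}(G_i)$ whose color-$0$ edges in $\widehat G$ form a two-cut in $K(\widehat G;\gG)$.

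\textbf{Main obstacle.} The difficulty is relating color-$0$ edges of $\widehat G$ to those of $\widehat G_I$. They agree except at the edges modified by the flips with the $G_j$. Undoing a flip is itself a flip, and it only separates off the subtree containing $G_j$. This is where we use that every attachment was made by a two-cut flip. We expect that a two-cut separating $\gG$ which involves edges incident to $G_i$ induces a two-cut separating $\gG_I$ once the pendant $J$-pieces are removed. This needs care in two cases. First, the cut may only isolate $J$-pieces; then $v$ and $w$ should actually be joined directly by a color-$0$ edge in $\widehat G_I$. Second, one of the two edges may be a flipped edge; then the corresponding edge in $\widehat G_I$ must be substituted for it.

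To handle this, we would choose the flip edges conveniently: attach every $G_j$ by flipping with the same edge, or with edges far from the relevant vertices. An alternative is to run the recursive peeling used in the proof of Lem.~\ref{lem:faces-of-treelike}. That proof shows that a tree-like graph decomposes, via the recursive flip description, into dominant pieces connected by two-cut flips. Contracting the $J$-leaves of this tree description then yields a tree-like description of $\widehat G_I$.
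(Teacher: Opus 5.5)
Your first half (tree-like dominant pairings) is correct and is essentially the paper's argument. The paper glues a single tree-like $\widehat G_{I'}$ onto $\widehat G_I$ by one flip, while you glue the $\widehat{G_j}^{\textrm{opt}}$ one at a time. Your version has the small advantage that each flipped edge $\tilde e$ is internal to $G_j$, so connectivity of $K$ is immediate.

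The gap is in the ``only tree-like'' half. You correctly reduce it to showing that if the glued graph $\widehat G$ is tree-like on $\gG$, then $\widehat G_I$ is tree-like on $\gG_I$. After that you only say what you ``expect'' and list possible strategies. This implication is the whole content of the ``only'' claim, since Lem.~\ref{lem:tree-dom-iff-faces} says nothing about ``only''. As written, that half is unproved. The paper compresses the same step into the phrase ``saturated \dots only by tree-like graphs'', so you were right to flag it, but it still needs an argument.

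It closes easily. Peel off the attached graphs one at a time, in reverse order of attachment, and reuse the same witnessing pairings. Suppose the last step flipped $e=(x,y)$ of the previous graph $\widehat H$ with $\tilde e=(\tilde x,\tilde y)$ internal to $G_j$, creating $(x,\tilde y)$ and $(\tilde x,y)$. All other color-$0$ edges at $G_j$ are internal, so $G_j$ hangs on the graphs containing $x$ and $y$. For connectivity among the other graphs, it therefore plays exactly the role of $e$.

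Now take a pair $(v,w)$ of $\pi_{\textrm{opt}}(G_i)$ with $G_i\neq G_j$, and let $f_v,f_w$ be its color-$0$ edges in $\widehat G$. There are four cases.
(a) If $f_v=f_w$, this edge is not new, because new edges touch $G_j$; so it lies in $\widehat H$.
(b) If neither edge is new, the components of $K$ after deleting $f_v,f_w$ from $\widehat G$ correspond bijectively to those after deleting them from $\widehat H$. Hence $\{f_v,f_w\}$ is still a two-cut.
(c) If both are new, then $v=x$ and $w=y$, and $e$ joins them directly in $\widehat H$.
(d) If only $f_v$ is new, then $v=x$ and $f_w\neq e$. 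After deleting $f_v,f_w$, the graph $G_j$ is pendant at the graph containing $y$. So the number of components equals that of $\widehat H$ with $e$ and $f_w$ deleted, and these are exactly the color-$0$ edges at $v,w$ in $\widehat H$. The case where only $f_w$ is new is symmetric.

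The same analysis applies to the previously attached $G_{j'}$, so induction yields that $\widehat G_I$ is tree-like on $\gG_I$. If you instead glue a whole tree-like $\widehat G_{I'}$ through an edge $e'$ joining two distinct graphs, as the paper does, you also need that deleting one color-$0$ edge never disconnects $K$. This holds because each $G_i$ has as many white as black vertices, so every cut of $K$ is crossed by an even number of color-$0$ edges.
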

\proof Let  $I\subset \{1, \ldots, p\}$, $\displaystyle G_I=\bigsqcup_{i\in I} G_i$, and consider a graph $\widehat G_I\in \cG_{D+1}^{\conn}(\gG_I)$. We also let $I'=\{1, \ldots, p\} \setminus I$,  $\displaystyle G_{I'}=\bigsqcup_{i\in I'} G_i$, $\gG_{I'}=\{G_i\}_{i\in I'}$, and consider $\widehat G_{I'}\in \cG_{D+1}^{\conn}(\gG_{I'})$ that is tree-like on $\gG_{I'}$. 

Choosing any pair of edges  $e$ in $\widehat G_I$ and $e'$ in $\widehat G_{I'}$ of color 0 and flipping them, we obtain a graph $\widehat G\in \cG_{D+1}^{\conn}(\gG)$ which from Lem.~\ref{lem:faces-when-flip} has $F_0(\widehat G) = F_0(\widehat G_I) + F_0(\widehat G_{I'}) - D$ faces of color 0. Assuming $\gG$ to have tree-like dominant pairings (resp.~only tree-like dominant pairings), from Lem.~\ref{lem:tree-dom-iff-faces}, 
 $F_0(\widehat G)\le D + \sum_{i=1}^p \bigl( \mF(G_i) - D\bigr)$ with equality for tree-like dominant pairings (resp.~only for tree-like dominant pairings). On the other hand, since $\widehat G_{I'}$ is a tree-like graph on $\gG_{I'}$ by assumption, by  Lem.~\ref{lem:faces-of-treelike} we must have $F_0(\widehat G_{I'})= D + \sum_{i\in I'} \bigl( \mF(G_i) - D\bigr)$. Combining these equations, we obtain: 
\begin{equation}
    F_0(\widehat G_I) = D+ F_0(\widehat G) - F_0(\widehat G_{I'})  \le D + \sum_{i\in I} \bigl( \mF(G_i) - D\bigr)\;.
\end{equation}
Since this bound holds for any $\widehat G_I\in \cG_{D+1}^{\conn}(\gG_I)$, and is saturated by tree-like graphs (resp.~only by tree-like graphs), we conclude that $\displaystyle\mFc(\gG_I) = D + \sum_{i\in I} \bigl( \mF(G_i) - D\bigr)$. Hence, by  Lem.~\ref{lem:tree-dom-iff-faces}, $G_I$ has tree-like dominant pairings (resp.~only tree-like dominant pairings).  \qed

\ 

Let $G_1,\dots,G_p \in \cG_D$, we denote by $G = G_1 \sqcup \cdots \sqcup G_p$ and $\gG = \paa{G_1,\dots,G_p}$. For any $\widehat G \in \cG_{D+1}(G)$, we define the bipartite graph $\Gamma(\widehat G; \gG)$ as follows: it has white vertices for the connected components of $\widehat G$ and black vertices for each graph $G_i \in \gG$. For every $i \in \paa{1,\dots,p}$ and every connected component $B$ of $G_i$, observe that $B$ is contained in a unique connected component of $\widehat G$. For every such $B$, we draw  in $\Gamma(\widehat G; \gG)$ an edge between the black vertex $v_i$ associated with the graph $G_i$, and the white vertex $w$ associated with the connected component of $\widehat G$ containing $B$. Consequently, $\widehat G$ connects $\gG$, if and only if the bipartite graph $\Gamma(\widehat G; \gG)$ is connected.

\begin{lem}
\label{lem:max-of-kappa-0}
 Consider $\gG=\{G_i\}_{i=1, \ldots, p}$ with $G_1,\dots,G_p\in\cG_D$, and $\widehat G\in\cG_{D+1}^{\conn}(\gG)$. Then the following upper bound holds: 
 \begin{equation}
         \label{eq:upper-bound-kappa}
     \kappa(\widehat G) \le \kappa(G) - p + 1\;,
 \end{equation}
with equality if and only if $\Gamma(\widehat G; \gG)$ is a tree. 
\end{lem}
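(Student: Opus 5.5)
The bound is an Euler-characteristic statement about the bipartite graph $\Gamma(\widehat G;\gG)$; the plan is to count its vertices and edges.

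\emph{Vertices.} $\Gamma(\widehat G;\gG)$ has $\kappa(\widehat G)$ white vertices, one per connected component of $\widehat G$. It has $p$ black vertices, one per graph $G_i$.

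\emph{Edges.} By construction, each connected component $B$ of each $G_i$ contributes exactly one edge. This edge joins $v_i$ to the unique component of $\widehat G$ containing $B$. Two distinct components of the same $G_i$ lying in the same component of $\widehat G$ produce parallel edges, so we regard $\Gamma$ as a multigraph. The total number of edges is therefore $\sum_i \kappa(G_i) = \kappa(G)$.

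\emph{Connectedness.} Since $\widehat G\in\cG_{D+1}^{\conn}(\gG)$, the graph $\widehat G$ connects $\gG$. As noted just before the statement, this is equivalent to $\Gamma(\widehat G;\gG)$ being connected. One can also check this directly. Two black vertices $v_i,v_j$ joined by a color-$0$ edge of $\widehat G$ between $G_i$ and $G_j$ share a white neighbour. Hence connectedness of $K(\widehat G;\gG)$ makes all black vertices lie in one component of $\Gamma$. Every white vertex is adjacent to some black vertex, because every component of $\widehat G$ contains a component of some $G_i$.

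\emph{Conclusion.} A connected multigraph with $V$ vertices and $E$ edges satisfies $E \ge V-1$, with equality if and only if it is a tree (no cycles, and in particular no multiple edges). Here $V = \kappa(\widehat G)+p$ and $E=\kappa(G)$, so
\begin{equation*}
\kappa(G)\ \ge\ \kappa(\widehat G)+p-1 .
\end{equation*}
This is exactly Eq.~\eqref{eq:upper-bound-kappa}, and equality holds precisely when $\Gamma(\widehat G;\gG)$ is a tree.

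The only point needing care is to treat $\Gamma$ as a multigraph, so that the edge count equals $\kappa(G)$ exactly. Under this convention, "tree" correctly excludes parallel edges. Parallel edges do correspond to strict inequality: two components of the same $G_i$ in the same component of $\widehat G$ create a cycle of length two.
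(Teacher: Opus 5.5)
Your proposal is correct and follows the paper's own proof: count $V=p+\kappa(\widehat G)$ vertices and $E=\kappa(G)$ edges in $\Gamma(\widehat G;\gG)$, then apply $E-V+1\ge 0$ for connected graphs, with equality exactly for trees. Your direct check that $\Gamma(\widehat G;\gG)$ is connected, and your explicit treatment of $\Gamma$ as a multigraph, fill in details the paper leaves implicit.
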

\proof Any connected graph with $E$ edges and $V$ vertices satisfies $E-V+1\ge 0$, with equality if and only if it has no cycles, \ie if the graph is a tree. For $\gG=\{G_i\}_{i=1, \ldots, p}$ with $G_i\in\cG_D$, and $\widehat G\in\cG_{D+1}^{\conn}(\gG)$, the number of edges and vertices of $\Gamma(\widehat G; \gG)$ are respectively $\kappa(G)$ and $p + \kappa(\widehat G)$, which proves the lemma.
\qed

\ 

For $i \in \paa{1,\dots,p}$, we let  $\{B_i^{(j)}\}_{1 \leq j \leq \kappa(G_i)}$ be the connected components of a graph $G_i$, so that $G_i = \bigsqcup_j B_i^{(j)} $,   $1\le j \le \kappa(G_i)$.
Furthermore, for any $i \in \paa{1,\dots,p}$, we define $\gG'(i)$ as: 
\begin{equation}
\label{eq:expl-of-Gamma-tree}
    \gG'(i) = ( \gG\setminus \{G_i\})\cup\{B_i^{(1)}, \ldots, B_i^{(\kappa(G_i))}\}\,.
\end{equation}
Another equivalent formulation of the statement ``$\Gamma(\widehat G; \gG)$ is a tree'' is that ``the graph $K(\widehat G; \gG'(i))$ has $\kappa(G_i)$ connected components'' (see Fig.~\ref{fig:Gprime}). Said otherwise, for every $i\in \paa{1,\dots,p}$, the graphs $B_i^{(j)}$ for $1\le j \le \kappa(G_i)$ belong to different connected components of the graph $\widehat G$, and the graph $G_i$ is the only ``connection’’ between these connected components.

\begin{figure}[htbp]
    \centering
    \includegraphics[height = 5.5cm]{pdf/Gprime_new.pdf}
    \caption{In such a graph $\widehat G$, every graph $G_i$ is the only connection point between the connected components $H_r$, for $1 \leq r \leq \kappa(G_i)$, of $\widehat G$ relative to $\gG'(i)$ (see Eq.~\eqref{eq:expl-of-Gamma-tree}).}
    \label{fig:Gprime}
\end{figure}

The following result and its proof are very similar to Prop.~4.2.8 of Ref.~\cite{Lionni2018}.

\begin{lem}
\label{lem:tree_like_B_implies_G}
    Let $D \geq 2$, $p \in \bb{N}^*$, and $G_1,\dots,G_p \in \cG_D$. For any $i\in \paa{1,\dots,p}$, let us denote by $\{B_i^{(j)}\}_{1 \leq j \leq \kappa(G_i)}$ the connected components of $G_i$, so that $G_i = \bigsqcup_j B_i^{(j)} $, where $1\le j \le \kappa(G_i)$.
   We assume that $\mathbf{B} = \{B_i^{(j)}\ \mid \ 1\le i \le p,\  1\le j \le \kappa(G_i)\}$ has tree-like dominant pairings. Then a graph $\widehat G\in \cG_{D+1}^{\conn}(\gG)$ belongs to $\MD{}^{\conn}(\gG)$ if and only if the two following conditions are satisfied:
\begin{enumerate}
\item Each connected component $C$ of $\widehat G$ belongs to $\MD{}^{\conn}(\mathbf{B}_C )$, where $\mathbf{B}_C$ is the subset of $\mathbf{B}$ on which  $C$ is supported. 
\item $\Gamma(\widehat G; \gG)$ is a tree. 
\end{enumerate}
Applied to each $G_i$ separately, this implies that the  dominant pairings of the $G_i$ are those induced by the $ \{B_i^{(j)}\}_{1\le j \le \kappa(G_i)}$ (that is, $G_i$ has no dominant pair with vertices on $B_i^{(j_1)}$ and $B_i^{(j_2)}$, $j_1\neq j_2$).

Furthermore, due to condition 2, if each $C$ is tree-like on $\mathbf{B}_C$, then $\widehat G$  is tree-like on  $\gG$. Therefore, if $\mathbf{B}$ has (only) tree-like dominant pairings, then so does $\gG$. 
\end{lem}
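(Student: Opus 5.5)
The argument compares two face counts. The first is $F_0$ of an arbitrary $\widehat G\in\cG_{D+1}^{\conn}(\gG)$, obtained by summing over its connected components. The second is the explicit value of $\mFc(\gG)$, obtained by exhibiting a tree-like competitor.

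\textbf{Computing the maximum.} Set $\mathbf{B}=\{B_i^{(j)}\}$ and write $M=\sum_{i,j}\bigl(\mF(B_i^{(j)})-D\bigr)$. Since $\mathbf{B}$ has tree-like dominant pairings, Lem.~\ref{lem:tree-like-subsets} shows that every subset $\mathbf{B}_C\subset\mathbf{B}$ does too. Lem.~\ref{lem:tree-dom-iff-faces} then gives, for every connected component $C$ of $\widehat G$ supported on $\mathbf{B}_C$,
\begin{equation*}
F_0(C)\le \mFc(\mathbf{B}_C)= D+\sum_{B\in\mathbf{B}_C}\bigl(\mF(B)-D\bigr),
\end{equation*}
with equality if and only if $C\in\MD{}^{\conn}(\mathbf{B}_C)$. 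The sets $\mathbf{B}_C$ partition $\mathbf{B}$, since each $B$ lies in exactly one component. Summing over components therefore gives $F_0(\widehat G)\le D\kappa(\widehat G)+M$. Lem.~\ref{lem:max-of-kappa-0} gives $\kappa(\widehat G)\le\kappa(G)-p+1$, with equality if and only if $\Gamma(\widehat G;\gG)$ is a tree. Combining the two,
\begin{equation*}
F_0(\widehat G)\le D(\kappa(G)-p+1)+M,
\end{equation*}
with equality if and only if conditions 1 and 2 both hold.

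\textbf{Attainability.} It remains to show this bound is reached, so that it equals $\mFc(\gG)$ and the equality cases are exactly $\MD{}^{\conn}(\gG)$. I will build $\widehat G$ as follows. Choose a spanning tree of the required bipartite shape: white vertices are the intended components, black vertices are the $G_i$, and the edges are the $B_i^{(j)}$, with the edges at $v_i$ being the $\kappa(G_i)$ components of $G_i$. For instance, take $\kappa(G)-p+1$ groups arranged so that $\Gamma$ is a tree. Then connect each group $\mathbf{B}_C$ by a tree-like graph on it. Such graphs exist: start from dominant graphs of each $B$ and chain them by flips. By Lem.~\ref{lem:faces-of-treelike} each such piece attains $\mFc(\mathbf{B}_C)$.

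The step needing care is checking that such a tree $\Gamma$ exists with connected $K$, for every $p$ and every choice of $\kappa(G_i)$. A concrete choice is a path-like arrangement: one group containing $B_1^{(1)},B_2^{(1)},\dots,B_p^{(1)}$, and every other $B_i^{(j)}$ forming its own singleton component. This $\Gamma$ is a tree, namely a star through the big component with pendant leaves. Its number of white vertices is $1+\sum_i(\kappa(G_i)-1)=\kappa(G)-p+1$, as required. It connects $\gG$ through the big component, and singleton components use dominant pairings of $B$.

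\textbf{Consequences.} The remaining claims follow quickly. Taking $p=1$ for each $G_i$, condition 2 forces every $B_i^{(j)}$ into a distinct component, so no dominant pair of $G_i$ joins two of its connected components. If each $C$ is tree-like on $\mathbf{B}_C$ and $\Gamma$ is a tree, I then need to verify the maximal two-cut property for each $G_i$ in $\widehat G$. Take the dominant pairing of $G_i$ to be the union of the pairings of its $B_i^{(j)}$ induced in the respective tree-like components; it is dominant by the $p=1$ case just discussed. Every two-cut of $C$ relative to $K(C;\mathbf{B}_C)$ then remains a two-cut relative to $K(\widehat G;\gG)$, because $\Gamma$ is a tree and hence $G_i$ is the only link between the components. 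This transfer of two-cuts is the main subtlety, and I would argue it by removing the pair and tracking the connectivity of $\Gamma$. For the ``only'' version, every dominant $\widehat G$ has components in $\MD{}^{\conn}(\mathbf{B}_C)$, which are then tree-like by assumption, and the argument above shows $\widehat G$ is tree-like.
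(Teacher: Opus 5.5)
Your bound-and-equality argument is the paper's proof: you apply the per-component bound from Lem.~\ref{lem:tree-like-subsets} and Lem.~\ref{lem:tree-dom-iff-faces}, sum it to $D\kappa(\widehat G)+M$, and combine it with Lem.~\ref{lem:max-of-kappa-0}. The paper's proof stops at that equality characterization and only asserts the remaining claims, so the rest of your proposal goes beyond it. Your explicit maximizer (one component tree-like on $\{B_1^{(1)},\dots,B_p^{(1)}\}$, every other $B_i^{(j)}$ alone with a dominant pairing) supplies the attainability that the ``if'' direction needs and the paper leaves implicit. Your two-cut transfer is also correct. The precise reason is that a tree has no multiple edges, so a component $C$ contains at most one connected component of each $G_k$. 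Splitting the white vertex $C$ of $\Gamma(\widehat G;\gG)$ into two parts with disjoint nonempty neighbourhoods then disconnects $\Gamma$, and hence disconnects $K$.

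There is one genuine gap, in the last sentence of the statement. By Def.~\ref{def:DEF}, ``$\gG$ has tree-like dominant pairings'' means that every graph tree-like on $\gG$ lies in $\MD{}^{\conn}(\gG)$; ``only'' adds the reverse inclusion. Everything you prove about tree-likeness runs in the direction dominant (with tree-like components) $\Rightarrow$ tree-like. The inclusion tree-like $\Rightarrow$ dominant is never established, so as written neither the plain nor the ``only'' conclusion follows. The fix is short. The $p=1$ instance of your result gives $\mF(G_i)=\sum_j\mF(B_i^{(j)})$, so the maximum you computed is $\mFc(\gG)=\sum_{i,j}\mF(B_i^{(j)})-D(p-1)=D+\sum_{i}\bigl(\mF(G_i)-D\bigr)$. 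Lem.~\ref{lem:tree-dom-iff-faces} then says every tree-like graph on $\gG$ is dominant. Alternatively, your maximizer has tree-like components, so by your transfer argument it is a tree-like element of $\MD{}^{\conn}(\gG)$, and Lem.~\ref{lem:faces-of-treelike} shows that all tree-like graphs on $\gG$ share the same $F_0$.
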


\proof  Fix $\kappa\ge 1$ and a graph $\widehat G\in \cG_{D+1}^{\conn}(\gG)$ with $\kappa(\widehat G)=\kappa$. One has 
\begin{equation}
    F_0(\widehat G) = \sum_{C}F_0(C)\,,
\end{equation}
where the sum is over the connected components $C$ of $\widehat G$. Since $\mathbf{B}$ has tree-like dominant pairings, so do its subsets by Lem.~\ref{lem:tree-like-subsets}, and therefore 
\begin{equation}
    F_0(\widehat G) \le  D\kappa + \sum_{i, j}\bigl(\mF(B_i^{(j)}) - D\bigr) \,.
\end{equation}
with equality if and only if the condition {\it 1} of the statement is satisfied. By Lem.~\ref{lem:max-of-kappa-0}, this is itself bounded as:
\begin{equation}
    F_0(\widehat G) \le D\Bigl( \sum_{i=1}^p\kappa(G_i) - p + 1\Bigr) + \sum_{i, j}\bigl(\mF(B_i^{(j)}) - D\bigr) =  \sum_{i, j}\mF(B_i^{(j)}) - D(p-1)  \,,
\end{equation}
with equality if and only if both conditions {\it 1} and {\it 2} of the statement are satisfied.
\qed

\ 

\paragraph{Some examples of colored-graphs.} Below, we define several families of colored graphs, some of which our main theorems will apply to. Examples are illustrated in Figs.~\ref{fig:melo} and \ref{fig:Graphs-ex}.

\begin{itemize}
    \item \textbf{ Melonic graphs:} Starting from the $2$-vertex graph (see leftmost graph of Fig.~\ref{fig:melo}), \textit{a connected melonic graph} is constructed recursively from insertions (or flips) of $2$-vertex graphs. An example is provided in Fig.~\ref{fig:melo}.
    \begin{figure}[htbp]
        \centering
        \includegraphics[height = 2.5cm]{pdf/melonic_const.pdf}
        \caption{Recursive construction of a connected melonic graph.}
        \label{fig:melo}
    \end{figure}
    
    \item {\bf Maximally single-trace graphs:} They are the $D$-colored graphs that have a single face of colors $i,j$ for every $i,j\in\{1, \ldots, D\}$. 
    
    \item {\bf Cyclic graphs:} Choosing a subset $M$ of $\{1, \ldots, D\}$ with $\lfloor D/2\rfloor$ elements or less, the cyclic graphs associated to $M$ are cycles alternating $|M|$ parallel edges with colors in $M$, and $D-|M|$ parallel edges with colors in $\{1, \ldots, D\}\setminus M$. 
    
    \item {\bf Planar graphs:} We only define them for $D=3$. Connected planar graphs can be drawn on the plane without crossings, so that the clockwise (resp.~counterclockwise) ordering of the edges around the black (resp.~the white) vertices is $1,2,3$, cyclically. Equivalently, planar graphs satisfy\footnote{Indeed, $G$ being planar is characterized by the Euler equation $F(G)-E(G)+V(G)=2 \kappa(G)$, where $E$ (resp.~$V$) denotes the number of edges (resp.~vertices). But $V(G)= 2 k(G)$, and $G$ being $3$-regular, we also have $3 V(G)= 2 E(G)$. Eq.~\eqref{eq:planar} follows.}
   \begin{equation}\label{eq:planar}
       F(G)= 2\kappa(G) + k(G)\;.
   \end{equation}
    \item {\bf Compatible graphs:} A graph $G\in\cG_D$ is called \emph{compatible} if and only if $\Delta(G)=0$, where $\Delta$ has been defined in Eq.~\eqref{eq:Delta_0}.  Equivalently, $G$ is compatible whenever $\mF(G) = \frac{D}2 k + \frac 1 {D-1} F(G)$
    
    \item {\bf Realignment moments:} They are also called \emph{bi-pyramids} in Ref.~\cite{Lionni2018}. Choose a partition of $\{1, \ldots, D\}$ in three non-empty subsets $M_1, M_2$, and $M_3$. Take $k$ pairs of (black and white) vertices linked by $|M_3|$ edges with colors in $M_3$, then form a cycle from these pairs that are alternatively linked by $2|M_1|$ edges with colors in $M_1$ and by $2|M_2|$ edges with colors in $M_2$. 
    
    \item {\bf Joint realignment moments:} They are very similar to the realignment moments (they have a similar cyclic structure), but without the constraint that two subset of colors $M_1, M_2$  alternate along the cycle. 
\end{itemize}

\begin{figure}[htbp]
    \centering
    \includegraphics[width = .8\textwidth]{pdf/graphs_comp_paper-2.pdf}
    \caption{From left to right and top to bottom: a (compatible) maximally single-trace graph, a cyclic graph, a planar graph (with $\Delta = 2$), a compatible graph, a realignment moment with $\abs{M_1} > \abs{M_2}, \abs{M_3} $, a realignment moment with $\abs{M_3} \geq \abs{M_1}, \abs{M_2} $, a joint realignment moment, and a joint realignment moment with subsets of size one (\ie colors) that alternate along the cycle.}
    \label{fig:Graphs-ex}
\end{figure}

The degree of compatibility $\Delta$ --  or equivalently $\mF$ through Eq.~\eqref{eq:Delta_0} -- can be computed for the colored graphs introduced above, as summarized below, see Ref.~\cite{Carrozza:2026qcf} for more details. 

\begin{prop} \label{prop:families-low-delta}
The degree of compatibility takes the following values for the colored graphs above:
    \begin{enumerate}
        \item $\Delta = 0$: the cyclic graphs associated to a subset $M\subset \paa{1,\dots,D}$ with $\abs{M} = 1$ (see \eg Ref.~\cite{Collins:2024pip} and references therein), a subset of the maximally single-trace graphs (see Ref.~\cite{Carrozza:2026qcf}) and the joint realignment moments with subsets of colors alternating along the cycle (see the bottom-rightmost graph of Fig.~\ref{fig:Graphs-ex} and Ref.~\cite{Lionni2018}) are compatible for any $D\ge 3$.
        \item $\Delta = 1$: the realignment moments with $\abs{M_1} = \abs{M_2} = 1$ have a degree of compatibility equal to one for any $D \geq 3$ (see Ref.~\cite{Lionni2018}).
        \item Arbitrary $\Delta$: the degree of compatibility of a cyclic graph $G$ with $2k$ vertices and associated to a subset $M \subset \paa{1,\dots,D}$ with $\abs{M} \leq \lfloor D/2 \rfloor$, can be computed explicitly (see Refs.~\cite{Bonzom:2015axa, Bonzom_Lionni_Rivasseau_2017,AIHPD_2015__2_1_1_0,bonzom2014tensormodelsviewpointmatrix}): 
        \begin{equation}
            \Delta(G) = \frac{\abs{M}(\abs{M}-1)}{2}(k-1) \,.
        \end{equation}
        The degree of compatibility of the realignment moments for arbitrary value of $\abs{M_1}$,  $\abs{M_2}$ and $\abs{M_2}$ can also be explicitly computed (see \eg App.~C of Ref.~\cite{Carrozza:2026qcf} and Ref.~\cite{Lionni2018}).
    \end{enumerate}
\end{prop}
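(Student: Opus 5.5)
The statement is a compilation of values of $\Delta$, so the plan is to reduce everything to computing $\mF(G)$, using Eq.~\eqref{eq:Delta_0}: $\Delta(G)=\frac{D(D-1)}4 k(G)+\frac12F(G)-\frac{D-1}2\mF(G)$. For each family I will (a) count $F(G)$ explicitly from the combinatorial description, (b) exhibit an explicit pairing $\widehat G\in\cG_{D+1}(G)$ whose $F_0$ gives the lower bound on $\mF(G)$, and (c) prove a matching upper bound. The upper bound will come either from Lem.~\ref{lem:Razvans-bound-on-F} (for the compatible cases, where it is saturated) or from a direct face count. Since each item is attributed to earlier references, the proof can largely cite them. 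Still, I would verify the generic mechanism for each case.

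\emph{Item 1.} For the cyclic graph with $|M|=1$, say $M=\{c\}$, there are $2k$ vertices, hence $k$ white ones. The faces $ij$ with $i,j\neq c$ number $k$ each, and the faces $ci$ number $1$ each. This gives $F=\binom{D-1}{2}k+(D-1)$. I pair each vertex with its partner across the $D-1$ parallel edges. This produces $(D-1)k$ faces of colors $0i$ with $i\ne c$, plus one face of color $0c$. Plugging in gives $\Delta_0(\widehat G)=0$, and Lem.~\ref{lem:Razvans-bound-on-F} with $\kappa(\widehat G)=\kappa(G)=1$ gives $\Delta\ge0$, so $\Delta=0$. For the joint realignment moments with alternating colors I would proceed in the same way: count $F$ from the cyclic structure, choose the pairing along the $M_3$-blocks, and check $\Delta_0=0$. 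For maximally single-trace graphs the statement only asserts a subset, so it suffices to cite Ref.~\cite{Carrozza:2026qcf}.

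\emph{Item 3 (cyclic graphs, general $|M|=m$).} First, $F=k\bigl[\binom m2+\binom{D-m}2\bigr]+m(D-m)$. For the lower bound I pair across the $D-m$ parallel edges, which is the larger bundle since $m\le\lfloor D/2\rfloor$. This gives $F_0=(D-m)k+m$. The upper bound $\mF\le (D-m)k+m$ is the main obstacle. My first attempt is the following argument. Restricted to colors in $M$, the graph decomposes into $k$ two-vertex melons, and likewise for the complement. For a pairing $\nu$, the number of faces of color $0i$ is $\#(\sigma_i\nu^{-1})$, where $\sigma_i=\mathrm{id}$ for $i\notin M$ and $\sigma_i=\gamma$ (a $k$-cycle) for $i\in M$. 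Hence
\[
F_0=(D-m)\#(\nu)+m\,\#(\gamma\nu^{-1}).
\]
The triangle inequality $\#(\nu)+\#(\gamma\nu^{-1})\le k+1$ together with $\#(\nu)\le k$ and $D-m\ge m$ then yields $F_0\le (D-m)\#(\nu)+m(k+1-\#(\nu))\le (D-m)k+m$. Substituting into the formula for $\Delta$ should simplify to $\frac{m(m-1)}2(k-1)$, which I would check algebraically. I expect this step to go through.

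\emph{Item 2 and the realignment moments.} These have the same flavour, with three permutation types, namely $\mathrm{id}$, $\gamma$ and $\gamma^2$-like shifts for the colors in $M_3$, $M_1$ and $M_2$. The approach is again to write $F_0$ as a weighted sum of cycle counts of $\nu$, $\gamma_1\nu^{-1}$ and $\gamma_2\nu^{-1}$, and bound it by genus-type triangle inequalities. The hardest point is to identify the optimal $\nu$ when $|M_1|=|M_2|=1$ and obtain exactly $\Delta=1$ rather than a bound. For that I would rely on Ref.~\cite{Lionni2018} and App.~C of Ref.~\cite{Carrozza:2026qcf} rather than rederive it.
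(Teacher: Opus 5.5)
The paper gives no argument for this proposition: it is a compilation, and each item is justified only by citing the literature. You take the same citation route for the maximally single-trace graphs and the realignment moments. For the cyclic graphs, however, you replace the citation with a self-contained proof, and that proof is correct.

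With $\sigma_i=\mathrm{id}$ for $i\notin M$ and $\sigma_i=\gamma$ a $k$-cycle for $i\in M$, one has $F_0=(D-m)\#(\nu)+m\,\#(\gamma\nu^{-1})$. The Cayley triangle inequality $\#(\nu)+\#(\gamma\nu^{-1})\le k+1$, together with $D-2m\ge 0$, gives $\mF(G)\le (D-m)k+m$. This bound is attained by pairing across the larger bundle. With $F=k\bigl[\binom{m}{2}+\binom{D-m}{2}\bigr]+m(D-m)$, the algebra does reduce to $\frac{m(m-1)}{2}(k-1)$, and $m=1$ is your item~1 computation. What this buys over the paper is a transparent derivation of item~3 that shows exactly where $\abs{M}\le\lfloor D/2\rfloor$ is used.

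Your sketch for the joint realignment moments also goes through once the class is pinned down. Suppose consecutive pairs are linked by single colors and at least three distinct link colors occur. A link color used $n_c$ times has $\#(\sigma_c)=k-n_c$. Moreover $\#(\sigma_c\sigma_{c'}^{-1})=k-n_c-n_{c'}$, because the $\{c,c'\}$-links form disjoint paths. Hence $F=\binom{D}{2}k-(D-1)k$, while the block pairing gives $F_0=(D-1)k$, so $\Delta_0=0$. With only two alternating colors you are in item~2 instead.

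One point in your plan is wrong: in a realignment moment the colors of $M_1$ and $M_2$ do not act by $\gamma$ and $\gamma^2$. With $\sigma_c=\mathrm{id}$ on $M_3$, they act by the fixed-point-free involutions $\tau_1=(1\,2)(3\,4)\cdots$ and $\tau_2=(2\,3)(4\,5)\cdots(k\,1)$, and $\tau_1\tau_2$ has two cycles. With the correct permutations, your method proves item~2 without citation:
\begin{itemize}
\item The block pairing $\nu=\mathrm{id}$ gives $\Delta_0=1$.
\item For $D\ge 4$, the bounds $\#(\nu)+\#(\tau_i\nu^{-1})\le\frac{3}{2}k$ for $i=1,2$ and $\#(\nu)\le k$ yield $\mF(G)\le (D-1)k$, i.e.\ $\Delta\ge 1$.
\item For $D=3$, adding these two bounds to $\#(\tau_1\nu^{-1})+\#(\tau_2\nu^{-1})\le k+2$ only gives $F_0\le 2k+1$, which is Gurau's bound. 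You must add that the three inequalities cannot be saturated together. Saturating the first two forces $\nu$ to be a product of transpositions common to $\tau_1$ and $\tau_2$, hence $\nu=\mathrm{id}$ when $k\ge 4$. But $\mathrm{id}$ does not lie on a geodesic from $\tau_1$ to $\tau_2$, so $F_0\le 2k$.
\end{itemize}
For $D=3$ and $k=2$ one has $\tau_1=\tau_2$. The graph is then the $\abs{M}=1$ cyclic graph and $\Delta=0$, so item~2 tacitly requires at least four pairs when $D=3$.
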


For some of the colored graphs above, one can determine the graphs in $\MD{}^{\conn}(\gG)$, that is, the connected  graphs that dominate the expansion in Eq.~\eqref{eq:Exp_gG}. All the examples studied in the literature so far have tree-like dominant pairings.  Below is a non-exhaustive summary regarding this particular aspect (among other things, the references cited also characterize the graphs belonging to $\MD{}^{\conn}(\gG)$ and provide large $p$ results concerning $\mu_{\gG}^{\conn}$). 

\begin{theo}
\label{thm:families-tree-like-LO}
The tree-like dominance is known in the following cases:
    \begin{enumerate}
        \item If $\gG$ consists in copies of connected melonic graphs, then $\gG$ has only tree-like dominant pairings (see Ref.~\cite{PhysRevD.85.084037}).
        
        \item If $D$ is even and $\gG$ consists in copies of cyclic graphs, then $\gG$ has tree-like dominant pairings; furthermore, if $\gG$ consists in copies of cyclic graphs associated to subsets $M \subset \paa{1,\dots,D}$ with $\abs{M} < D/2$, then $\gG$ has only tree-like dominant pairings (see Refs.~\cite{Bonzom:2015axa, Bonzom_Lionni_Rivasseau_2017}). 
        
        \item If $\gG$ consists in copies of a single kind of realignment or joint realignment moments with $\abs{M_3} \geq \abs{M_1},\abs{M_2}$ (see Ref.~\cite{Bonzom_Lionni_Rivasseau_2017, Bonzom_Lionni_2017, LIONNI2019600, Lionni2018}), then it has only tree-like dominant pairings. 
        
        \item Any $\gG$ consisting of connected planar $3$-colored graphs has only tree-like dominant pairings (see Ref.~\cite{Bonzom:2018btd}). 
    \end{enumerate}
\end{theo}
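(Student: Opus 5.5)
The statement collects results from the literature, so the plan is to reduce every case to one face-counting inequality and then prove that inequality family by family. By Lem.~\ref{lem:tree-dom-iff-faces}, $\gG=\{G_1,\dots,G_p\}$ has tree-like dominant pairings if and only if
\begin{equation*}
F_0(\widehat G)\le D+\sum_{i=1}^p\bigl(\mF(G_i)-D\bigr)\qquad\text{for all }\widehat G\in\cG^{\conn}_{D+1}(\gG),
\end{equation*}
because tree-like graphs saturate this value by Lem.~\ref{lem:faces-of-treelike}. It has \emph{only} tree-like dominant pairings if, in addition, equality forces $\widehat G$ to be tree-like on $\gG$. In every case considered the $G_i$ are connected, so $\kappa(G)=p$ and connecting $\gG$ simply means $\kappa(\widehat G)=1$.

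\textbf{Case 1 (melonic).} I would use Lem.~\ref{lem:Razvans-bound-on-F} directly. With $\kappa(G)-\kappa(\widehat G)=p-1$ it gives
\begin{equation*}
F_0(\widehat G)\le\sum_i\Bigl(\tfrac D2k(G_i)+\tfrac{F(G_i)}{D-1}\Bigr)-D(p-1).
\end{equation*}
A connected melonic graph has $F(G_i)=\frac{(D-1)(D-2)}2k(G_i)+(D-1)$. Its recursive melonic pairing achieves $\mF(G_i)=(D-1)k(G_i)+1$, which equals $\frac D2k(G_i)+\frac{F(G_i)}{D-1}$, so $G_i$ is compatible. Substituting these values, the right-hand side above becomes exactly $D+\sum_i(\mF(G_i)-D)$, and the inequality follows. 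The same argument shows that any family of connected compatible graphs has tree-like dominant pairings.

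For the ``only'' part of Case 1, I would use the equality case of Gurau's bound: equality means $\omega(\widehat G)=D\,\omega(G)=0$, i.e.\ $\widehat G$ is melonic. In a melonic $(D+1)$-colored graph, removing elementary melons (degree-zero dipoles) recursively exhibits each color-$0$ edge joining distinct $G_i$ as part of a two-cut. That is precisely the maximal two-cut property of Def.~\ref{def:DEF}.

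\textbf{Cases 2--4 (cyclic, realignment, planar).} These graphs are not compatible in general, and Gurau's bound is no longer tight, so a finer argument is needed. My plan is an induction on the number of color-$0$ edges joining distinct $G_i$. Pick such an edge $e$ and compare $\widehat G$ with the graph obtained by flipping $e$ with a suitable color-$0$ edge. If the pair is a two-cut, Lem.~\ref{lem:faces-when-flip} gives an exact count and the induction hypothesis applies to both pieces. If no such pair is a two-cut, I would show that some flip strictly increases $F_0$ without disconnecting $\gG$. For this I would use the explicit structure of the graphs: the alternation of color blocks $M$, $M_1,M_2,M_3$ along the cycle in the cyclic and realignment cases, and the Euler relation Eq.~\eqref{eq:planar} together with the $D=3$ face structure in the planar case. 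The counting would follow Refs.~\cite{Bonzom:2015axa,Bonzom_Lionni_Rivasseau_2017,Lionni2018,Bonzom:2018btd}.

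The main obstacle is this local-improvement step. One must check, face color by face color, that the flip destroys fewer faces than it creates. The hypotheses $|M|<D/2$ and $|M_3|\ge|M_1|,|M_2|$ enter exactly where they make this increase strict, and that strictness is what gives ``only'' tree-like dominance. When $D$ is even and $|M|=D/2$ the gain can be zero, which is why Case 2 claims only tree-like dominance there rather than exclusive tree-like dominance.
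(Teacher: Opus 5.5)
Your Case~1 argument is essentially right for $D\ge 3$. For connected compatible $G_i$, the bound of Lem.~\ref{lem:Razvans-bound-on-F} equals $D+\sum_i(\mF(G_i)-D)$, which is the mechanism behind Thm.~\ref{main-theo2}, and equality forces $\omega(\widehat G)=0$. There are two caveats.
First, ``degree zero $\Rightarrow$ melonic'' needs at least four colors. For $D=2$, glue two $4$-cycles $C$ (with $\mF(C)=3$) by joining each vertex to its mirror image. This gives $F_0=4=D+2(\mF(C)-D)$ with no two-cut, so the ``only'' part genuinely requires $D\ge 3$.
Second, ``each color-$0$ edge between distinct $G_i$ lies in a two-cut'' is not the maximal two-cut property. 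Your melon-removal induction has to carry along a dominant pairing of each $G_i$, and check that the color-$0$ edges at the two vertices of each of its pairs form a two-cut. The induction does deliver this, but it must be tracked.

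The genuine gap is in Cases~2--4. Their entire content is the step ``if no pair of color-$0$ edges is a two-cut, some flip strictly increases $F_0$''. You leave this step to the references, and as formulated it is false.
Recall that flipping $e,e'$ changes $F_0$ by $2s-D$, where $s$ is the number of colors $c$ such that $e,e'$ lie in a common face of color $0c$. Take $D=3$ and let $H$ be the cube: vertices $\{0,1\}^3$, white or black according to parity, and color-$c$ edges $x\leftrightarrow x\oplus e_c$. It is connected and planar ($F(H)=6=2+k(H)$), and $\bar H\cong H$ via $x\mapsto x\oplus e_1$. Let $\widehat G=\widehat{H\sqcup\bar H}$ be the mirror gluing used in the proof of Lem.~\ref{lem:conditions_Gurau}; it connects two copies of $H$.
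Every face of color $0c$ is then a $4$-cycle whose two color-$0$ edges sit at $c$-adjacent vertices of $H$. Since $H$ has no multiple edges, two color-$0$ edges share a face in at most one color, so \emph{every} flip lowers $F_0$. Moreover, no two of the eight connecting edges form a two-cut.
So $\widehat G$ is a strict local maximum under flips that is not tree-like. This is consistent with the theorem: pairing along the color-$1$ edges gives $\mF(H)\ge 8$, hence $F_0(\widehat G)=12<13\le D+2(\mF(H)-D)$.
Consequently, local moves cannot drive an arbitrary $\widehat G$ to a two-cut configuration. Also, an improving flip need not decrease your induction measure.
What is needed is a global bound: $F_0(\widehat G)\le D+\sum_i(\mF(G_i)-D)$ for all $\widehat G\in\cG^{\conn}_{D+1}(\gG)$, with strict inequality for non-tree-like $\widehat G$ in the ``only'' cases. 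This is exactly what the cited works establish, using family-specific tools such as bijections with stuffed maps and genus-type bounds. The paper does not reprove any of this: Thm.~\ref{thm:families-tree-like-LO} is a compilation of those results.
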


Note that, while with the exception of Ref.~\cite{Lionni2018}, the references listed in the theorem concern $\gG=\{G_1 , \ldots , G_p\}$ where the $G_i$ are \emph{connected} graphs, by Lem.~\ref{lem:tree_like_B_implies_G}, the same results -- having tree-like or only tree-like dominant pairings --  hold even when the graphs $G_i$ are not necessarily connected. Lem.~\ref{lem:tree_like_B_implies_G} also states that if a graph $G_i$ is non-connected, then in any graph $\widehat G$ of $\MD{}^{\conn}(\gG)$, the connected components of $G_i$ belong to different connected components of $\widehat G$, and the graph $G_i$ is the only connection point between these connected components. This generalizes the following fact known in the literature on matrix models with multi-trace interactions: for $D=2$, if the graphs $G_i$ are connected, then $\MD{}^{\conn}(\gG)$ consists of planar ribbon graphs, but if the graphs $G_i$ are not connected, then  $\MD{}^{\conn}(\gG)$ consists of ``trees of planar ribbon graphs'', or ``trees of spheres'', see \eg page 14 of Ref.~\cite{Alvarez-Gaume:1992idg}.

The following statement is proved by the use of Thm~6.4 in Ref.~\cite{Carrozza:2026qcf} and relies on flips of colors different from $0$ (see above Lem.~\ref{lem:faces-when-flip}). 

\begin{theo} \label{thm:families-any-delta}
    For any $\delta\in\bb{N}^\star$, there are infinitely many graphs $G\in\cG_D^{\conn}$ with $\Delta(G)=\delta$. Equivalently, there are infinitely many $G\in\cG_{D}^{\conn}$ with $\mF(G)=\frac D 2 k(G) + \frac{F(G)}{D-1} - \frac{2\delta}{D-1}$. 
\end{theo}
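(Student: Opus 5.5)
Since the statement is attributed to Thm.~6.4 of Ref.~\cite{Carrozza:2026qcf} together with flips of colors different from $0$, the plan is constructive. We start from a connected graph of small, known degree of compatibility and modify it by a local surgery that raises $\Delta$ by a controlled amount. The equivalence of the two formulations is immediate from Eq.~\eqref{eq:Delta_0}: $\Delta(G)=\delta$ is the same as $\mF(G)=\frac{D}{2}k(G)+\frac{F(G)}{D-1}-\frac{2\delta}{D-1}$. So it suffices to produce infinitely many connected $G$ with $\Delta(G)=\delta$.

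\emph{Step 1 (base cases).} We need, for each $\delta$, at least one connected seed with $\Delta=\delta$. One source is Prop.~\ref{prop:families-low-delta}: the cyclic graphs give $\Delta=\frac{|M|(|M|-1)}{2}(k-1)$, and with $|M|=2$ (allowed when $D\ge 4$) this equals $k-1$, which hits every $\delta\in\bb{N}^\star$ by taking $k=\delta+1$. For $D=3$ a different seed is needed, which we defer to the additivity argument of Step~2, starting from a connected graph with $\Delta=1$ such as a realignment moment with $|M_1|=|M_2|=1$ (item~2 of Prop.~\ref{prop:families-low-delta}).

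\emph{Step 2 (additivity under a flip of a color $c\neq 0$).} Given connected $G_1,G_2\in\cG_D^{\conn}$ and edges of color $c$ in each, flip them to obtain a connected $G$. Then $k(G)=k(G_1)+k(G_2)$ and $F(G)=F(G_1)+F(G_2)-(D-1)$, because the faces $ij$ with $c\in\{i,j\}$ merge pairwise as in Lem.~\ref{lem:faces-when-flip} with $D$ replaced by $D-1$. The key claim, which is the content of Thm.~6.4 of Ref.~\cite{Carrozza:2026qcf}, is
\[
\mF(G)=\mF(G_1)+\mF(G_2)-1 ,
\]
i.e. the pair of flipped $c$-edges behaves as a two-cut for the optimal $0$-pairings. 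Granting this, Eq.~\eqref{eq:Delta_0} gives
\[
\Delta(G)=\Delta(G_1)+\Delta(G_2)-\tfrac12(D-1)+\tfrac12(D-1)=\Delta(G_1)+\Delta(G_2).
\]
So $\Delta$ is additive under $c$-flips.

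\emph{Step 3 (infinitely many examples).} Take any seed $G_\delta$ with $\Delta(G_\delta)=\delta$ and repeatedly $c$-flip it with compatible graphs, for instance the melonic $2$-vertex graph or cyclic graphs with $|M|=1$, which have $\Delta=0$. This produces connected graphs with the same $\Delta=\delta$ and strictly increasing $k$, hence infinitely many distinct ones. For $D=3$, where we only have a $\Delta=1$ seed, iterated flips of $\delta$ copies of it yield $\Delta=\delta$.

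\emph{Main obstacle.} The hard step is the lower bound $\mF(G)\ge \mF(G_1)+\mF(G_2)-1$ in Step~2. The upper bound follows from Lem.~\ref{lem:Razvans-bound-on-F}-type counting, or from a direct argument that reversing the flip on any $\widehat G$ loses at most one $0c$-face. For the lower bound, we would explicitly build $\widehat G$ from optimal $\widehat G_1,\widehat G_2$ by routing the $0$-edges so that only the $0c$ faces through the flipped edges merge. Concretely, the flipped $c$-edge pair forms a two-cut, so all $0i$-faces with $i\neq c$ stay intact and only the two $0c$-faces through the flipped edges merge into one. This construction is what we would verify carefully.
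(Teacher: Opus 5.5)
Step~2 asserts more than is true, and the rest of your proof depends on it. The identity $\mF(G)=\mF(G_1)+\mF(G_2)-1$ is indeed the correct translation of $\Delta$-additivity (the displayed ``equivalently'' in Eq.~\eqref{eq:delta_flips} drops the $-1$). But it does \emph{not} hold for arbitrary connected $G_1,G_2$. The paper uses it only under one of two hypotheses: $\{G_1,G_2\}$ has tree-like dominant pairings, or $\Delta(G_1)+\Delta(G_2)<\frac{(D-1)(D-2)}{2}$.

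You also have the difficulty backwards. The inequality $\mF(G)\ge\mF(G_1)+\mF(G_2)-1$ is the easy direction; it is exactly your construction from optimal $\widehat G_1,\widehat G_2$. The reverse inequality is where things fail. An optimal pairing of $G$ may match vertices of the $G_1$-part with vertices of the $G_2$-part. For such a $\widehat G$, the two flipped $c$-edges need not lie on a common $0c$-face, so unflipping only gives $F_0(\widehat G)\le\mFc(\{G_1,G_2\})+1$. Your argument therefore needs $\mFc(\{G_1,G_2\})\le\mF(G_1)+\mF(G_2)-2$, which is a factorization-type condition. It follows from tree-likeness (Lem.~\ref{lem:tree-dom-iff-faces} gives $\mFc=\mF(G_1)+\mF(G_2)-D$), or from Lem.~\ref{lem:Razvans-bound-on-F} combined with the small-$\Delta$ condition. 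It does not hold in general.

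Here is a concrete counterexample. Take $G_1=H$ and $G_2=\bar H$, with $H$ the graph of Fig.~\ref{fig:MST_incomp}, and flip a color-$c$ edge of $H$ with its mirror. Pairing every vertex with its mirror gives $F_0=Dk(H)+1=55>51=\mF(H)+\mF(\bar H)-1$. Hence $\Delta(G)\le 10<20=\Delta(H)+\Delta(\bar H)$.

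This matters most for $D=3$. There $\frac{(D-1)(D-2)}{2}=1$, so the small-$\Delta$ criterion never applies to pieces with $\Delta=1$. Iterated flips of realignment moments must then be justified by their (only) tree-like dominant pairings (Thm.~\ref{thm:families-tree-like-LO}, item~3). You also need that this hypothesis survives the successive flips; this is precisely the input the paper imports from the companion reference.

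For $D\ge 4$, your route can be repaired cheaply and then genuinely simplifies the paper's argument. The cyclic graphs with $|M|=2$ give $\Delta=\delta$ directly, with no flips. For infinitude, restrict Step~3 to flips with the $2$-vertex melonic graph, for which additivity holds unconditionally. Suppose the two inserted vertices $w',b'$ are not paired together, and exchange the color-$0$ edges at $w'$ and $b'$. For each $i\neq c$, this splits the $0i$-face running through the $i$-edge $w'b'$, while the number of $0c$-faces changes by at most one. So some optimal pairing matches $w'$ with $b'$. This gives $\mF(G)=\mF(G_\delta)+D-1$ and $\Delta(G)=\Delta(G_\delta)$, with $k$ increased by one, so the graphs are distinct. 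Flips with cyclic graphs with $|M|=1$, by contrast, would again need one of the two hypotheses once $\delta\ge\frac{(D-1)(D-2)}{2}$.
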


\begin{proof}
    Consider two graphs $G_1, G_2\in\cG_{D}^{\conn}$. Let $c\in\{1, \ldots, D\}$ and select a pair of edges of color $c$ in $G_1$ and $G_2$, respectively. Flipping these two edges produces a new connected graph $G$. As shown in Ref.~\cite{Carrozza:2026qcf}, assuming that $\gG = \paa{G_1,G_2}$ has tree-like dominant pairings or satisfies the condition $\Delta(G_1) + \Delta(G_2) < \frac{(D-1)(D-2)}{2}$, the following relation necessarily holds:
    \begin{equation}\label{eq:delta_flips}
        \Delta(G)=\Delta(G_1)+\Delta(G_2)\,, \quad \rm{or equivalently,} \quad  \mF(G)=\mF(G_1)+\mF(G_2) \,. 
    \end{equation}
    
    Now, let $D \geq 3$ and $B$ be a realignment moment with $\abs{M_1} = \abs{M_2} = 1$. By Prop.~\ref{prop:families-low-delta} and Thm.~\ref{thm:families-tree-like-LO}, we know that $\Delta(B) = 1$ and $B$ admits only tree-like dominant pairings. Applying the result from Ref.~\cite{Carrozza:2026qcf} we just recalled, one can construct a connected $D$-colored graph $H$ by performing $\delta-1$ flips on $\delta$ copies of $B$, yielding: $\Delta(H)=\delta$. 
    
    Furthermore, this construction can be extended to infinite families of graphs while maintaining a compatibility degree of $\delta$. If $\delta < \frac{(D-1)(D-2)}{2}$, flips can be used to incorporate any number of compatible trace-invariants, thanks to Eq.~\eqref{eq:delta_flips}. If $\delta > \frac{(D-1)(D-2)}{2}$, one may proceed similarly with the additional requirement that any compatible trace-invariant incorporated in the graph must also have tree-like dominant pairings. Since \eg $2$-vertex graphs are compatible and have tree-like dominant pairings, this is sufficient to generate an infinite family of graphs with degree of compatibility equal to $\delta$. 
\end{proof}

As a concrete example, we present in Fig.~\ref{fig:graph_Delta_4} a $5$-colored graph with a degree of compatibility equal to four. The figure highlights four realignment moments, each satisfying  $\abs{M_1} = \abs{M_2} = 1$, represented by the four ``circles'' or ``wheels''. Following the proof of Thm.~\ref{thm:families-any-delta}, the degree of compatibility of a graph constructed via four flips of realignment moments and further decorated with flips of $2$-vertex graphs -- as illustrated in Fig.~\ref{fig:graph_Delta_4} -- is $\Delta = 4$.

\begin{figure}[ht]
    \centering
    \includegraphics[width = .85\textwidth]{pdf/Delta_4.pdf}
    \caption{Example of a connected $5$-colored graph with a degree of compatibility: $\Delta = 4$.}
    \label{fig:graph_Delta_4}
\end{figure}

Actually, graphs with $k(G)=k$ built in this way from a fixed collection of $r>1$ building blocks can be put in one-to-one correspondence with decorated unrooted plane trees, so that their number grows  exponentially with $k$.\footnote{More precisely, we expect this number to be equivalent to $c k^{-5/2}z^k$ in the large $k$ limit, with $c >0$ and $z>1$ (see \eg Ref.~\cite{stufler2026probabilistic}).}

\section{First counterexample to large \texorpdfstring{$N$}{N} factorization in complex random tensors} \label{sec:counterex}

A first counterexample to the large-$N$ factorization of trace-invariants  in the context of real random tensors was proposed in Ref.~\cite{Berthold:2026zxk}, thus illustrating the result established probabilistically in Ref.~\cite{Gurau:2025evo}. While the proof given in Ref.~\cite{Gurau:2025evo} naturally extends to the complex case, no counterexample is known in the literature to date. Here, we present an example of a trace-invariant $H$ for which the large-$N$ factorization fails. More concretely, for $H$ the graph with $D=6$ represented on the left-hand side of Fig.~\ref{fig:MST_incomp}, we have:
\begin{equation}
    \mean{\tr_H(T,\bar{T}) \cdot \tr_{\bar{H}}(T,\bar{T})} \underset{N \to \infty}{\sim} \mean{\tr_H(T,\bar{T}) \cdot \tr_{\bar{H}}(T,\bar{T})}_\rm{conn} \,.
\end{equation}
Indeed, a numerical analysis shows that $\mF(H) = 26 < 27 = D k(H)/2$. Hence, the first statement of Lem.~\ref{lem:conditions_Gurau} implies the non-factorization.

\begin{figure}[ht]
	\centering
	\includegraphics[height = 6cm]{pdf/MST_incomp_non_facto_mirror.pdf}
	\caption{Left: an example of incompatible, maximally single-trace $6$-colored graph. Using the vertex labeling in the figure, we have: $\sigma_1 = \id$, $\sigma_2 = (1\, 2\, 3\, 4\, 5\, 6\, 7\, 8\, 9)$, $\sigma_3 = (1\, 5\, 4\, 9\, 2 \,7\, 6\, 8\, 3)$, $\sigma_4 = (1\, 7\, 4\, 3 \,9 \,5\, 2 \,8\, 6)$, $\sigma_5 = (1 \,8 \,2 \,9\, 7\, 5\, 3\, 6\, 4)$ and $\sigma_6 = (1\, 5\, 9\, 7 \,6\, 2\, 4\, 8\, 3)$. Numerically, we found $\mF(H) = 26$, \ie $\Delta(H) = 10$. Right: one of the dominant pairing $\widehat{H \sqcup \bar H} \in \MD{}(H \sqcup \bar{H})$, \ie a $7$-colored graph that realizes the bound $F_0(\widehat{H \sqcup \bar H}) = D k(H) = 54 > 52 = \mF(H) + \mF(\bar H)$.}
	\label{fig:MST_incomp}
\end{figure}

A clarifying comment is in order, in relation to the results of Ref.~\cite{Ferrari:2017jgw} and the definition of tree-like pairing we are relying on in the present work. Let $H$ denote an arbitrary maximally single-trace graph, and let $\gH = \{ \underbrace{H, \ldots , H}_{p \; \mathrm{times}} , \underbrace{\bar H , \ldots , \bar H}_{p \; \mathrm{times}}\}$. In Ref.~\cite{Ferrari:2017jgw}, it was proven that a certain infinite subfamily of connected graphs $\mathcal{M}_{\mathrm{melo}}$ in $\MD{}^{\conn}(\gH)$ (hence, that contributes to $\mean{ \left( \tr_H (T, \bar T ) \tr_{\bar{H}} (T , \bar T)\right)^p }_{\mathrm{conn}}$ at leading order) can be put in one-to-one correspondence with trees. One could therefore expect this result to imply that $\gH$ has tree-like dominant pairings, in the sense of Def.~\ref{def:DEF}. However, this would lead to an inconsistency between the counterexample illustrated in Fig.~\ref{fig:MST_incomp} and our third main theorem, Thm.~\ref{main-theo3}. The resolution of this apparent inconsistency is that $\gH$ does not in fact admit tree-like dominant pairings in the sense of Def.~\ref{def:DEF}. Indeed, in the tree-like combinatorial structure of a graph $\widehat{H}\in \mathcal{M}_{\mathrm{melo}}$ -- which is sometimes known as a generalized melonic graph -- a ``vertex'' of the tree is not an individual building block appearing in $\gH$, but rather a \emph{bilocal} object of the form $H \sqcup \bar H$. By contrast, Def.~\ref{def:DEF} requires any ``vertex'' of the combinatorial tree underlying a tree-like graph $\widehat H$ to be \emph{local} to some $H$ or $\bar H$. This observation provides yet  another crucial difference between two interesting families of tensor models, that are both referred to as \emph{melonic} in the literature: the melonic theories dominated by \emph{local} melonic corrections on the one hand \cite{PhysRevD.85.084037}, which in many ways behave like vector models; and, on the other hand, the melonic theories dominated by \emph{bilocal} melonic corrections \cite{Gurau:2010ba, Bonzom:2011zz, Carrozza:2015adg, Ferrari:2017jgw}, which generate more interesting physics, and can in particular reproduce the main features of the strongly-coupled Sachdev-Ye-Kitaev model \cite{Witten:2016iux, Klebanov:2016xxf}.

\

For maximally single-trace graphs, the condition of Lem.~\ref{lem:conditions_Gurau} is, in fact, also necessary for the non-factorization to hold:
\begin{lem}
\label{lem:non-facto-MST}
 Let $D \geq 3$ and $H \in \cG_D^{\conn}$ be a maximally single-trace graph. Then, the following conditions are equivalent: 
 \begin{enumerate}
     \item $\mF(H)\le \frac D 2 k(H)$, or equivalently $\Delta(H) \geq \frac{D(D-1)}{4}$.
     \item $\{H, \bar H\}$ does not factorize at large $N$.
     \item $\mF(H \sqcup \bar H) = Dk(H)$.
 \end{enumerate}
Furthermore, the inequalities of condition {\it 1} are strict if and only if $\mean{\tr_{H \sqcup \bar H}(T,\bar T)} \underset{N \to \infty}{\sim} \mean{\tr_{H \sqcup \bar H}(T,\bar T)}_{\mathrm{conn}}$.
\end{lem}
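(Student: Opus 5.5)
The key step is an explicit formula for $\mF(H\sqcup\bar H)$ when $H$ is maximally single-trace; the three equivalences will then follow by comparing the factorized and connected contributions.

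\textbf{Rewriting condition 1.} For $H$ maximally single-trace we have $F(H)=\binom{D}{2}$ and $\kappa(H)=1$. Plugging these into Eq.~\eqref{eq:Delta_0} gives
\[
\Delta(H)=\tfrac{D(D-1)}{4}k+\tfrac{D(D-1)}{4}-\tfrac{D-1}{2}\mF(H),
\]
so $\mF(H)\le \frac D2 k$ is equivalent to $\Delta(H)\ge \frac{D(D-1)}{4}$, with strict inequalities corresponding.

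\textbf{Connected pairings of $\{H,\bar H\}$.} Apply Lem.~\ref{lem:Razvans-bound-on-F} to $G=H\sqcup\bar H$ with $\widehat G$ connected, so $\kappa(G)-\kappa(\widehat G)=1$:
\[
F_0(\widehat G)\le \tfrac D2\cdot 2k+\tfrac{2\binom D2}{D-1}-D = Dk .
\]
This bound is attained by the mirror pairing $\widehat{H\sqcup\bar H}$ from the proof of Lem.~\ref{lem:conditions_Gurau}. Hence $\mFc(\{H,\bar H\})=Dk$.

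\textbf{Disconnected pairings.} A $\widehat G\in\cG_{D+1}(H\sqcup\bar H)$ that does not connect $H$ and $\bar H$ has $F_0\le \mF(H)+\mF(\bar H)=2\mF(H)$. Therefore
\[
\mF(H\sqcup\bar H)=\max\bigl(2\mF(H),\,Dk\bigr).
\]

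\textbf{The equivalences.} For $p=2$ the only nontrivial partition in Eq.~\eqref{eq:cond-on-faces-for-facto} is $1_2$, so factorization holds iff $Dk<2\mF(H)$.
\begin{itemize}
\item Condition 1 is equivalent to $2\mF(H)\le Dk$, which by the formula above is equivalent to condition 3, $\mF(H\sqcup\bar H)=Dk$.
\item The same inequality $2\mF(H)\le Dk$ is exactly the negation of $Dk<2\mF(H)$, so it is equivalent to condition 2, non-factorization. The direction $1\Rightarrow2$ is also Lem.~\ref{lem:conditions_Gurau}.
\end{itemize}

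\textbf{The final claim.} Use the moment-cumulant formula Eq.~\eqref{eq:mom-cum-formula}:
\[
\mean{\tr_{H\sqcup\bar H}}=\mean{\tr_H}\,\mean{\tr_{\bar H}}+\mean{\tr_H,\tr_{\bar H}}_{\rm conn}.
\]
The first term scales as $\mu_H^2N^{2\mF(H)-2Dk}$ and the second as $\mu^{\conn}N^{Dk-2Dk}$, with positive constants $\mu_H^2$ and $\mu^{\conn}$.
\begin{itemize}
\item If $2\mF(H)<Dk$, the connected term dominates, giving the asymptotic equivalence with the cumulant.
\item If $2\mF(H)=Dk$, both terms have the same order. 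Since $\mu_H^2>0$, the ratio of the moment to the cumulant tends to $1+\mu_H^2/\mu^{\conn}\neq1$, so the equivalence fails.
\item If $2\mF(H)>Dk$, the product term dominates and the equivalence fails as well.
\end{itemize}
So strictness in condition 1 is equivalent to the moment being asymptotically equal to the connected part.

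\textbf{Expected obstacle.} The main point needing care is saturation of Lemma~\ref{lem:Razvans-bound-on-F} by the mirror pairing. Concretely, one must check that each $0c$-face of the mirror pairing is a 2-cycle $v,\bar v$, which gives $F_0=Dk$. This is immediate, so the whole argument is short.
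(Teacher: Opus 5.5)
Your proof is correct and uses the same two ingredients as the paper's proof. The first is Lem.~\ref{lem:Razvans-bound-on-F} applied with $\kappa(G)-\kappa(\widehat G)=1$, which gives $F_0\le Dk(H)$ for connected $\widehat G$; the second is the mirror pairing, which saturates this bound. The paper arranges these as a cycle $1\Rightarrow 2\Rightarrow 3\Rightarrow 1$, invoking Lem.~\ref{lem:conditions_Gurau} for the first step. Your identity $\mF(H\sqcup\bar H)=\max\bigl(2\mF(H),Dk(H)\bigr)$ instead yields all equivalences at once, and it also settles the final strictness claim, which the paper's proof does not address.

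One small correction: each $0c$-face of the mirror pairing is the $4$-cycle $v,w,\bar w,\bar v$ passing through two color-$0$ edges, not a $2$-cycle $v,\bar v$. This still gives $F_{0c}=k(H)$ for every $c$, hence $F_0=Dk(H)$.
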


\begin{proof}
    \textit{1.} $\Rightarrow$ \textit{2.}: this implication is given by point \textit{1.} of Lem.~\ref{lem:conditions_Gurau}.
        
    \textit{2.} $\Rightarrow$ \textit{3.}: we let $G=H \sqcup \bar H$ and assume that $\{H, \bar H\}$ does not factorize at large $N$, so that $\mF(G) = \mFc(G)$. Applying the inequality of Lem.~\ref{lem:Razvans-bound-on-F}, one therefore has: 
    \begin{equation}
    \label{eq:upp-bound-mFH-MST}
        \mF(G)  \leq \frac D 2 \cdot 2 k(H) + \frac{1}{D-1} \cdot 2 \cdot \frac{D(D-1)}{2} - D(2 - 1) = Dk(H)\,.
    \end{equation}
    Now, since adding an edge of color 0 between every vertex of $H$ and its mirror in $\bar H$ produces a graph $\widehat{G}$  with $F_0(\widehat{G}) = Dk(H)$, one also  has that $\mF(G) \geq Dk(H)$. Combining both inequalities leads to $\mF(G) = D k(H)$.
    
    \textit{3.} $\Rightarrow$ \textit{1.}: we suppose that $\mF(G) = Dk(H)$. The  $(D+1)$-colored graph $\widehat{G }$ introduced above is connected and satisfies $F_0(\widehat{G }) = Dk(H)$. Hence, $\paa{H, \bar H}$ does not factorize at large $N$, thus implying that $2 \mF(H) \leq \mF( G) = Dk(H)$.
\end{proof}

\

The following result shows that any maximally single-trace graph $H$ such that $\{H, \bar H\}$ does not factorize at large $N$ provides an example such that the leading terms of all cumulants in the expansion, Eq.~\eqref{eq:mom-cum-formula}, contribute to the leading term of $\mean{\bigl(\tr_G(T,\bar T)\bigr)^p}$, \ie if $\mathbf{G} = \paa{G_1,\dots,G_p}$ with $G_i= G = H \sqcup \bar H$ for any $i \in \{1, \ldots, p\}$, we get:
\begin{equation}
\label{eq:result-mst-nonfacto}
    \mean{\bigl(\tr_G(T,\bar T)\bigr)^p} \underset{N \to \infty}{\sim}
   \sum_{\pi \in \cal{P}(p)} \prod_{B \in \pi} \mean{\paa{\tr_{G_i}(T,\bar T)}_{i \in B}}_\mathrm{conn}\,,
\end{equation}
 where the label ``$\mathrm{conn}$'' refers to connectedness relative to $\mathbf{G}$.\footnote{In particular, we recall that, while any graph $\widehat G_B \in \cG_{D+1}^{\conn}(\{ G_i \}_{i \in B})$ contributing to $\mean{\paa{\tr_{G_i}(T,\bar T)}_{i \in B}}_\mathrm{conn}$ is not necessarily connected, it must be such that $K(\widehat G_B ; \{ G_i \}_{i \in B})$ is connected.}
More precisely, we get the following. 
    
\begin{prop} 
\label{prop:all-cum-LO}
    Let $D \geq 3$, $p \in \bb{N}^*$, $H \in \cG_D^{\conn}$ a maximally single-trace graph and define $G = H \sqcup \bar H$. If $\{H, \bar H\}$ does not factorize at large $N$, then one has the following large $N$ limits for the moments and cumulants of $\tr_G(T,\bar T)$:
    \begin{align}
    \label{eq:asymptotics-cumulants-HH}
      &\lim_{N\rightarrow \infty}  \mean{\paa{N^{\frac D 2 k(G)}\tr_{G_i}(T,\bar T)}_{1\le i \le p}}_\mathrm{conn} = \mu_{\paa{G_i}_{1\le i \le p}}^{\conn}\;,\\
      & \lim_{N\rightarrow \infty}  \mean{\Bigl(N^{\frac D 2 k(G)}\tr_G(T,\bar T)\Bigr)^p} =  \sum_{\pi \in \cal{P}(p)} \prod_{B \in \pi} \mu_{\paa{G_i}_{i \in B}}^{\conn}\;.
        \label{eq:asymptotics-moments-HH}
    \end{align}
   If the sequence $\{\mu_{\paa{G_i}_{1\le i \le p}}^{\conn}\}_{p\in \mathbb{N}^\star}$ is the cumulant sequence of a unique  real random variable $x$, then the random variable $N^{\frac D 2 k(G)}\tr_G(T,\bar T)$ converges in distribution to  $x$.  
\end{prop}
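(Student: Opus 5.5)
The key claim is that for $\mathbf{G}=\{G_1,\dots,G_p\}$ with every $G_i=G=H\sqcup\bar H$, the connected maximum is
\begin{equation*}
\mFc(\{G_i\}_{i\in B}) = \tfrac D2\,k(G)\,|B| = Dk(H)\,|B|
\end{equation*}
for every nonempty $B$. Once this holds, Eq.~\eqref{eq:partial_cum} shows that each cumulant scales as $N^{\mFc - Dk(G)|B|}=N^{-\frac D2 k(G)|B|}$. Multiplying by $N^{\frac D2 k(G)|B|}$, the leading coefficient is exactly $\mu^{\conn}$, which gives Eq.~\eqref{eq:asymptotics-cumulants-HH}. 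Then every term of the moment-cumulant formula Eq.~\eqref{eq:mom-cum-formula} has the same order $N^{-\frac D2 k(G)p}$, so after rescaling we obtain Eq.~\eqref{eq:asymptotics-moments-HH}. The last statement is standard: the moments of the rescaled variable converge to those of $x$. If $x$ is uniquely determined by its cumulants, equivalently by its moments, the method of moments yields convergence in distribution.

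For the lower bound, I take $|B|=q$ and pair each copy $G_i=H_i\sqcup\bar H_i$ with its own mirror: every vertex of $H_i$ is joined by a color-$0$ edge to its image in $\bar H_i$. This contributes $Dk(H)$ faces per copy. I then glue the $q$ resulting graphs in a chain by $q-1$ flips of color-$0$ edges. By Lem.~\ref{lem:faces-when-flip}, each flip costs exactly $D$ faces, so this construction alone gives only $qDk(H)-D(q-1)$, which is too weak. The fix is to use instead a connecting pairing that is a mirror pairing between \emph{different} copies. Concretely, I pair $H_i$ with $\bar H_{i+1}$ cyclically, indices mod $q$. Each such pair $H_i\sqcup\bar H_{i+1}$ is mirror-paired and contributes $Dk(H)$ faces. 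For $q\ge 2$ the auxiliary graph $K(\widehat G;\{G_i\}_{i\in B})$ is then a cycle, hence connected, and the total is $qDk(H)$. For $q=1$ the claim is already condition 3 of Lem.~\ref{lem:non-facto-MST}, since $\mF(G)=Dk(H)$ and the mirror pairing connects $\{G\}$.

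For the upper bound, $\mFc(\{G_i\}_{i\in B})\le \mF(G^{\sqcup q})$, and I apply Lem.~\ref{lem:Razvans-bound-on-F} to the disjoint union of $q$ copies of $G$. Here $k=2qk(H)$, and $F=2q\cdot\frac{D(D-1)}2$ because $H$ and $\bar H$ are maximally single-trace. We have $\kappa(G^{\sqcup q})=2q$ and $\kappa(\widehat G)\ge1$, but that only gives the bound $qDk(H)+qD-D(2q-\kappa(\widehat G))$, which requires control of $\kappa(\widehat G)$. This is the main obstacle. I would apply the lemma componentwise instead: for each connected component $C$ of $\widehat G$ supported on $m_C$ graphs among the $H$'s and $\bar H$'s, with $k_C$ black vertices, it gives
\begin{equation*}
F_0(C)\le \tfrac D2 k_C + m_C D - D(m_C-1)=\tfrac D2k_C + D.
\end{equation*}
For $m_C=1$ it gives $\mF(H)$ or $\mF(\bar H)$ instead, which is $\le \frac D2 k(H)$ by condition 1 of Lem.~\ref{lem:non-facto-MST}, and this is at most $\frac D2 k_C$ since $k_C=k(H)$.

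Summing over components, a component with $m_C\ge2$ contributes at most $\frac D2 k_C + D\le \frac D2k_C+\frac D2 k_C(m_C\cdots)$. More usefully, since $k_C\ge m_C\,k(H)\ge 2$ and the total is $\sum_C \frac D2 k_C=qDk(H)$, I must absorb the extra $+D$ terms. This is where I expect the real difficulty: the bound gives $qDk(H)+D\cdot\#\{C:m_C\ge2\}$. I would try to sharpen it using that a component $C$ with $m_C\ge 2$ and $F_0(C)>\frac D2 k_C$ must realize Gurau's bound nearly with equality, and argue via the $\Delta_0$ form of Lem.~\ref{lem:Razvans-bound-on-F} that strictness of $\mF(H)<\frac D2k(H)$, or parity, kills the surplus.

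Failing that, I would fall back on the observation that leading-order equality of all terms is only needed up to the common order. It suffices to show that $\mFc$ is additive, $\mFc(\mathbf{G}_B)=|B|\,\mFc(\{G\})$. The matching upper bound would then come from splitting any connected $\widehat G$ into two-cut pieces, each bounded by $Dk(H)\cdot(\#\text{copies})$ through the per-copy bound Eq.~\eqref{eq:upp-bound-mFH-MST}.
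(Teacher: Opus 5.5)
You have the same overall strategy as the paper: the cyclic mirror pairing $H_i \leftrightarrow \bar H_{i+1}$ for the lower bound, and for the upper bound a componentwise use of Lem.~\ref{lem:Razvans-bound-on-F} together with $\mF(H)\le\frac D2 k(H)$ for isolated copies. As written, however, the upper bound is never proved. The "surplus of $+D$ per component" that you flag as the main obstacle comes from an arithmetic slip.

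Take a component $C$ supported on $m_C$ maximally single-trace pieces. It has $F=m_C\frac{D(D-1)}{2}$, so $\frac{F}{D-1}=\frac D2 m_C$, not $m_C D$. Your own global computation gets this right: $qD$ for $2q$ pieces, i.e.\ $\frac D2$ per piece. With the correct constant, the lemma gives, for $m_C\ge 2$,
\[
F_0(C)\le \tfrac D2 k_C+\tfrac D2 m_C-D(m_C-1)=\tfrac D2 k_C-\tfrac D2(m_C-2)\le \tfrac D2 k_C ,
\]
so there is no surplus at all. Combine this with your bound $F_0(C)\le\mF(H)\le\frac D2 k_C$ for $m_C=1$ and sum over components. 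This gives $F_0(\widehat G)\le\frac D2\sum_C k_C=qDk(H)$ for every $\widehat G\in\cG_{D+1}(G^{\sqcup q})$. That closes the argument exactly as in the paper, which writes it as $pDk(H)-\frac D2\sum_{|B|\ge2}(|B|-2)$.

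The rescue routes you sketch would not have worked. Non-factorization only yields the non-strict inequality $\mF(H)\le\frac D2 k(H)$ through Lem.~\ref{lem:non-facto-MST}, so there is no strictness to exploit. Also, nothing guarantees that a connected $\widehat G$ contains a two-cut, so you cannot in general split it into pieces to which Eq.~\eqref{eq:upp-bound-mFH-MST} applies. Once the constant is fixed, the rest of your argument goes through: each cumulant has exact order $N^{-\frac D2 k(G)|B|}$ with nonzero coefficient $\mu^{\conn}$, all terms of the moment-cumulant formula are of the same order, and the method of moments gives the final claim.
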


\proof With the same notations, we let $\tilde G=\bigsqcup_i G_i$. The contributions in the sum of Eq.~\eqref{eq:mom-cum-formula} that correspond to a given partition $\pi$ can equivalently be seen as the contributions from the graphs in $\cG_{D+1}(\tilde G)$ for which the connected components relative to $\mathbf{G}$  induce the partition $\pi\in \mathcal{P}(p)$. The proof of the proposition relies on two steps: we first show that any graph $\widehat I\in \cG_{D+1}(\tilde G)$ satisfies $F_0(\widehat I)\le pDk(H)$, and we then  exhibit, for every $\pi$, a contribution $\widehat I\in \cG_{D+1}(\tilde G)$  satisfying $F_0(\widehat I)=pDk(H)$, thereby proving the statement. 

Considering  $\widehat I\in \cG_{D+1}(\tilde G)$, we label the copies $H$'s and $\bar H$'s from $1$ to $2p$  and let $\eta \in \cal{P}(2p)$ be the partition induced by the connected components of $\widehat I$ (in the usual sense): one has $ \widehat I = \bigsqcup_{B \in \eta} \widehat I_B $, where the graphs $\widehat I_B$ are connected. We then rewrite the number of color-$0$ faces of $\widehat I$ as: 
   \begin{equation}
       F_0(\widehat I) = \sum_{B \in \eta} F_0(\widehat I_B) = \sum_{\substack{B \in \eta \\ \abs{B} = 1}} F_0(\widehat I_B) + \sum_{\substack{B \in \eta \\ \abs{B} \geq 2}} F_0(\widehat I_B)  \leq \sum_{\substack{B \in \eta \\ \abs{B} = 1}} \mF(H)  + \sum_{\substack{B \in \eta \\ \abs{B} \geq 2}} \mFc(\mathbf{I_B}) \,,
   \end{equation}
   where $\mathbf{I_B}$ is the collection of copies of $H$ and $\bar H$ whose labels are in $B$ (the connected components of the graph obtained from $\widehat I_B$ by removing  the color-$0$ edges).
   Since by assumption $\{H, \bar H\}$ does not factorize at large $N$, we know from Lem.~\ref{lem:non-facto-MST} that $\mF(H)\le Dk(H)/2$. We use Lem.~\ref{lem:Razvans-bound-on-F} to bound the terms 
   for the blocks $B$ of $\eta$ with $\abs{B} \geq 2$ by $\frac D 2 k(G)\lvert B\rvert + \frac D 2\lvert B \rvert - D(\lvert B\rvert -1)$, thus getting the bound: 
   \begin{equation}
   \label{eq:bound-mst-faces}
       F_0(\widehat I) \leq \sum_{\substack{B \in \eta \\ \abs{B} = 1}} \frac{D}{2} k(H)  + \sum_{\substack{B \in \eta \\ \abs{B} \geq 2}} \pac{\frac D 2 (k(H) - 1) \abs{B}  + D} = p D k(H) -  \frac D 2 \sum_{\substack{B \in \eta \\ \abs{B} \geq 2}} (\abs{B} - 2) \leq  p D k(H) \,,
   \end{equation}
   where the last equality comes from the relation $2p = \displaystyle\sum_{B \in \eta} \abs{B}$.
   This shows that any $\widehat I\in \cG_{D+1}(\tilde G)$ satisfies  $F_0(\widehat I)\le pDk(H)$, where the inequality is strict whenever $\eta$ contains a block with three or more elements.
   
   For any $\pi\in \mathcal{P}(p)$, we can easily build a graph in $\cG_{D+1}(\tilde G)$ which saturates this bound. 
   Considering the $p$ labeled copies $G_i=H\sqcup \widehat H$ and a permutation $\sigma\in S_p$ whose cycles have supports corresponding to the blocks of $\pi$, we add  for every $1\le i\le p$ some edges of color 0 between the  vertices of the $H$ belonging to $G_i$ with their mirror vertices in the copy of $\bar H$ belonging to $G_{\sigma(i)}$, and denote by $\widehat J_\sigma$ the resulting graph. Two examples are illustrated in  Fig.~\ref{fig:all_cum}, corresponding to the partitions $0_p$ and $1_p$. The  connected components relative to $\mathbf{G}$ of $\widehat J_\sigma$ induce the partition $\pi\in \mathcal{P}(p)$. Moreover, one has  $F_0(\widehat J_\sigma) = pDk(H)$. 
   
   Together with Eq.~\eqref{eq:partial_cum}, this shows that the cumulants $ \mean{\paa{\tr_{G_i}(T,\bar T)}_{1\le i \le p}}_\mathrm{conn}$ scale as $N^{ - \frac{pD}{2} k(G) }$, which proves  Eq.~\eqref{eq:asymptotics-cumulants-HH}. All terms in the expansion Eq.~\eqref{eq:mom-cum-formula} of $\mean{\Bigl(N^{\frac D 2 k(G)}\tr_G(T,\bar T)\Bigr)^p}$ therefore also scale as $N^{\frac{pD}2k(G)}$. This proves Eq.~\eqref{eq:asymptotics-moments-HH}. 
   \qed 
 
    \begin{figure}[ht]
    	\centering
    	\includegraphics[height = 4cm]{pdf/All_cum_dom-1.pdf}
    	\caption{Examples of $\widehat J_\sigma$ with $\pi = 0_p$ (left) and $\pi = 1_p$ (right).}
    	\label{fig:all_cum}
    \end{figure}

\ 

\begin{ex}
    The graph $H$, shown on the left in Fig.~\ref{fig:MST_incomp}, satisfies the conditions of Prop.~\ref{prop:all-cum-LO}, thereby ensuring that all cumulants contribute to the dominant term of $\mean{\bigl(\tr_G(T,\bar T)\bigr)^p}$ for any $p$, and that:
    \begin{equation}
       - \ln \mean{\bigl(\tr_G(T,\bar T)\bigr)^p} \underset{N \to \infty}{\sim} 54 p \ln N \,.
    \end{equation}
\end{ex}

We provide below two situations for which the asymptotic distribution is determined explicitly. 

\begin{prop}\label{prop:all-cum-LO-distr} 
    Let $D \geq 3$, $p \in \bb{N}^*$, $H \in \cG_D^{\conn}$ a maximally single-trace graph and define $G = H \sqcup \bar H$. If  $\mF(H)< \frac D 2 k(H)$, or equivalently $\mean{\tr_G(T,\bar T)} \underset{N \to \infty}{\sim} \mean{\tr_G(T,\bar T)}_{\mathrm{conn}}$, then: 
    \begin{itemize}
   \item If $\mF^{\conn}(H\sqcup H)<\mF^{\conn}(H\sqcup \bar H)$, then the random variable  $N^{\frac{D}{2}k(G)}\tr_G(T,\bar T)$ converges in distribution to an exponential random variable of parameter $1/\mu^{\conn}_G$ ({\it i.e.}~a real random variable supported on $[0, \+\infty)$ and of probability density function $x \mapsto \frac 1 {\mu^{\conn}_G}\mathrm{e}^{- x / {\mu^{\conn}_G}}$):
   \begin{equation}
          \label{eq:exp-rr-mst}
      \lim_{N\rightarrow \infty} \mean{\Bigl\{N^{\frac D 2 k(G)}\tr_{G_i}(T,\bar T)\Bigr\}_{1\le i \le p}}_{\mathrm{conn}} =  (p-1)! (\mu^{\conn}_G)^p, \qquad \lim_{N\rightarrow \infty} \mean{\Bigl(N^{\frac D 2 k(G)}\tr_G(T,\bar T)\Bigr)^p} = p! (\mu^{\conn}_G)^p\;.
    \end{equation}
    \item If $\mF^{\conn}(H\sqcup H)=\mF^{\conn}(H\sqcup \bar H)$ and $\mu^{\conn}_G = \mu^{\conn}_{H\sqcup H}$ (which hold in particular if $H=\bar H$), then the random variable $N^{\frac{D}{2}k(G)}\tr_G(T,\bar T)$ converges in distribution to a Gamma random variable of parameters $(\frac 1 2, 2\mu^{\conn}_G)$ ({\it i.e.}~the real random variable supported on $]0,\infty)$  and of probability density function $x \mapsto \frac 1 {\sqrt{x \mu^{\conn}_G \pi}}\mathrm{e}^{- x / {\mu^{\conn}_G}}$):
       \begin{equation}
          \label{eq:exp-rr-mst2}
          \lim_{N\rightarrow \infty} \mean{\Bigl\{N^{\frac D 2 k(G)}\tr_{G_i}(T,\bar T)\Bigr\}_{1\le i \le p}}_{\mathrm{conn}}\hspace{-0.2ex} =  \frac 1 2 (p-1)! (2\mu^{\conn}_G)^p, \quad
      \lim_{N\rightarrow \infty} \mean{\Bigl(N^{\frac D 2 k(G)}\tr_G(T,\bar T)\Bigr)^p} = (2p-1)!! (\mu^{\conn}_G )^p,
    \end{equation}
    where $(2p-1)!!=\frac{(2p)!}{2^p p!}$.
    \end{itemize}
\end{prop}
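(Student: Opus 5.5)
The plan is to build on the proof of Prop.~\ref{prop:all-cum-LO}. By Eqs.~\eqref{eq:asymptotics-cumulants-HH} and \eqref{eq:asymptotics-moments-HH}, it suffices to compute $\mu^{\conn}_{\{G_i\}_{1\le i\le p}}$ explicitly for $p$ copies $G_i=H\sqcup\bar H$. The computation then has to be matched with the cumulants of the claimed limit laws. An exponential law of parameter $\theta$ has cumulants $(p-1)!\,\theta^p$, and a Gamma$(\frac12,2\mu)$ law has cumulants $\frac12(p-1)!\,(2\mu)^p$. Both laws are determined by their moments (Carleman's condition), so the method of moments gives convergence in distribution, as in the last sentence of Prop.~\ref{prop:all-cum-LO}.

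\emph{Step 1: characterize the dominant graphs.} We revisit Eq.~\eqref{eq:bound-mst-faces} for $\widehat I\in\cG_{D+1}(\tilde G)$ with connected-component partition $\eta\in\cP(2p)$ of the $2p$ copies of $H,\bar H$.
\begin{itemize}
\item The bound $pDk(H)$ is strict if $\eta$ has a block of size $\ge 3$.
\item Under the strict hypothesis $\mF(H)<\frac D2 k(H)$, the bound is also strict if $\eta$ has a singleton block.
\end{itemize}
Hence every graph reaching $F_0=pDk(H)$ has $\eta$ a perfect matching of the $2p$ copies. Each pair block $\{A,A'\}$ must then satisfy $F_0(\widehat I_B)=Dk(H)$, which is the Lem.~\ref{lem:Razvans-bound-on-F} bound for two copies. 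Such a pair can be $H$--$\bar H$, $H$--$H$ or $\bar H$--$\bar H$. It achieves $Dk(H)$ exactly when $\mFc(A\sqcup A')=Dk(H)$, and this holds for $H$--$\bar H$ by Lem.~\ref{lem:non-facto-MST}.

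\emph{Step 2: factorization of the counting.} The count $\mu$ is a number of permutations $\nu$ with labeled vertices, and $F_0$ is additive over components. So for a fixed perfect matching the number of dominant $\nu$ factorizes into a product over pairs, each pair contributing $\mu^{\conn}_{A\sqcup A'}$.
\begin{itemize}
\item \emph{First case.} Only $H$--$\bar H$ pairs are dominant, each with weight $\mu^{\conn}_G$. Such matchings correspond to permutations $\sigma\in S_p$: the $H$ of $G_i$ is matched with the $\bar H$ of $G_{\sigma(i)}$, and pairings within the same $G_i$ are fixed points of $\sigma$. The auxiliary graph $K$ on $\{G_i\}$ is connected iff $\sigma$ is a $p$-cycle. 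This gives $\mu^{\conn}_{\{G_i\}}=(p-1)!(\mu^{\conn}_G)^p$, and summing over all $\sigma$ gives the moment $p!(\mu^{\conn}_G)^p$, which is Eq.~\eqref{eq:exp-rr-mst}.
\item \emph{Second case.} All pair types are dominant with the same weight $\mu^{\conn}_G$. Here I use $\mu^{\conn}_{H\sqcup H}=\mu^{\conn}_G$ and the fact that $\bar H\sqcup\bar H$ is the conjugate of $H\sqcup H$, so it has the same $\mF$ and $\mu$. So we count all perfect matchings of $2p$ points grouped into $p$ labeled boxes of two points each. The total is $(2p-1)!!$. The matchings whose box-graph is connected are single cycles through all boxes, numbering $2^{p-1}(p-1)!$. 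This yields Eq.~\eqref{eq:exp-rr-mst2}.
\end{itemize}

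\emph{Main obstacle.} I expect Step 1 to need the most care. One must check that the pair blocks of $\eta$ reproduce exactly the leading coefficient $\mu^{\conn}$ of each pair cumulant, including the degeneracy factors $\Xi$. One must also check that no non-matching $\eta$ contributes at order $pDk(H)$, which is where the strict inequality $\mF(H)<\frac D2k(H)$ is essential. The combinatorial counts in Step 2 are standard: cyclic arrangements of boxes, with a factor $2$ for the orientation of each box.
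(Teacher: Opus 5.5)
Your proposal is correct and follows the paper's proof. Saturating Eq.~\eqref{eq:bound-mst-faces} under the strict hypothesis $\mF(H)<\frac D2 k(H)$ forces $\eta$ to be a perfect matching of the $2p$ copies. The connected contributions are then $p$-cycles of $H$--$\bar H$ pairs in the first case, and cycles through the $p$ boxes with a two-fold choice per box, giving $2^{p-1}(p-1)!$, in the second. Your direct count of the moments as $(2p-1)!!$ perfect matchings, and your explicit appeal to moment-determinacy of the exponential and Gamma laws, are small additions the paper leaves implicit.
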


\proof    We assume that $\mF(H)< \frac D 2 k(H)$ and $\mF^{\conn}(H\sqcup H)<\mF^{\conn}(H\sqcup \bar H)$. With the same notations as the proof of the last proposition, in  Eq.~\eqref{eq:bound-mst-faces}, for the inequality to be saturated, $\eta$ must only have blocks of size two, containing a copy of $H$ and a copy of $\bar H$. This shows that the elements of $\MD{}^{\conn}(\gG)$ are of the form $\widehat J_\sigma$ for $\sigma$ a cycle of length $p$, or a graph of similar form where the restriction to the copy of $H$ belonging to $G_i$ and the copy of $\bar H$ belonging to $G_{\sigma(i)}$ can be connected in any way which attains the maximal value $\mF(G)$. There are $(p-1)!(\mu^{\conn}_G)^p$ such graphs, which shows that the cumulants and moments of $N^{\frac{D}{2}k(G)}\tr_G(T,\bar T)$ converge to those of an exponential random variable of parameter $1/\mu^{\conn}_G$, given in Eq.~\eqref{eq:exp-rr-mst}.  
   
Assuming now $\mF(H)< \frac D 2 k(H)$ and $\mF^{\conn}(H\sqcup H)=\mF^{\conn}(H\sqcup \bar H)$ as well as $\mu^{\conn}_G = \mu^{\conn}_{H\sqcup H}$, $\eta$ must only have blocks of size two, but it may contain two copies of $H$ or two copies of $\bar H$. For every $\sigma$ consisting of a cycle of length $p$, one has to choose for each block of $\eta$ whether   it consists of two copies of $H$ or  $\bar H$ (one belonging to $G_i$ and the other one to $G_{\sigma(i)}$), or a copy of $H$ and a  copy of $\bar H$. Due to the cyclic structure, we only have this freedom for $p-1$ blocks. There are in total $(p-1)!2^{p-1}(\mu^{\conn}_G)^p$ such graphs, which shows that the cumulants and moments of $N^{\frac{D}{2}k(G)}\tr_G(T,\bar T)$ converge to those of a Gamma distribution of parameters $(\frac 1 2, 2\mu^{\conn}_G)$.\qed 
   
   \

One can in principle characterize the asymptotic distribution in the other cases not considered in the previous proposition. For instance, if $\mF(H)= \frac D 2 k(H)$, then, in addition to the graphs with a cyclic structure corresponding to blocks of $\eta$ of size two, one must consider the possibility for $\eta$ to be of size one. In this case, there are also contributions to the cumulants that have a chain structure that starts and ends on a block of $\eta$ of size one. If for instance $\mF^{\conn}(H\sqcup H)<\mF^{\conn}(H\sqcup \bar H)$, the cumulant therefore appears to be $(p-1)!(\mu^{\conn}_G)^p +  p! (\mu^{\conn}_G)^{p-1}(\mu^{\conn}_H)^2$. If $\mF(H)= \frac D 2 k(H)$, $\mF^{\conn}(H\sqcup H)=\mF^{\conn}(H\sqcup \bar H)$ and $\mu^{\conn}_G \neq \mu^{\conn}_{H\sqcup H}$, we find  the first four cumulants to be $\mu^{\conn}_G$ for $p=1$, $(\mu^{\conn}_G)^2 + (\mu^{\conn}_{H^2})^2$ for $p=2$, $2[(\mu^{\conn}_G)^3 + 3 \mu^{\conn}_G (\mu^{\conn}_{H^2})^2]$ for $p=3$, and $3![(\mu^{\conn}_G)^4 + 6 (\mu^{\conn}_G)^2 (\mu^{\conn}_{H^2})^2 + (\mu^{\conn}_{H^2})^4]$ for $p=4$, and so on. We leave these other cases for future work.

\section{Large \texorpdfstring{$N$}{N} factorization of families of trace-invariants} \label{sec:largeN}

\paragraph{First main theorem.}The definition of large $N$ factorization of a graph over its connected components was given in Def.~\ref{def:facto}. Our first main theorem provides a sufficient combinatorial bound for large $N$ factorization to hold, in a similar spirit as Lem.~\ref{lem:conditions_Gurau} but with notable differences.
\begin{theo}
\label{main-theo1}
Let $D \geq 2$ and $G \in\cG_{D}$. We denote the connected components of $G$ by $G_1,\ldots,G_{\kappa}$. If the following bound holds:
\begin{equation}
\label{eq:assumpt-2}
    \sum_{i=1}^{\kappa}\mF(G_i) > \frac {D} 2 k(G)  + \frac{F(G)}{D-1}  - D 
    \,,
\end{equation}
then $G$ factorizes over its connected components at large $N$.
\end{theo}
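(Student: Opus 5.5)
The plan is to check the criterion Eq.~\eqref{eq:cond-on-faces-for-facto} with $p=\kappa$ and $\gG=\{G_1,\ldots,G_\kappa\}$ the connected components of $G$. So we must show that for every partition $\pi\neq 0_\kappa$ of $\{1,\ldots,\kappa\}$ one has $\sum_{B\in\pi}\mFc(\{G_i\}_{i\in B})<\sum_{i}\mF(G_i)$. Since the $G_i$ are connected, any $\widehat G_B\in\cG^{\conn}_{D+1}(\{G_i\}_{i\in B})$ is a connected $(D+1)$-colored graph. Lem.~\ref{lem:Razvans-bound-on-F} applied to $G_B=\bigsqcup_{i\in B}G_i$ with $\kappa(G_B)=|B|$ and $\kappa(\widehat G_B)=1$ then gives
\begin{equation}
\mFc(\{G_i\}_{i\in B})\le \frac D2\sum_{i\in B}k(G_i)+\frac{1}{D-1}\sum_{i\in B}F(G_i)-D(|B|-1)\,.
\end{equation}

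For singleton blocks we simply use $\mFc(\{G_i\})=\mF(G_i)$. For the non-singleton blocks we need to compare the above Gurau bound with $\sum_{i\in B}\mF(G_i)$. The natural route is to rewrite the bound via the degree of compatibility: $\frac D2 k(G_i)+\frac{F(G_i)}{D-1}=\mF(G_i)+\frac{2}{D-1}\Delta(G_i)$. The hypothesis Eq.~\eqref{eq:assumpt-2} then reads $\frac{2}{D-1}\sum_{i=1}^\kappa\Delta(G_i)<D$, and the bound on a block becomes
\begin{equation}
\mFc(\{G_i\}_{i\in B})\le\sum_{i\in B}\mF(G_i)+\frac{2}{D-1}\sum_{i\in B}\Delta(G_i)-D(|B|-1)\,.
\end{equation}

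Now fix $\pi\neq0_\kappa$. Summing over blocks, and using $\Delta(G_i)\ge0$ (which is Lem.~\ref{lem:Razvans-bound-on-F} itself, applied to a single connected $G_i$), we get
\begin{equation}
\sum_{B\in\pi}\mFc(\{G_i\}_{i\in B})\le\sum_{i=1}^\kappa\mF(G_i)+\frac{2}{D-1}\sum_{i=1}^\kappa\Delta(G_i)-D\sum_{B\in\pi}(|B|-1)\,.
\end{equation}
Since $\pi\neq0_\kappa$, we have $\sum_B(|B|-1)\ge1$. The hypothesis $\frac{2}{D-1}\sum_i\Delta(G_i)<D$ then yields the strict inequality, hence factorization by the equivalence stated before Eq.~\eqref{eq:cond-on-faces-for-facto}.

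The main point requiring care is this bookkeeping. Additivity of $k$ and $F$ over components must hold (which is immediate). The extra $\Delta$ contributions from singleton blocks must be nonnegative, so that dropping them is allowed. The hypothesis must involve the full $F(G)$ and $k(G)$ rather than block-wise quantities, which is exactly what makes a single global assumption suffice for all partitions simultaneously. I do not expect a real obstacle beyond verifying that the constant $-D$ in Eq.~\eqref{eq:assumpt-2} matches the worst case, namely a single block of size two.
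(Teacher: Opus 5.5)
Your proof is correct and uses the same single ingredient as the paper: the penalty $-D\bigl(\kappa(G)-\kappa(\widehat G)\bigr)$ in Lem.~\ref{lem:Razvans-bound-on-F}, with the hypothesis saying that even one lost connected component cannot be afforded. The paper applies that bound once to the whole $\widehat G\in\cG_{D+1}(G)$, starting from the a priori weaker hypothesis $\mF(G)>\frac D2 k(G)+\frac{F(G)}{D-1}-D$, which yields the equivalence of the two conditions as a by-product; you apply it block by block and sum via the $\Delta$ rewriting, which is the same computation.
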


\proof 
    We first show that the \apriori weaker assumption
    \begin{equation}
    \label{eq:assumpt-1}
        \mF(G) > \frac {D} 2 k(G) + \frac{F(G)}{D-1}  - D \;,
    \end{equation}
    implies that $G$ factorizes over its connected components. 
    Assuming  Eq.~\eqref{eq:assumpt-1}, we consider a $\widehat G\in \cG_{D+1}(G)$ such that $F_0(\widehat G) > \frac {D} 2 k(G) + \frac{F(G)}{D-1}  - D $. Lem.~\ref{lem:Razvans-bound-on-F} gives an upper bound for  $F_0(\widehat G) $, so that:
    \begin{equation}
        \frac {D} 2 k(G)  + \frac{F(G)}{D-1}  - D  <  F_0(\widehat G)   \le \frac D 2 k(G) + \frac{F(G)} {D-1}  -D\bigl(\kappa(G) - \kappa(\widehat G)\bigr)\;, 
    \end{equation}
    and in particular, since  $\kappa(G) \ge \kappa(\widehat G)$ (adding the edges of color 0 can only decrease the number of connected components): 
    \begin{equation}
        0\le \kappa(G) - \kappa(\widehat G) <  1\;.  
    \end{equation}
    Since $\kappa(G) - \kappa(\widehat G)$ is an integer, it must vanish, so that $\widehat G$ is a disjoint union $\widehat{G_1}\sqcup \ldots \sqcup \widehat{G_\kappa}$, where $\widehat{G_{i}}\in \cG_{D+1}(G_i)$ for any $i$. Any graph $\widehat G$ such that   $F_0(\widehat G) = \mF(G)= \max_{\widehat G \in \cG_{D+1}(G)}F_0(\widehat G)$ is therefore necessarily of this form:  $G$ factorizes over its connected components. 

    From Eq.~\eqref{eq:exp-and-asympt-moments} and Eq.~\eqref{eq:mom-cum-formula}, it is always true that $\mF(G) \ge  \sum_{i=1}^{\kappa}\mF(G_i)$. As a consequence, the condition Eq.~\eqref{eq:assumpt-1} is necessarily satisfied if Eq.~\eqref{eq:assumpt-2} holds. \qed

\ 

It may \apriori be possible for Eq.~\eqref{eq:assumpt-1} to be true while Eq.~\eqref{eq:assumpt-2} does not hold, that is, it may \apriori  be possible to have  $\mF(G) > \frac {D} 2 k(G)  + \frac{F(G)}{D-1}  - D \ge  \sum_{i=1}^{\kappa}\mF(G_i)$. However, we have just shown that the leftmost inequality implies that $G$ factorizes over its connected components at large $N$, which from Eq.~\eqref{eq:mF-equal-parts} implies that  $\mF(G) = \sum_i \mF(G_i)$, which implies Eq.~\eqref{eq:assumpt-2}. \emph{The two conditions Eq.~\eqref{eq:assumpt-2} and Eq.~\eqref{eq:assumpt-1} are therefore equivalent.} In practice, however, given a collection of connected colored graphs $G_i$, it is a lot less costly to verify Eq.~\eqref{eq:assumpt-2} rather than Eq.~\eqref{eq:assumpt-1}.

\begin{rem}
    The two equivalent conditions Eq.~\eqref{eq:assumpt-2} and Eq.~\eqref{eq:assumpt-1} can be written in terms of the degree of compatibility. Indeed, they are respectively equivalent to:
    \begin{equation}\label{eq:assumpt_deg_comp}
        0 \leq \sum_{i = 1}^\kappa \Delta(G_i) < \frac{D(D-1)}{2} \,, \qquad \mathrm{and}
        \qquad 0 \leq \Delta(G) < \frac{D(D-1)}{2} \,.
    \end{equation}
    Let us fix $\vec \sigma \in S_{k(G)}^D(G)$ and let $\nu \in S_{k(G)}$ be an dominant pairing of $G$. Denoting by $d(\cdot,\cdot)$ the Cayley distance on the symmetric group, the correspondence between non-labeled graphs and permutations yields the following formula for the degree of compatibility:
    \begin{equation}
        \Delta(G) = \sum_{1 \leq i<j \leq D} \frac 1 2 \pac{ d(\sigma_i,\nu) + d(\nu, \sigma_j) - d(\sigma_i,\sigma_j)} \,.
    \end{equation}
    Thus, the conditions of Eq.~\eqref{eq:assumpt_deg_comp} have a geometrical interpretation. Specifically, Eq.~\eqref{eq:assumpt_deg_comp} imposes that: if $\nu$ is an dominant pairing of $G$, then there exist at least two distinct integers $i,j \in \paa{1,\dots,D}$ such that 
    \begin{equation}
        d(\sigma_i,\nu) + d(\nu, \sigma_j) = d(\sigma_i,\sigma_j) \,,
    \end{equation}
    \ie at least one of the triangle inequalities is saturated.
\end{rem}

In the second statement of Lem.~\ref{lem:conditions_Gurau}, the bound that each connected component $G_i$ must satisfy is 
\begin{equation}
\label{eq:bound-razvan-cc}
        \mF(G_i) > \frac D 2 k(G_i), 
\end{equation}
but this bound must also  be satisfied by the boundary graphs $\partial \mathring G_i $ of all the $\mathring G_i\in\cG_{D+1}^\circ(G_i)$. 
This condition becomes unpractical whenever at least one $G_i$ has more than a handful of vertices. Indeed, a single copy of $G_i$  can in principle generate boundary graphs with very different properties, and in particular, arbitrarily small $\mF$; for instance: a planar $G_i$ can \apriori generate boundary graphs of arbitrary genus (see the discussion on page 9 of Ref.~\cite{Gurau:2025evo}); a $G_i$ consisting of a collection of melonic graphs each with $k=2$ can generate colored graphs of any kind (see Thm.~2 of Ref.~\cite{Bonzom_Lionni_Rivasseau_2017} and Thm.~3.4.2 of Ref.~\cite{Lionni2018}). 

Focusing now on the condition of Eq.~\eqref{eq:assumpt-2}, the lower bound that one must verify for $\displaystyle \sum_{i=1}^p\mF(G_i) $ (where $G_i$ are the connected components of $G$) is larger or equal to the bound given by Eq.~\eqref{eq:bound-razvan-cc}. Indeed, since $F(G)\ge \frac{D(D-1)}{2}\kappa$ with equality if and only if all the connected components are maximally single-trace graphs, one has 
\begin{equation}
    \frac{F(G)}{D-1}  - D  \ge D\Bigl(\frac{\kappa(G)}{2} -1\Bigr) \ge 0\;,
\end{equation}
so that $\frac{F(G)}{D-1}  - D=0$  if and only $\kappa(G)=2$ and the two connected components are maximally single-trace graphs. In this case, the lower bound that must be satisfied on the $G_i$ themselves is exactly the same, otherwise the bound from Eq.~\eqref{eq:assumpt-2} is strictly larger. But even when the bound given by Eq.~\eqref{eq:assumpt-2} is strictly larger than that of Eq.~\eqref{eq:bound-razvan-cc}, Thm.~\ref{main-theo1} is a significant improvement over Eq.~\eqref{eq:Razvans-condition-for-facto}, in the sense that there is no condition that one should verify for the boundary graphs of the $\mathring G_i$'s. 

\ 

While the bound of Eq.~\eqref{eq:assumpt-2} is in general more constraining than that of Eq.~\eqref{eq:bound-razvan-cc}, it remains non-trivial, as illustrated by the following corollary.
\begin{cor}
Let $D\ge 2$, $p,r \in\bb{N}^*$, and $\delta_1, \ldots, \delta_r \in\bb{N}^*$ such that $\displaystyle\sum_{i=1}^r \delta_i <\frac {D(D-1)}{2}$. If $G_1, \ldots, G_p\in\cG_D$ are compatible and $G'_1, \ldots, G'_r\in\cG_D$ obey $\Delta(G'_i)= \delta_i$ for any $i$, then  $\gG=\{G_1, \ldots, G_p, G'_1, \ldots, G'_r\}$ factorizes over its connected components at large $N$. 
\end{cor}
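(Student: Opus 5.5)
The plan is to apply Thm.~\ref{main-theo1} in its degree-of-compatibility form, Eq.~\eqref{eq:assumpt_deg_comp}. Set $G = G_1\sqcup\cdots\sqcup G_p\sqcup G'_1\sqcup\cdots\sqcup G'_r$. There is a small subtlety first. The graphs $G_i$ and $G'_j$ need not themselves be connected, while Thm.~\ref{main-theo1} is phrased in terms of the connected components of $G$. I would therefore prove that $G$ factorizes over its connected components. Factorization of $\gG$ in the sense of Def.~\ref{def:facto} then follows, since each element of $\gG$ also factorizes over its own components. Indeed, the components of any element form a sub-collection of the components of $G$, and the same argument applies to that sub-collection. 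Comparing the two asymptotic equivalences then yields $\langle\tr_G\rangle\sim\prod\langle\tr_{G_i}\rangle\prod\langle\tr_{G'_j}\rangle$.

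The main computation is to check that the sum of $\Delta$ over the connected components $C$ of $G$ is strictly less than $\frac{D(D-1)}{2}$. The key point is that $\Delta$ is additive over connected components. Both $k$ and $F$ are additive over components, and $\mF$ is superadditive in general, with $\mF(H)\ge\sum_C\mF(C)$. Hence $\Delta(H)\le\sum_C\Delta(C)$, which goes in the wrong direction for us. I would therefore establish the reverse inequality directly from Lem.~\ref{lem:Razvans-bound-on-F}, applied to each component. Every $\widehat C\in\cG_{D+1}(C)$ satisfies $\Delta_0(\widehat C)\ge0$, so each $\Delta(C)\ge 0$. Combined with superadditivity of $\mF$, this gives $0\le \Delta(H)\le \sum_C\Delta(C)$ for any graph $H$ whose components are the $C$.

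To obtain the needed upper bound on $\sum_C\Delta(C)$, I would use the following cleaner route. Given a component-wise optimal pairing, the disjoint union of the optimal $\widehat C$ is an element of $\cG_{D+1}(H)$, so $\mF(H)\ge\sum_C\mF(C)$ and hence $\sum_C\Delta(C)\ge\Delta(H)$. For the statement we need the opposite, namely $\sum_{C\subset G_i}\Delta(C)=\Delta(G_i)$, that is, factorization of each $G_i$ over its own components. For a compatible $G_i$ this is immediate: $0\le\Delta(G_i)=0$ only tells us $\sum_C\Delta(C)\ge0$, which is not enough. Instead I would argue via Eq.~\eqref{eq:assumpt_deg_comp} applied to $G_i$ alone. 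Since $\Delta(G_i)=0<\frac{D(D-1)}2$, the second condition there (equivalent to Eq.~\eqref{eq:assumpt-1}) holds. The proof of Thm.~\ref{main-theo1} then shows that $G_i$ factorizes over its components, so $\mF(G_i)=\sum_C\mF(C)$ and hence $\Delta(G_i)=\sum_{C\subset G_i}\Delta(C)$. The same reasoning applies to each $G'_j$, using $\delta_j<\frac{D(D-1)}2$.

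Putting these together, the sum of $\Delta(C)$ over all components $C$ of $G$ equals $\sum_i\Delta(G_i)+\sum_j\Delta(G'_j)=\sum_j\delta_j<\frac{D(D-1)}{2}$. This is exactly the first condition in Eq.~\eqref{eq:assumpt_deg_comp}, equivalent to Eq.~\eqref{eq:assumpt-2}, so Thm.~\ref{main-theo1} gives factorization of $G$ over its connected components, as required. The only delicate step is the additivity of $\Delta$ when the given graphs are disconnected, handled in the previous paragraph. If all the graphs are connected, the result follows in one line from Thm.~\ref{main-theo1} and the rewriting in Eq.~\eqref{eq:assumpt_deg_comp}.
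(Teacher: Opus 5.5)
Your argument is correct and is essentially the paper's: the corollary is a direct application of Thm.~\ref{main-theo1} in the degree-of-compatibility form Eq.~\eqref{eq:assumpt_deg_comp}. Your extra step, applying the equivalent condition Eq.~\eqref{eq:assumpt-1} to each $G_i$ and $G'_j$ to obtain $\Delta(G_i)=\sum_{C\subset G_i}\Delta(C)$, correctly covers disconnected blocks, a case the paper tacitly ignores. You should drop your second paragraph, though. $\Delta$ is not additive over connected components in general: for the graph of Sec.~\ref{sec:counterex}, $\Delta(H\sqcup\bar H)=15<20=2\Delta(H)$. Moreover, $\Delta(C)\ge 0$ together with superadditivity of $\mF$ cannot give the reverse inequality; only your later use of Eq.~\eqref{eq:assumpt-1} on each block does.
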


From the corollary, the disjoint union of arbitrarily many compatible graphs factorizes over its connected components. In Prop.~\ref{prop:families-low-delta}, we highlighted some compatible trace-invariants, and by the use of Thm.~\ref{thm:families-any-delta}, or Thm.~6.4 of Ref.~\cite{Carrozza:2026qcf}, any graph built from these examples by flipping edges is also compatible. 

\

The corollary also applies to a number of non-compatible graphs. For $D=3$, the condition Eq.~\eqref{eq:assumpt-2} implies for instance that if $G_1, G_2$ have $\Delta(G_1) = \Delta(G_2) = 1$, then they factorize at large $N$. As another example, for any $D\geq 2$ we can consider ${D(D-1)}/{2} - 1$ realignment moments, all with $\abs{M_1} = \abs{M_2} = 1$, which will factorize by Prop.~\ref{prop:families-low-delta}).

\paragraph{Second main theorem.} A similar method provides a remarkable improvement to Thm.~\ref{thm:families-tree-like-LO}, as we now detail. We recall that colored graphs having tree-like dominant pairings have been introduced in Def.~\ref{def:DEF}, point \ref{it:HAVEtree}.

\begin{lem}
\label{lem:max-of-kappa}
 Consider  $\gG=\{G_i\}_{i=1, \ldots, p}$ with $G_i\in\cG_D$, and $\widehat G\in\cG_{D+1}^{\conn}(\gG)$. If each $G_i$ factorizes over its connected components at large $N$ and $\widehat G$ is tree-like on $\gG$, then $\widehat G$ saturates  Eq.~\eqref{eq:upper-bound-kappa}, that is, $\kappa(\widehat G) = \kappa(G) - p +1$.
\end{lem}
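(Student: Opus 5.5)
The plan is to prove the statement by induction on $p$, following the same recursive decomposition of tree-like graphs used in the proof of Lem.~\ref{lem:faces-of-treelike}. The hypothesis that each $G_i$ factorizes over its connected components will be used in exactly one way: every dominant pairing of $G_i$ only pairs vertices lying in the same connected component of $G_i$.

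\emph{Step 0: consequence of factorization.} Write $G_i=\bigsqcup_j B_i^{(j)}$. I will argue that any $\widehat{G_i}\in \MD{}(G_i)$ has all its color-$0$ edges inside the components $B_i^{(j)}$.
\begin{itemize}
\item By Eq.~\eqref{eq:cond-on-faces-for-facto} applied to $\{B_i^{(j)}\}_j$, factorization means that every graph of $\cG_{D+1}(G_i)$ whose color-$0$ edges induce a non-trivial partition of the components has strictly fewer than $\sum_j \mF(B_i^{(j)})=\mF(G_i)$ faces of color $0$.
\item Hence the maximizers are exactly the disjoint unions of maximizers for the individual $B_i^{(j)}$.
\item Consequently, every pair of a dominant pairing $\pi_\textrm{opt}(G_i)$ consists of two vertices of the same $B_i^{(j)}$, which are in particular in the same connected component of $G_i$.
\end{itemize}

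\emph{Base case $p=1$.} Here $K(\widehat G;\gG)$ has a single vertex, so no two-cut exists. The maximal two-cut property then forces every pair of $\pi_\textrm{opt}(G_1)$ to be linked directly by a color-$0$ edge, so $\widehat G\in\MD{}(G_1)$. By Step 0 its color-$0$ edges stay inside components, so $\kappa(\widehat G)=\kappa(G_1)=\kappa(G)-1+1$.

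\emph{Inductive step $p\ge 2$.}
\begin{itemize}
\item As in the proof of Lem.~\ref{lem:faces-of-treelike}, some dominant pair $\{u,v\}$ of some $G_i$ carries two color-$0$ edges $e\ni u$ and $e'\ni v$ forming a two-cut of $K(\widehat G;\gG)$.
\item Flipping $e,e'$ creates the edge $uv$ and yields $\widehat G'=\widehat G_{I_1}\sqcup\widehat G_{I_2}$. Each $\widehat G_{I_a}$ connects $\gG_{I_a}=\{G_i\}_{i\in I_a}$ and is tree-like on it, with $I_1\sqcup I_2=\{1,\dots,p\}$ and both $I_a$ non-empty.
\item The induction hypothesis gives
\[
\kappa(\widehat G')=\sum_{a=1,2}\Bigl(\kappa(G_{I_a})-|I_a|+1\Bigr)=\kappa(G)-p+2 .
\]
\item It then suffices to show $\kappa(\widehat G)=\kappa(\widehat G')-1$. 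By Step 0, $u$ and $v$ lie in the same connected component of $G_i$, hence of $\widehat G$. So $e$ and $e'$ are in one connected component $C$ of $\widehat G$.
\item Flipping two edges within a single component produces at most two components from $C$ and leaves the others untouched.
\item After the flip, the two new color-$0$ edges lie in $\widehat G_{I_1}$ and $\widehat G_{I_2}$, which share no vertex. Hence $C$ is split into exactly two components, giving $\kappa(\widehat G')=\kappa(\widehat G)+1$, and the claim follows.
\end{itemize}

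\emph{Main obstacle.} The delicate point is justifying that the two pieces obtained after the flip are again tree-like on $\gG_{I_1}$ and $\gG_{I_2}$. One must check that the remaining dominant pairs retain the maximal two-cut property: a two-cut of $K(\widehat G;\gG)$ whose edges both lie in one piece stays a two-cut of that piece's auxiliary graph. I would handle this exactly as the paper does in Lem.~\ref{lem:faces-of-treelike}, or equivalently by working with the recursive flip construction of tree-like graphs described after Def.~\ref{def:DEF}, reading it backwards.

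By contrast, the component count is where the factorization hypothesis is genuinely needed. Without it, $u$ and $v$ could lie in different components of $G_i$, the flip could leave $\kappa$ unchanged, and $\Gamma(\widehat G;\gG)$ could then contain a cycle.
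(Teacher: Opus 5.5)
Your proposal is correct and follows the paper's proof. Both argue by induction on $p$, flipping a two-cut attached to a dominant pair, and both use factorization to guarantee that dominant pairs stay inside connected components. Your explicit check that the flip splits exactly one component of $\widehat G$ into two, because $u$ and $v$ lie in the same component, justifies the relation $\kappa(\widehat G)=\kappa(\widehat G_a)+\kappa(\widehat G_b)-1$. The paper states that relation without comment, and your argument makes clear that the factorization hypothesis is needed in the inductive step as well as in the base case.
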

\proof 
We verify by induction on $p$ that if each $G_i$ factorizes over its connected components at large $N$, tree-like graphs on $\gG$ saturate Eq.~\eqref{eq:upper-bound-kappa}. The reasoning is  very similar to the proof of Lem.~\ref{lem:faces-of-treelike}. If $p=1$, one has $G=G_1$, which by definition satisfies the two-cut property in $\widehat G$: there exists a dominant pairing $\pi_{\rm{opt}}(G_1)$ such that, for every pair of vertices in $\pi_{\rm{opt}}(G_1)$, any edge of color $0$ links two vertices of a same pair. Since $G_1$ factorizes over its connected components at large $N$, every pair of vertices in $\pi_{\rm{opt}}(G_1)$ has both vertices on the same connected component of $G_1$. Therefore: $\kappa(\widehat G) = \kappa(G_1)$. 
 If now $p>1$, consider two edges of color 0 in $\widehat{G}$ attached to the vertices of a dominant pairing and which form a two-cut, flip these two edges, thus obtaining two smaller graphs $\widehat G_a\in \cG_{D+1}^{\conn}(\gG_a)$ and $\widehat G_b\in \cG_{D+1}^{\conn}(\gG_b)$ with $\gG_a\cup \gG_b=\gG$,  which are respectively tree-like on $\gG_a$ and $\gG_b$. We respectively let $G_a$ be the union of the elements of $\gG_a$, and $p_a$ be the number of elements in $\gG_a$, and similarly for $\gG_b$. 
By induction, $\kappa(\widehat G_a)= \kappa(G_a) -p_a+1$, and similarly for $G_b$.  One has $p_a+p_b=p$, $\kappa(G_a) + \kappa(G_b) = \kappa(G)$, and $\kappa(\widehat G) = \kappa(\widehat G_a) + \kappa(\widehat G_b) - 1$, so that $\kappa(\widehat G) = \kappa(G) - p +1$. 
\qed

\ 

\begin{cor}
    \label{cor:max-of-kappa}
Consider  $\gG=\{G_i\}_{i=1, \ldots, p}$ with $G_i\in\cG_D$, and $\widehat G\in\cG_{D+1}^{\conn}(\gG)$.  Then the following upper bound holds:
 \begin{equation}
        \label{eq:upper-bound-F0c}
\mFc(\gG) \le \frac D 2 k(G) + \frac {F(G)} {D-1}  -D(p-1)\;.
 \end{equation}
Furthermore, if each $G_i$ factorizes over its connected components at large $N$ and $\widehat G$ is tree-like on $\gG$, then the bound of Lem.~\ref{lem:Razvans-bound-on-F} coincides with this bound. 
\end{cor}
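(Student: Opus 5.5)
The plan is to combine Lem.~\ref{lem:Razvans-bound-on-F} with Lem.~\ref{lem:max-of-kappa-0}. Take any $\widehat G\in\cG_{D+1}^{\conn}(\gG)$, with $G=G_1\sqcup\cdots\sqcup G_p$. Lem.~\ref{lem:Razvans-bound-on-F} applied to $G$ gives
\begin{equation*}
F_0(\widehat G)\le \frac D2 k(G)+\frac{F(G)}{D-1}-D\bigl(\kappa(G)-\kappa(\widehat G)\bigr).
\end{equation*}
Since $\widehat G$ connects $\gG$, Lem.~\ref{lem:max-of-kappa-0} gives $\kappa(\widehat G)\le\kappa(G)-p+1$, that is, $\kappa(G)-\kappa(\widehat G)\ge p-1$. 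Substituting this yields $F_0(\widehat G)\le \frac D2 k(G)+\frac{F(G)}{D-1}-D(p-1)$. Taking the maximum over $\widehat G\in\cG^{\conn}_{D+1}(\gG)$ then gives Eq.~\eqref{eq:upper-bound-F0c}.

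For the second statement, assume that each $G_i$ factorizes over its connected components and that $\widehat G$ is tree-like on $\gG$. Lem.~\ref{lem:max-of-kappa} gives exactly $\kappa(\widehat G)=\kappa(G)-p+1$. Therefore the $\kappa$-dependent term $-D(\kappa(G)-\kappa(\widehat G))$ in the bound of Lem.~\ref{lem:Razvans-bound-on-F}, evaluated at this $\widehat G$, equals $-D(p-1)$. Hence the right-hand sides of the two bounds coincide.

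There is no substantial obstacle, since all the work is already contained in Lem.~\ref{lem:max-of-kappa}. The only point that needs care is the direction of the inequality: the lemma's term is $-D(\kappa(G)-\kappa(\widehat G))$, and it decreases as $\kappa(\widehat G)$ decreases. So the upper bound on $\kappa(\widehat G)$ from Lem.~\ref{lem:max-of-kappa-0} is what yields the upper bound on $F_0$.
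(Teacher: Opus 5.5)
Your proposal is correct and follows the paper's own argument. The inequality comes from Gurau's bound (Lem.~\ref{lem:Razvans-bound-on-F}) together with $\kappa(G)-\kappa(\widehat G)\ge p-1$ from Lem.~\ref{lem:max-of-kappa-0}, and the coincidence statement comes from the saturation $\kappa(\widehat G)=\kappa(G)-p+1$ given by Lem.~\ref{lem:max-of-kappa}. The paper's one-line proof cites only Lem.~\ref{lem:max-of-kappa}, but the first part indeed rests on Eq.~\eqref{eq:upper-bound-kappa}, as you correctly make explicit.
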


\proof It is a direct application of Lem.~\ref{lem:Razvans-bound-on-F} and Lem.~\ref{lem:max-of-kappa}. \qed

\ 

\begin{theo}
    \label{main-theo2}
    For any $D\ge 2$ and $p\in\bb{N}^*$, if $G_1, \ldots, G_p\in\cG_D$ are compatible, then $G=G_1\sqcup \cdots \sqcup G_p$ is compatible, and  $\gG=\{G_1, \ldots, G_p\}$ has tree-like dominant pairings. 
\end{theo}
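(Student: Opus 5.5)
We prove both claims by squeezing $\mFc(\gG)$ between an upper bound from Cor.~\ref{cor:max-of-kappa} and a lower bound realized by tree-like graphs. The compatibility of $G$ then follows as a by-product.

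\emph{Step 1: reduce to connected compatible graphs.} First we check that compatibility is inherited by connected components. For a compatible $G_i$ with components $B_i^{(j)}$, Lem.~\ref{lem:Razvans-bound-on-F} applied to each component gives $\mF(B_i^{(j)})\le \frac D2 k(B_i^{(j)})+\frac{F(B_i^{(j)})}{D-1}$. Any $\widehat G_i\in\cG_{D+1}(G_i)$ has $F_0(\widehat G_i)\le \frac D2 k(G_i)+\frac{F(G_i)}{D-1}-D(\kappa(G_i)-\kappa(\widehat G_i))$. Compatibility of $G_i$ means this bound is attained at the maximum, which forces $\kappa(\widehat G_i)=\kappa(G_i)$ for every dominant $\widehat G_i$. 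Hence $G_i$ factorizes over its connected components, $\mF(G_i)=\sum_j\mF(B_i^{(j)})$, and since each term is bounded by the corresponding Razvan bound, each $B_i^{(j)}$ is compatible. In particular every $G_i$ satisfies the hypothesis of Lem.~\ref{lem:max-of-kappa}.

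\emph{Step 2: lower bound from a tree-like graph.} For each $i$ pick $\widehat G_i^{\rm opt}\in\MD{}(G_i)$. Using color-$0$ edges attached to dominant pairs, perform $p-1$ flips along a tree to build $\widehat G$ that is tree-like on $\gG$. By Lem.~\ref{lem:faces-of-treelike} and compatibility,
\begin{equation*}
F_0(\widehat G)=D+\sum_{i=1}^p\bigl(\mF(G_i)-D\bigr)=\frac D2 k(G)+\frac{F(G)}{D-1}-D(p-1),
\end{equation*}
where we used $\mF(G_i)=\frac D2 k(G_i)+\frac{F(G_i)}{D-1}$ together with additivity of $k$ and $F$. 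So $\mFc(\gG)\ge$ this value. Cor.~\ref{cor:max-of-kappa} gives the same value as an upper bound, hence equality. Lem.~\ref{lem:tree-dom-iff-faces} then shows that $\gG$ has tree-like dominant pairings.

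\emph{Step 3: compatibility of $G$.} The upper bound $\mF(G)\le \frac D2 k(G)+\frac{F(G)}{D-1}$ follows from Lem.~\ref{lem:Razvans-bound-on-F}. For the lower bound, $\mF(G)\ge\sum_i\mF(G_i)$ (take the disjoint union of dominant graphs), and this sum equals $\frac D2k(G)+\frac{F(G)}{D-1}$ by compatibility of each $G_i$. Thus $\Delta(G)=0$.

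The main obstacle is purely bookkeeping: confirming that the upper bound of Cor.~\ref{cor:max-of-kappa} uses $p$, the number of elements of $\gG$, rather than $\kappa(G)$. This is legitimate because $\kappa(\widehat G)\le\kappa(G)-p+1$ holds by Lem.~\ref{lem:max-of-kappa-0}. With that settled, the upper and lower bounds match exactly.
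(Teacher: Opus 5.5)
Your proof is correct and is essentially the paper's argument: by Lem.~\ref{lem:faces-of-treelike} and compatibility, every graph tree-like on $\gG$ has $F_0=\frac D2 k(G)+\frac{F(G)}{D-1}-D(p-1)$, which equals the upper bound of Cor.~\ref{cor:max-of-kappa} (that bound needs only Lem.~\ref{lem:max-of-kappa-0}), so tree-like graphs lie in $\MD{}^{\conn}(\gG)$. The differences are minor: your Step~1 is never actually used, and your sandwich $\sum_i\mF(G_i)\le\mF(G)\le\frac D2k(G)+\frac{F(G)}{D-1}$ for the compatibility of $G$ is a more direct, self-contained substitute for the paper's appeal to large-$N$ factorization via Eq.~\eqref{eq:mF-equal-parts}.
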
 

\proof Consider $\gG$ as in the statement. The compatibility of $G$ follows from the fact that it factorizes over its connected components. Indeed, Eq.~\eqref{eq:mF-equal-parts} yields:
\begin{equation}
    \Delta(G) = \Delta(G_1) + \dots + \Delta(G_p) = 0 \,,
\end{equation}
since the $G_i$'s are compatible.  We now consider $\widehat G$  tree-like on $\gG$ and  show that  $F_0(\widehat G)$ saturates Eq.~\eqref{eq:upper-bound-F0c}. From Eq.~\eqref{eq:Delta_0}, the $G_i$'s being compatible: 
\begin{equation}
\label{eq:sum-maxfaces-comp}
        \sum_{i}\mF(G_i) =    \frac{D}2 k(G) + \frac {F(G)} {D-1} \,.
\end{equation}
Since $\widehat G$ is tree-like on $\gG$,  Lem.~\ref{lem:faces-of-treelike} and Eq.~\eqref{eq:sum-maxfaces-comp} imply that 
   \begin{equation}
           \label{tree-like-compat}
     F_0(\widehat G) = D + \sum_{i=1}^p \bigl( \mF(G_i) - D\bigr) = \frac{D}2 k(G) + \frac {F(G)} {D-1} -  D(p-1). 
 \end{equation}
Tree-like graphs realize the upper bound Eq.~\eqref{eq:upper-bound-F0c} while belonging to $\cG_{D+1}^{\conn}(\gG)$, so they belong to  $\MD{}^{\conn}(\gG)$.\qed

\

Reasoning similarly with tree-like graphs, one shows that if $\gG=\{G_1 , \dots,G_p\}$ satisfies Eq.~\eqref{eq:assumpt-2}, then $\mFc(\gG)$ is close to the bound given by Cor.~\ref{cor:max-of-kappa}, in the sense that:
\begin{equation}
0 \le \frac{D}2 k(G) +  \frac {F(G)} {D-1} -  D(p-1)  - \mFc(\gG)  < D \;. 
\end{equation}
Hence, Thm.~\ref{main-theo2} reveals a striking property of compatible invariants: regardless of the precise structure of the graphs composing $\gG$, the mere fact that $\gG$ is compatible ensures that it exhibits tree-like dominant pairings.

\paragraph{Third main theorem.} Our third main observation is that the existence of tree-like dominant pairings implies large $N$ factorization.
\begin{theo}
\label{main-theo3}
Let $D \geq 2$, $p \in \bb{N}^*$, $G_1,\dots,G_p \in \cG_{D}$ and $\gG=\{G_1, \dots, G_p\}$. If $\gG$ has tree-like dominant pairings, then $\gG$ factorizes at large $N$.  
\end{theo}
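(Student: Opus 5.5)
We reduce the statement to the criterion Eq.~\eqref{eq:cond-on-faces-for-facto}: $\gG$ factorizes at large $N$ as soon as, for every partition $\pi\neq 0_p$ of $\{1,\ldots,p\}$,
\[
\sum_{B\in\pi}\mFc(\{G_i\}_{i\in B})<\sum_{i=1}^p\mF(G_i)\,.
\]
The plan is to compute each left-hand term exactly using the tree-like hypothesis, and then count.

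First, Lem.~\ref{lem:tree-like-subsets} shows that every subfamily $\{G_i\}_{i\in B}$ of $\gG$ again has tree-like dominant pairings. Lem.~\ref{lem:tree-dom-iff-faces}, applied to each block $B$, then gives
\[
\mFc(\{G_i\}_{i\in B})=D+\sum_{i\in B}\bigl(\mF(G_i)-D\bigr)=\sum_{i\in B}\mF(G_i)-D(|B|-1)\,.
\]
For a singleton block this reduces to $\mFc(\{G_i\})=\mF(G_i)$. This is consistent, since for a single graph the connectedness of $K$ is automatic and $\cG^{\conn}_{D+1}(\{G_i\})=\cG_{D+1}(G_i)$. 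It is worth checking that Lem.~\ref{lem:tree-dom-iff-faces} covers the case $p=1$; if not, the singleton identity can be used directly.

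Summing over the blocks of $\pi$ yields
\[
\sum_{B\in\pi}\mFc(\{G_i\}_{i\in B})=\sum_{i=1}^p\mF(G_i)-D\sum_{B\in\pi}(|B|-1)=\sum_{i=1}^p\mF(G_i)-D\bigl(p-|\pi|\bigr)\,.
\]
If $\pi\neq 0_p$, then $|\pi|<p$, so the left-hand side is smaller than $\sum_i\mF(G_i)$ by at least $D\ge 2>0$. This is exactly the strict inequality in Eq.~\eqref{eq:cond-on-faces-for-facto}, and hence Eq.~\eqref{eq:large-N-facto} holds. For the asymptotic equivalence we also need the leading coefficients of the $\pi=0_p$ term to be nonzero. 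Indeed $\prod_i\mu_{G_i}>0$, because each $\mu$ counts the maximizers, of which there is at least one. Each other term is $O(N^{-D})$ relative to it, and there are finitely many partitions.

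The only delicate point is the case where some $G_i$ are not connected. We must make sure that the notions in Lem.~\ref{lem:tree-like-subsets} and Lem.~\ref{lem:tree-dom-iff-faces} are those relative to $\gG$, meaning connectedness of $K(\cdot;\gG)$ and not of $\widehat G$, since Eq.~\eqref{eq:cond-on-faces-for-facto} is written with this notion. Both lemmas are stated in exactly this relative sense. Their proofs also use only flips between graphs that connect their respective subfamilies (Lem.~\ref{lem:faces-when-flip}), and such flips preserve connectedness relative to the union. So we expect no real obstacle, and the theorem should follow essentially as a corollary of these counting identities.
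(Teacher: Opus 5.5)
Your proposal is correct and is essentially the paper's proof: Lem.~\ref{lem:tree-like-subsets} passes tree-like dominant pairings to every block of $\pi$, Lem.~\ref{lem:tree-dom-iff-faces} then gives each $\mFc(\{G_i\}_{i\in B})$ exactly, and Eq.~\eqref{eq:cond-on-faces-for-facto} reduces to $|\pi|<p$. Your additional remarks are sound and only make explicit what the paper leaves implicit, namely the positivity of $\prod_i\mu_{G_i}$ and the fact that connectedness is understood relative to $K(\cdot\,;\gG)$.
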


\proof
If $\gG$ has tree-like dominant pairings, from Lem.~\ref{lem:tree-like-subsets}, the same is true for all its subsets $\{G_i\}_{i\in I}$, where $I\subset \{1, \ldots, p\}$. From Lem.~\ref{lem:tree-dom-iff-faces}, the necessary and sufficient condition Eq.~\eqref{eq:cond-on-faces-for-facto} for the factorization Eq.~\eqref{eq:large-N-facto} to hold for $\gG$ at large $N$ can therefore be reformulated as: 
\begin{equation}
    \forall \pi \neq 0_p,\qquad D\#(\pi) + \sum_{B\in \pi}\sum_{i\in B} \bigl( \mF(G_i) - D\bigr)  < Dp + \sum_{i=1}^p \bigl( \mF(G_i) - D\bigr)  \;, 
\end{equation}
which simplifies as:
\begin{equation}
  \forall \pi \neq 0_p,\qquad    \#(\pi)  < p\;,
\end{equation}
which is always true.  \qed 

\

The condition required for Thm.~\ref{main-theo3} is stronger than that required for Thm.~\ref{main-theo1}, in the sense that it puts a structural constraint on the graphs in $\MD{}^{\conn}(\gG)$. On the other hand, most of the trace-invariants that have been explicitly studied in the literature on tensor models do satisfy this assumption. The only exception we are aware of is in fact given by the family of maximally single-trace graphs, which was first studied as a whole in Ref.~\cite{Ferrari:2017jgw} (and includes invariants that have been well-studied prior to that work in the context of bilocal melonic theories \cite{Gurau:2010ba, Bonzom:2011zz, Carrozza:2015adg}). In hindsight, it should therefore come as no surprise that the first explicit counterexamples to large $N$ factorization were found in this family (first in Ref.~\cite{Berthold:2026zxk} in a real setting, and now in Fig.~\ref{fig:MST_incomp} in the complex case).

The trace-invariants associated with a colored graph that is given by the disjoint union of the connected graphs listed in Thm.~\ref{thm:families-tree-like-LO} satisfy Thm.~\ref{main-theo3}, \ie they factorize over their connected components. All the discussed examples in this paper that satisfy Thm.~\ref{main-theo1} also satisfy Thm.~\ref{main-theo3} since they all have tree-like dominant pairings. However, regarding our examples, the converse is not true. Indeed, for $D=3$, there exist non-connected planar graphs $G$ such that $\Delta(G)\geq D(D-1)/2$ that factorize over their connected components due to Thm.~\ref{main-theo3}. Moreover, for $D \geq 4$ and $k \in \bb{N}^*$ with $4k \geq D^2 - D + 4$, a disjoint union $G$ of two cyclic graphs, both associated with a subset $M \in \paa{1,\dots,D}$ with $\abs{M} = 2$ and having $2k$ vertices, has $\Delta(G)\geq D(D-1)/2$ but factorizes over its connected components due to Thm.~\ref{main-theo3}.

\ 

Let us focus on the interesting special case of $3$-colored graphs ($D=3$). By Thm.~\ref{main-theo2}, we already know that incompatibility (and even $\Delta \geq 3$) is a necessary condition for a family of $3$-colored graphs to fail to factorize at large $N$. As established by the following corollary, it turns out that non-planarity is also necessary for non-factorization.
\begin{cor} \label{cor:planar}
    Let $p \in \bb{N}^*$, $G_1,\dots,G_p \in \cG_3$ and $\gG = \paa{G_1,\dots,G_p}$. If $\gG$ consists of planar $3$-colored graphs, then, $\gG$ factorizes at large $N$.
\end{cor}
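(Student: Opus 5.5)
The plan is to combine Thm.~\ref{main-theo3} with point 4 of Thm.~\ref{thm:families-tree-like-LO}, using Lem.~\ref{lem:tree_like_B_implies_G} to remove the connectedness assumption. By Thm.~\ref{main-theo3}, it suffices to show that $\gG=\{G_1,\dots,G_p\}$ has tree-like dominant pairings. The only subtlety is that the $G_i$ are planar but not necessarily connected, whereas the result of Ref.~\cite{Bonzom:2018btd} quoted in Thm.~\ref{thm:families-tree-like-LO} concerns collections of \emph{connected} planar graphs.

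First, decompose each $G_i$ into its connected components $B_i^{(j)}$, $1\le j\le \kappa(G_i)$. Each component of a planar $3$-colored graph is itself a connected planar graph. This follows either because the drawing restricts to each component, or from Eq.~\eqref{eq:planar}: writing $F(B)\le 2+k(B)$ for each component (the Euler bound for a connected $3$-regular ribbon graph) and summing, equality for $G_i$ forces equality for every component. Hence $\mathbf{B}=\{B_i^{(j)}\}_{i,j}$ is a collection of connected planar $3$-colored graphs. By point 4 of Thm.~\ref{thm:families-tree-like-LO}, $\mathbf{B}$ has (only) tree-like dominant pairings.

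Second, apply the last assertion of Lem.~\ref{lem:tree_like_B_implies_G}: since $\mathbf{B}$ has tree-like dominant pairings, so does $\gG$. Concretely, every dominant connected graph $\widehat G\in\MD{}^{\conn}(\gG)$ has components that are dominant on their supports $\mathbf{B}_C$, and $\Gamma(\widehat G;\gG)$ is a tree. Because the components of tree-like graphs on the $\mathbf{B}_C$ are tree-like, the tree-like graphs on $\gG$ achieve the maximum.

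Finally, Thm.~\ref{main-theo3} gives factorization of $\gG$ at large $N$. The main point requiring care is the passage from connected to non-connected $G_i$, namely checking that Lem.~\ref{lem:tree_like_B_implies_G} indeed yields that tree-like graphs on $\gG$ lie in $\MD{}^{\conn}(\gG)$ in the sense of Def.~\ref{def:DEF}. This means verifying that the maximal two-cut property for each $G_i$ holds with a dominant pairing of $G_i$ built from dominant pairings of its components. That dominant pairing is supplied by the same lemma, which states that the dominant pairings of $G_i$ are exactly those induced by its components.
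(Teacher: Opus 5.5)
Your proposal is correct and takes the same route as the paper. It obtains tree-like dominant pairings for connected planar graphs from point 4 of Thm.~\ref{thm:families-tree-like-LO} (Ref.~\cite{Bonzom:2018btd}), extends them to non-connected $G_i$ through Lem.~\ref{lem:tree_like_B_implies_G} as remarked after Thm.~\ref{thm:families-tree-like-LO}, and concludes with Thm.~\ref{main-theo3}; your check via Eq.~\eqref{eq:planar} and the Euler bound $F(B)\le 2+k(B)$ that each connected component of a planar graph is planar is correct and is a detail the paper leaves implicit.
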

\noindent Since planar graphs have tree-like dominant pairings (see Ref.~\cite{Bonzom:2018btd} and Thm.~\ref{thm:families-tree-like-LO}), Cor.~\ref{cor:planar} is a direct application of Thm.~\ref{main-theo3}.

\section{Application: multipartite entanglement R\'{e}nyi entropies of random quantum states}
\label{subsec:averageApprox}

\paragraph{Context.} Within the theory of multipartite quantum entanglement, trace-invariants can be used to define various quantities such as entanglement monotones (see \eg Refs.~\cite{Gadde:2025csh,Carrozza:2026qcf}), multipartite entropy (see \eg Refs.~\cite{Penington:2022dhr,Gadde:2023zzj,Harper:2024ker,Iizuka:2024pzm,Gadde:2023zni}) or genuine multipartite entropy (see \eg Refs.~\cite{Basak:2024uwc,kmpl-mdbx,Iizuka:2025ioc,Iizuka:2025caq}). Once a specific construction, defined via some function $h$ depending on one or more trace-invariants, is chosen, it is often useful to estimate its typical value for a state drawn uniformly at random according to the Haar measure, especially in large quantum systems (see for instance Refs.~\cite{Dahlsten:2014whd,Singh:2016xzo,Bianchi:2021aui}).

In practice, while the mean value of the trace-invariant itself (or at least its leading contribution) is computable (see Eq.~\eqref{eq:exp-and-asympt-moments}), determining the expectation value of a non-linear function $h$ of the trace-invariant may be more challenging. Nonetheless, in favorable cases, measure concentration phenomena allow to establish asymptotic relations of the form
\begin{equation} \label{eq:ToProofFacto}
    \mean{h\pa{\tr_G\pa{\cT^{(N)},\bar{\cT}^{(N)}}}} \underset{N \to \infty}{\sim} h\pa{\mean{\tr_G\pa{\cT^{(N)},\bar{\cT}^{(N)}}}},
\end{equation}
where $\cT^{(N)}$ denotes a Haar or Gaussian random tensor on a $D$-partite state space of local dimension $N$ (as introduced in Sec.~\ref{sec:Prerequisites}, paragraph ``Trace-invariants and correlations.''). The relation Eq.~\eqref{eq:ToProofFacto} asserts that taking the expectation value and applying the function $h$ are commuting operations to leading order in $1/N$. It is of particular interest to us to assess under which conditions this relation holds for the function $h(\cdot) = -\ln |\cdot|$: indeed, when it does, it allows one to determine an asymptotic expression for the expectation value of the following $\LO$-monotone (see Ref.~\cite{Carrozza:2026qcf}, Example 2.35), defined for a given tensor $S$ as
\begin{equation}
    R_G (S,\bar{S})= \begin{cases} - \ln\abs{\tr_G(S,\bar{S})} &\; \mathrm{if}\; \tr_G(S,\bar{S}) \neq 0 \\ 
    +\infty &\; \mathrm{otherwise\footnotemark} \end{cases} 
\end{equation}
\footnotetext{An example of trace-invariant together with an explicit tensor $S$ for which $\tr_G(S,\bar{S})=0$, was presented in Ref.~\cite{Carrozza:2026qcf} (see Example 2.38).}
In the bipartite setting ($D=2$), for a connected graph $G$ with $k(G) = k$, these quantities reduce to R\'{e}nyi-$k$ entanglement entropies; in the multipartite setting ($D\geq 3$), they can therefore be understood as generalizations of the latter that are indexed by $D$-colored graphs rather than integers.

We will start by exploiting a general concentration phenomenon to bound fluctuations of $\tr_G\pa{\cT^{(N)},\bar{\cT}^{(N)}}$ for any graph $G$ such that the expectation of $G\sqcup \bar{G}$ factorizes.  Next, we will impose further restrictions on  the choice of graph $G$ that will enable us to show that 
\begin{equation}\label{eq:as_equiv_R_G}
    \mean{R_G\pa{\cT^{(N)},\bar{\cT}^{(N)}}} \underset{N\to \infty}{\sim} - \ln\left( \mean{ \tr_G\pa{\cT^{(N)},\bar{\cT}^{(N)}} }\right)\,.
\end{equation}

\paragraph{Concentration phenomenon.} Let $\cT = \pa{\cT^{(N)}}_{N \in \mathbb{N}^*}$, where for every $N \in \mathbb{N}^*$, $\cT^{(N)}$ denotes a Haar random or Gaussian complex $D$-partite tensor of local dimension $N$. We will prove that the deviation of $\tr_G(\cT^{(N)},\bar{\cT}^{(N)})$ from the dominant contribution to the expectation can be made arbitrarily small for sufficiently large $N$.

Our proof will require that $G \in \cG_D$ satisfies the large $N$ factorization condition: 
\begin{equation}\label{eq:factorization_criterion_Haar}
    \mean{\tr_G\pa{\cT^{(N)},\bar{\cT}^{(N)}} \cdot \tr_{\bar G}\pa{\cT^{(N)},\bar{\cT}^{(N)}} } \underset{N \to \infty}{=} \mean{ \tr_G\pa{\cT^{(N)},\bar{\cT}^{(N)}} } \cdot \mean{ \tr_{\bar G}\pa{\cT^{(N)},\bar{\cT}^{(N)}} } \left(1 + O\pa{\frac 1 N}\right)\,.
\end{equation}
More generally, we introduce the following definition, which can be applied to arbitrary sequences of random tensors.
\begin{defi}
Let $S = \pa{S^{(N)}}_{N \in \mathbb{N}^*}$ denote a sequence of random complex $D$-partite tensors and $G \in \cG_D$. We will say that:
\begin{enumerate}
    \item the pair $(G, S)$ -- or, by slight abuse of notation, $\mean{\tr_G \left( S^{(N)}, \bar{S}^{(N)}\right)}$ -- obeys the \emph{large-$N$ Ansatz} if one can find $\mu_G(S) \in \mathbb{C}^*$ and $s_G (S) \in \mathbb{R}$ such that:\footnote{Note that, $|\tr_G (\cdot )|$ being bounded from above by $1$ (see \eg Refs.~\cite{Gadde:2025csh,Carrozza:2026qcf}), one necessarily has $s_G(S)\leq 0$.}
    \begin{equation}\label{eq:Ansatz_large-N}
            \mean{\tr_G\pa{S^{(N)}, \bar{S}^{(N)}}} \underset{N \to \infty}{=} \mu_G(S) N^{s_G(S)} \left( 1 +  O\pa{1/N}\right) \,;
        \end{equation}
    \item $(G, S)$ obeys the \emph{large-$N$ factorization criterion} whenever:
    \begin{equation}\label{eq:factorization_criterion}
    \mean{\abs{\tr_G\pa{S^{(N)}, \bar{S}^{(N)}} }^2 } \underset{N \to \infty}{=} \abs{\mean{\tr_G\pa{S^{(N)}, \bar{S}^{(N)}}}}^2
    \left( 1+ O\left(\frac 1 N\right)\right)\,.
\end{equation}
\end{enumerate}
\end{defi}
If the sequence $S$ obeys the large $N$ Ansatz \eqref{eq:Ansatz_large-N}, the large-$N$ factorization criterion \eqref{eq:factorization_criterion} is equivalent to the following assumptions:
\begin{equation}   
\label{eq:largeNfactoproof}
\mu_{G \sqcup \bar G}(S) = \mu_G(S) \mu_{\bar{G}}(S)=  \abs{\mu_G(S)}^2 \,, \qquad   s_{G\sqcup \bar G}(S) = s_G(S) + s_{\bar G}(S) = 2 s_G(S) \,.
\end{equation}
Furthermore, the Haar or the Gaussian sequence satisfies the large-$N$ Ansatz for any $G \in \cG_D$, and the coefficients $\{\mu_G(\cT)\}_{G\in \cG_D}$ are integers of a combinatorial origin: they count the number of Wick contractions contributing to the leading order.\footnote{More precisely, they are the coefficients defined Eq.~\eqref{eq:mu_def}.} As a result, in this special case, the large-$N$ factorization criterion is obeyed by a graph $G \in \cG_D$ if and only if:
\begin{equation}\label{eq:largeNfacto_Haar}
    \mu_{G \sqcup \bar G}(\cT) = \mu_G(\cT) \mu_{\bar{G}}(\cT)=  \mu_G(\cT)^2 \,, \qquad  \rm{ and } \qquad s_{G\sqcup \bar G}(\cT) = s_G(\cT) + s_{\bar G}(\cT) = 2 s_G(\cT)  \,.
\end{equation}
Concretly, for $G \in \cG_D$, we have
\begin{equation}
    s_G(\cT) = \mF(G) - Dk(G) \,.
\end{equation}

\begin{prop}\label{prop:concentration} 
    Let $S= \pa{S^{(N)}}_{N\in \mathbb{N}^*}$ be a sequence of random $D$-partite tensors and $G \in \cG_D$. If $(G,S)$ obeys both the large-$N$ Ansatz \eqref{eq:Ansatz_large-N} and the large-$N$ factorization criterion \eqref{eq:factorization_criterion} (or equivalently \eqref{eq:largeNfactoproof}), then: for any $\varepsilon > 0$, there exists a constant $N_c > 0$ such that
    \begin{equation}\label{eq:concentration}
      \forall N \in \mathbb{N}^*\,,\qquad  \Prob{ \abs{\frac{\abs{\tr_G\pa{S^{(N)},\bar{S}^{(N)}}}}{\abs{\mu_G(S)}N^{s_G(S)}} - 1}  < \varepsilon}  \geq 1 - \frac{N_c}{N} \,.
    \end{equation} 
In particular, if the invariant $\tr_G$ does not admit zeroes (so that $R_G$ is always real-valued), the typical value of $R_G\pa{S^{(N)},\bar{S}^{(N)}}$ in the large-$N$ regime is:
\begin{equation}
    \abs{s_G (S)} \ln(N)- \ln\abs{\mu_G (S)}\,.
\end{equation}
\end{prop}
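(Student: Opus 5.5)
The plan is a second-moment (Chebyshev) argument applied to $X_N := \tr_G(S^{(N)},\bar S^{(N)})$ normalized by $m_N := \mu_G(S) N^{s_G(S)}$. Set $Y_N := X_N/m_N$. The large-$N$ Ansatz \eqref{eq:Ansatz_large-N} gives
\[
\mean{Y_N} = 1 + O(1/N).
\]
The factorization criterion \eqref{eq:factorization_criterion}, combined with the Ansatz, gives
\[
\mean{|Y_N|^2} = |\mean{X_N}|^2(1+O(1/N))/|m_N|^2 = 1+O(1/N).
\]
Therefore the variance satisfies
\[
\mean{|Y_N - \mean{Y_N}|^2} = \mean{|Y_N|^2} - |\mean{Y_N}|^2 = O(1/N).
\]
Equivalently, one could work directly with the relations \eqref{eq:largeNfactoproof}, noting that $\mean{|X_N|^2} = \mean{\tr_{G\sqcup\bar G}}$ since $\tr_{\bar G} = \overline{\tr_G}$.

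From this I would deduce concentration of $Y_N$ around $1$, and then of $|Y_N|$ around $1$. First, $\mean{|Y_N-1|^2} \le 2\mean{|Y_N-\mean{Y_N}|^2} + 2|\mean{Y_N}-1|^2 = O(1/N)$; in fact a direct expansion $\mean{|Y_N-1|^2} = \mean{|Y_N|^2} - 2\,\mathrm{Re}\,\mean{Y_N} + 1 = O(1/N)$ is cleaner. Next, by the reverse triangle inequality $\bigl||Y_N|-1\bigr| \le |Y_N-1|$, and Chebyshev's inequality yields
\[
\Prob{\bigl||Y_N|-1\bigr| \ge \varepsilon} \le \frac{\mean{|Y_N-1|^2}}{\varepsilon^2} \le \frac{C}{\varepsilon^2 N}
\]
for $N \ge N_0$, where $C$ and $N_0$ come from the $O(1/N)$ constants. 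To obtain the bound for every $N\in\mathbb{N}^*$ as stated, I would take $N_c := \max(C/\varepsilon^2, N_0)$. For $N< N_0$ the right-hand side $1-N_c/N$ is then nonpositive, so the inequality holds trivially.

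The typical-value statement then follows by taking logarithms. On the event $\bigl||Y_N|-1\bigr|<\varepsilon$ with $\varepsilon<1$, we have $X_N \neq 0$ and
\[
R_G = -\ln|X_N| = -s_G(S)\ln N - \ln|\mu_G(S)| - \ln|Y_N|,
\]
with $|\ln|Y_N|| \le -\ln(1-\varepsilon)$, which is arbitrarily small. Since $s_G(S)\le 0$, the first term equals $|s_G(S)|\ln N$. This event has probability at least $1-N_c/N \to 1$.

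The main obstacle is only bookkeeping: the $O(1/N)$ terms must be converted into explicit constants valid beyond some threshold. This requires $\mu_G(S)\neq 0$, which is part of the Ansatz, so that division by $m_N$ is legitimate. I would also handle the complex phase of $\mu_G(S)$ by working with $Y_N$ rather than with $|X_N|$ directly.
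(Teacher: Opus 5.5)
Your proposal is correct and follows essentially the same route as the paper. You bound $\langle (\lvert Y_N\rvert-1)^2\rangle$ by $\langle \lvert Y_N-1\rvert^2\rangle$, expand the latter as $\langle \lvert Y_N\rvert^2\rangle - 2\,\mathrm{Re}\,\langle Y_N\rangle + 1 = O(1/N)$ using the Ansatz and the factorization criterion, and conclude with Markov/Chebyshev. Your choice $N_c=\max(C/\varepsilon^2,N_0)$, which makes the bound hold trivially for $N<N_0$, tidies up a point the paper passes over when it asserts a uniform constant $K$ valid for all $N$.
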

The following proof follows the approach of Ref.~\cite[33]{Hayden:2016cfa}, where we emphasize the use of Eq.~\eqref{eq:factorization_criterion}.

\begin{proof}
Fix a graph $G\in\cG_D$ with $2k$ vertices, $N \in \mathbb{N}^*$ and $\varepsilon >0$. Our goal is to bound the fluctuation of $\tr_G\pa{S^{(N)},\bar{S}^{(N)}}$ around the leading-order term of the average. By application of a triangle inequality, we have
    \begin{align}
        \mean{ \abs{\frac{\abs{\tr_G\pa{S^{(N)},\bar{S}^{(N)}}}}{\abs{\mu_G(S)}N^{s_G(S)}} - 1}^2 } &\leq \mean{ \abs{\frac{\tr_G\pa{S^{(N)},\bar{S}^{(N)}}}{\mu_G(S)N^{s_G(S)}} - 1}^2 } \\
        &= \frac{\mean{\abs{\tr_G\pa{S^{(N)},\bar{S}^{(N)}}}}^2}{\abs{\mu_G(S)}^2N^{2s_G(S)}} - 1 - 2 \pa{\mathrm{Re}\frac{\mean{\tr_G\pa{S^{(N)},\bar{S}^{(N)}}}}{\mu_G(S)N^{s_G(S)}} - 1} \,.
    \end{align}
    Noting that the left-hand-side is real and positive, we can apply another triangle inequality to obtain the bound
    \begin{equation} \label{eq:ineqproof}
        0 \leq \mean{ \abs{\frac{\abs{\tr_G\pa{S^{(N)},\bar{S}^{(N)}}}}{\abs{\mu_G(S)}N^{s_G(S)}} - 1}^2 } \leq \abs{\frac{\mean{ \abs{ \tr_G\pa{S^{(N)},\bar{S}^{(N)}} }^2 } }{\abs{\mu_G(S)}^2N^{2s_G(S)}} - 1}+ 2 \abs{\mathrm{Re}\frac{\mean{\tr_G\pa{S^{(N)},\bar{S}^{(N)}}}}{\mu_G(S)N^{s_G(S)}} - 1}\,.
    \end{equation}
The large-$N$ Ansatz applied to $(G, S)$ implies that:
\begin{equation}
    \abs{\mathrm{Re}\frac{\mean{\tr_G\pa{S^{(N)},\bar{S}^{(N)}}}}{\mu_G(S)N^{s_G(S)}} - 1} \underset{N\to \infty}{=} \abs{\mathrm{Re}\frac{\mu_G(S)N^{s_G(S)} \left( 1+ O(1/N)\right)}{\mu_G(S)N^{s_G(S)}} - 1} \underset{N\to \infty}{=} O(1/N)\,.
\end{equation}
Invoking in addition the factorization criterion applied to $(G, S)$ allows one to write
\begin{align}
    \abs{\frac{\mean{ \abs{ \tr_G\pa{S^{(N)},\bar{S}^{(N)}} }^2 }}{\abs{\mu_G(S)}^2N^{2s_G(S)}} - 1} &\underset{N\to \infty}{=} \abs{\frac{\abs{ \mean{ \tr_G\pa{S^{(N)},\bar{S}^{(N)}} } }^2\left( 1+ O(1/N)\right)}{\abs{\mu_G(S)}^2N^{2s_G(S)}} - 1}\\ &\underset{N\to \infty}{=} \abs{\frac{\abs{\mu_G(S)}^2N^{2s_G(S)}
    \left( 1+ O(1/N)\right)}{\abs{\mu_G(S)}^2N^{2s_G(S)}} - 1}\underset{N\to \infty}{=} O(1/N)\,.
\end{align}
As a result, we can find a constant $K >0$ such that
    \begin{equation}
        0 \leq \mean{ \abs{\frac{\abs{\tr_G\pa{S^{(N)},\bar{S}^{(N)}}}}{\abs{\mu_G(S)}N^{s_G(S)}} - 1}^2 } \leq \frac{K}{N}.
    \end{equation}
    Finally, using Markov's inequality, we find that
    \begin{equation}
        \Prob{\abs{\frac{\abs{\tr_G\pa{S^{(N)},\bar{S}^{(N)}}}}{\abs{\mu_G(S)}N^{s_G(S)}} - 1} \geq \varepsilon} \leq \frac{1}{\varepsilon^2} \mean{\abs{\frac{\abs{\tr_G\pa{S^{(N)},\bar{S}^{(N)}}}}{\abs{\mu_G(S)}N^{s_G(S)}} - 1}^2} \leq \frac{K}{N \varepsilon^2 }  = \frac{N_c}{N}\,,
    \end{equation}
    where $N_c = \frac{K}{\varepsilon^2}>0$. Therefore, we obtain
    \begin{equation}
        \Prob{ \abs{\frac{\abs{\tr_G\pa{S^{(N)},\bar{S}^{(N)}}}}{\abs{\mu_G(S)}N^{s_G(S)}} - 1}  < \varepsilon} = 1 - \Prob{\abs{\frac{\abs{\tr_G\pa{S^{(N)},\bar{S}^{(N)}}}}{\abs{\mu_G(S)}N^{s_G(S)}} - 1} \geq \varepsilon} \geq 1 - \frac{N_c}{N}\,,
    \end{equation}
which concludes the proof.    
\end{proof}

With the exception of the example in Fig.~\ref{fig:MST_incomp}, all graphs considered in this paper satisfy the large-$N$ factorization criterion, Eq.~\eqref{eq:factorization_criterion} (or equivalently, Eq.~\eqref{eq:largeNfactoproof}), as guaranteed by Thms.~\ref{main-theo1} and~\ref{main-theo3}. This includes compatible graphs, cyclic graphs, realignment moments, joint realignment moments, and planar graphs.

\paragraph{Expectation value of $R_G$ in the Haar random state.} Let us now introduce a uniform lower-bound on $|\tr_G(\cdot)|$ which, combined with Prop.~\ref{prop:concentration}, will be sufficient to establish Eq.~\eqref{eq:as_equiv_R_G}.
\begin{defi}\label{def:power_law}
    Let $G \in \cG_D$. For any $N \in \mathbb{N}^*$, we introduce: 
    \begin{equation}
        m_G(N)= \min\Bigl\{ \abs{\tr_G \left( S,\bar S \right)} \, \Big\vert \, S \in (\mathbb{C}^N)^{\otimes D} \,, \, \|S\|=1\Bigr\}\,.
    \end{equation}
We will say that $\tr_G$ (or $G$) follows a \emph{power law} if there exists $K > 0$ and $r\geq 0$ such that:
\begin{equation}\label{eq:power_law}
    \forall N \in \mathbb{N}^*\,, \qquad m_G(N) \geq K N^{-r}\,.
\end{equation}
\end{defi}

In particular, with regard to the trace-invariants introduced in the paragraph ``Some examples of colored graphs'' (Sec.~\ref{sec:Prerequisites}), both cyclic graphs and realignment moments follow a power law. For interested readers, certain trace-invariants related to holography or quantum information have also been shown to follow a power-law (see Prop.~6.16 of Ref.~\cite{Carrozza:2026qcf}).

\begin{prop}\label{prop:asymptotic_R_G}   
Let $S=\pa{S^{(N)}}_{N\in \mathbb{N}^*}$ and $G \in \cG_D$ such that: 1) $(G,S)$ obeys both the large-$N$ Ansatz \eqref{eq:Ansatz_large-N} and the large-$N$ factorization criterion \eqref{eq:factorization_criterion} (or equivalently \eqref{eq:largeNfactoproof}); and 2) $\tr_G$ follows a power law. Then:
   \begin{equation}
       \mean{ R_G \pa{S^{(N)},\bar{S}^{(N)}} } \underset{N\to \infty}{=} - \ln \abs{\mean{ \tr_G \pa{S^{(N)},\bar{S}^{(N)}} }}+o(1) \underset{N\to \infty}{=} \abs{s_G(S)} \ln\left(N\right) - \ln\abs{\mu_G(S)}+o(1)\,.
\end{equation}
\end{prop}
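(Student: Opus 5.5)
We split the expectation of $R_G$ according to whether $\tr_G$ is close to its leading-order value, using the concentration result of Prop.~\ref{prop:concentration} on the good event and the power-law bound of Def.~\ref{def:power_law} on the bad event. Set $X_N = \abs{\tr_G(S^{(N)},\bar S^{(N)})}/(\abs{\mu_G(S)}N^{s_G(S)})$. On the set $\{X_N>0\}$ we have
\[
R_G = \abs{s_G(S)}\ln N - \ln\abs{\mu_G(S)} - \ln X_N .
\]
Here $s_G(S)\le 0$, which makes the identity consistent with the claimed asymptotics. Since $m_G(N)\geq KN^{-r}>0$, the invariant never vanishes on unit-norm states, so $R_G$ is almost surely finite. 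Moreover, the large-$N$ Ansatz gives $-\ln\abs{\mean{\tr_G}} = \abs{s_G}\ln N - \ln\abs{\mu_G} + O(1/N)$. It therefore suffices to prove that $\mean{\ln X_N}\to 0$.

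\textbf{Good event.} Fix $\varepsilon\in(0,1/2)$ and let $A_\varepsilon=\{\abs{X_N-1}<\varepsilon\}$. On $A_\varepsilon$ we have $\abs{\ln X_N}\le 2\varepsilon$, so this part contributes at most $2\varepsilon$ in absolute value.

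\textbf{Bad event, upper tail.} On $A_\varepsilon^c$ we bound $\abs{\ln X_N}$ from both sides. Since $\abs{\tr_G}\le 1$ for normalized tensors, $X_N\le N^{\abs{s_G}}/\abs{\mu_G}$, hence $\ln X_N\le \abs{s_G}\ln N + C$.

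\textbf{Bad event, lower tail.} The power law gives $X_N\ge KN^{-r}N^{\abs{s_G}}/\abs{\mu_G}$, hence $-\ln X_N\le r\ln N + C'$.

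\textbf{Combining.} We conclude $\abs{\ln X_N}\le C''\ln N$ on $A_\varepsilon^c$. By Prop.~\ref{prop:concentration}, $\Prob{A_\varepsilon^c}\le N_c(\varepsilon)/N$, so the bad-event contribution is $O_\varepsilon(\ln N/N)\to 0$. It follows that $\limsup_N\abs{\mean{\ln X_N}}\le 2\varepsilon$ for every $\varepsilon$, which gives the $o(1)$ statement.

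\textbf{Main obstacle: non-normalized sequences.} The delicate step is justifying the bound on the bad event when $S^{(N)}$ is not normalized, for instance the Gaussian tensor $T$. In that case $\abs{\tr_G}$ is not bounded by $1$, and the power law only controls $\abs{\tr_G(S)}\ge m_G(N)\norm{S}^{2k}$.

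For such cases I would write $\tr_G(S)=\norm{S}^{2k}\tr_G(S/\norm{S})$. This gives $\abs{\ln X_N}\le C\ln N + 2k\abs{\ln\norm{S}}$ on $A_\varepsilon^c$, and Cauchy--Schwarz then yields
\[
\mean{\abs{\ln X_N}\mathbf 1_{A_\varepsilon^c}}\le \sqrt{\mean{(\ln X_N)^2}}\sqrt{N_c/N}.
\]
It then suffices that $\mean{(\ln\norm{S})^2}$ grows subpolynomially. For the Gaussian tensor this holds because $N^D\norm{T}^2$ follows a Gamma law with shape $N^D$ and rate $1$, so $\ln\norm{T}^2$ has mean $O(N^{-D})$ and variance $O(N^{-D})$. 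For Haar tensors this step is unnecessary, since $\norm{T_U}=1$.

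Since the statement is phrased for general $S$, I would interpret it, as in Def.~\ref{def:power_law}, for normalized states, and remark on the Gaussian case separately.
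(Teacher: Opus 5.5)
Your proposal is correct and follows essentially the same route as the paper. Both arguments split on the concentration event of Prop.~\ref{prop:concentration}, where the deviation is at most a small multiple of $\varepsilon$. On the complementary event, of probability $O(1/N)$, both use $|\tr_G|\le 1$ together with the power law to bound the deviation by $O(\ln N)$.

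Your Cauchy--Schwarz treatment of non-normalized (Gaussian) tensors goes slightly beyond the paper. The paper's proof uses $|\tr_G|\le 1$ and the unit-norm power law without comment, even though Cor.~\ref{cor:Lit_inv_and_exchange} is also stated for the Gaussian case.
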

\begin{proof}
Let $G \in \cG_D$, $k=k(G)$, and assume the existance of $K>0$, $r\geq 0$ such that Eq.~\eqref{eq:power_law} holds (\ie $\tr_G$ follows a power law). Let us also assume that $(G, S)$ obeys the large-$N$ Ansatz and the large-$N$ factorization criterion.

Let $\delta >0$; we will prove that $\mean{R_G \pa{S^{(N)},\bar{S}^{(N)}}}$ is $\delta$-close from $\abs{s_G(S)} \ln\left(N\right) - \ln\abs{\mu_G(S)}$ for sufficiently large $N$. To this effect, let us fix $\varepsilon >0$ such that $-\ln(1-\varepsilon)$ and $\ln(1+\varepsilon)$ are both in $]0,\delta / 2[$. Given our assumptions, we can apply the concentration result of Prop.~\ref{prop:concentration}: there exists $N_c >0$ such that Eq.~\eqref{eq:concentration} holds. Let us also fix $N \in \mathbb{N}^*$. With probability \emph{at least} $1- \frac{N_c}{N}$, we have
\begin{equation}
    \abs{\frac{\abs{ \tr_G\pa{S^{(N)},\bar{S}^{(N)}} }}{\abs{\mu_G(S)}N^{s_G(S)}} - 1}   < \varepsilon \,,
\end{equation}
implying that
\begin{equation}
    -\ln\left(1+\varepsilon\right) \leq - \ln \abs{\tr_G\pa{S^{(N)},\bar{S}^{(N)}}}- \abs{s_G(S)} \ln\left(N\right) + \ln\abs{\mu_G(S)} \leq -\ln\left(1-\varepsilon\right) \,.
\end{equation}
It follows that:
\begin{equation} \label{eq:asymptotic-bound1}
    \Bigl\vert R_G\pa{S^{(N)},\bar{S}^{(N)}}- \abs{s_G(S)} \ln\left(N\right) + \ln\abs{\mu_G(S)}\Bigr\vert \leq \frac{\delta}{2}\,.
\end{equation}
On the other hand, with probability \emph{at most} $\frac{N_c}{N}$, $\tr_G\pa{S^{(N)},\bar{S}^{(N)}}$ is not necessarily close to its typical value, but we can in this case use the uniform bound Eq.~\eqref{eq:power_law} combined with the triangle inequality to write:
\begin{align}\label{eq:asymptotic-bound2}
\Bigl\vert R_G\pa{S^{(N)},\bar{S}^{(N)}}- \abs{s_G(S)} \ln\left(N\right) + \ln\abs{\mu_G(S)}\Bigr\vert &\leq - \ln \abs{\tr_G\pa{S^{(N)},\bar{S}^{(N)}}}+ \abs{s_G(S)} \ln\left(N\right) - \ln\abs{\mu_G(S)} \\
&\leq a \ln(N)+ b
\,,
\end{align}
where $a= r+\abs{s_G(S)}$ and $b= \abs{\ln(K)}+ \abs{\ln\abs{\mu_G(S)}}$ are strictly positive constants. By linearity of the expectation value $\mean{\;\cdot\;}$, we can combine the bounds Eq.~\eqref{eq:asymptotic-bound1} (which is valid with probability at most $1$) and Eq.~\eqref{eq:asymptotic-bound2} (which is valid with probability at most $N_c/N$), to obtain
\begin{equation}
    \Bigl\vert \mean{R_G\pa{S^{(N)},\bar{S}^{(N)}}}- \abs{s_G(S)} \ln\left(N\right) + \ln\abs{\mu_G(S)}\Bigr\vert \leq \frac{\delta}{2} + \frac{N_c }{N} \left(a\ln(N) +b\right)\,.
\end{equation}
Since the second term on the right-hand side converges to $0$ in the $N\to +\infty$ limit, it will be bounded by $\delta / 2$ for $N$ sufficiently large. Hence, we can find $N_0\in \mathbb{N}^*$ such that: 
\begin{equation}
\forall N \geq N_0\,, \qquad     \Bigl\vert \mean{R_G\pa{S^{(N)},\bar{S}^{(N)}}}- \abs{s_G(S)} \ln\left(N\right) + \ln\abs{\mu_G(S)}\Bigr\vert \leq \frac{\delta}{2} +\frac{\delta}{2} = \delta\,.
\end{equation}
This concludes the proof that
\begin{equation}
\mean{R_G\pa{S^{(N)},\bar{S}^{(N)}}}   \underset{N\to \infty}{=} \abs{s_G(S)} \ln\left(N\right) - \ln\abs{\mu_G(S)}+o(1)\,,    
\end{equation}
and the right-hand side is clearly equal to $- \ln \abs{\langle \tr_G\pa{S^{(N)},\bar{S}^{(N)}}\rangle}+o(1)$, so we are done.
\end{proof}
\begin{cor}  \label{cor:Lit_inv_and_exchange} Let $\cT = \pa{\cT^{(N)}}_{N \in \mathbb{N}^*}$, where for every $N \in \mathbb{N}^*$, $\cT^{(N)}$ denotes the Haar or Gaussian complex $D$-partite random tensor of local dimension $N$. If $G$ is a $D$-colored graph obeying the large-$N$ factorization criterion \eqref{eq:factorization_criterion_Haar} and the power law bound \eqref{eq:power_law}, then the following asymptotic relation holds:
    \begin{equation}\label{eq:asymptotic-RG_Haar}
        \mean{R_G \pa{\cT^{(N)},\bar{\cT}^{(N)}}} \underset{N\to \infty}{=} - \ln \mean{\tr_G \left( \cT^{(N)},\bar{\cT}^{(N)} \right)} +o(1) \underset{N\to \infty}{=} \pac{Dk(G) - \mF(G)} \ln\left(N\right) - \ln\left(\mu_G\right)+o(1)\,.
    \end{equation}
\end{cor}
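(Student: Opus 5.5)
The corollary is a direct specialization of Prop.~\ref{prop:asymptotic_R_G} to the sequence $S=\cT$. The plan is to check its two hypotheses and then identify the constants $s_G(\cT)$ and $\mu_G(\cT)$.

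\emph{The large-$N$ Ansatz for $(G,\cT)$.} For the Gaussian tensor, Eq.~\eqref{eq:exp-and-asympt-moments} expresses $\mean{\tr_G(T,\bar T)}$ as a finite sum of powers $N^{F_0(\widehat G)-Dk(G)}$ with nonnegative integer coefficients. Since $F_0$ takes integer values, every non-dominant term is suppressed by at least one power of $N$. This gives
\[
\mean{\tr_G(T,\bar T)}=\mu_G N^{\mF(G)-Dk(G)}\bigl(1+O(1/N)\bigr),
\]
with $\mu_G\ge 1$ by Eq.~\eqref{eq:mu_def}. For the Haar tensor, Eq.~\eqref{eq:Haar-vs-gaussian} multiplies this by $f_{k,D,N}$. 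Writing $f_{k,D,N}=\prod_{j=1}^{k}\bigl(1+(j-1)/N^D\bigr)^{-1}$ shows $f_{k,D,N}=1+O(N^{-D})$, so the same Ansatz holds with the same constants. Hence $s_G(\cT)=\mF(G)-Dk(G)$ and $\mu_G(\cT)=\mu_G>0$.

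\emph{The factorization criterion.} Since $\tr_{\bar G}=\overline{\tr_G}$, we have $\abs{\tr_G}^2=\tr_{G\sqcup\bar G}$. So the criterion \eqref{eq:factorization_criterion} for $(G,\cT)$ is exactly the assumed Eq.~\eqref{eq:factorization_criterion_Haar}, using that the mean of $\tr_G$ is real and positive here. For Haar versus Gaussian I would argue as follows. The Ansatz holds for both $G$ and $G\sqcup\bar G$, and the leading constants agree for the two ensembles by Eq.~\eqref{eq:Haar-vs-gaussian}. Therefore the $O(1/N)$ criterion transfers between them through the equivalent form \eqref{eq:largeNfacto_Haar}.

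\emph{Conclusion.} The power law is assumed, so Prop.~\ref{prop:asymptotic_R_G} applies. Since $s_G\le 0$, we have $\abs{s_G(\cT)}=Dk(G)-\mF(G)$, and $-\ln\abs{\mean{\tr_G}}=-\ln\mean{\tr_G}$ because the mean is positive. The only mildly delicate step is confirming the $O(1/N)$ relative error, both in the Ansatz (from integrality of face counts) and under the Haar–Gaussian comparison. Everything else is bookkeeping.
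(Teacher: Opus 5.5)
Your argument is the paper's argument. The paper's proof is a single line invoking Prop.~\ref{prop:concentration}, Prop.~\ref{prop:asymptotic_R_G} and positivity of $\mean{\tr_G}$. Your supporting checks are correct: integrality of $F_0$ gives the Ansatz with relative error $O(1/N)$ and $s_G=\mF(G)-Dk(G)$; $f_{k,D,N}=1+O(N^{-D})$; $\tr_{G\sqcup\bar G}=\abs{\tr_G}^2$; and the criterion transfers between the Haar and Gaussian ensembles.

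The step that does not go through as written is ``the power law is assumed, so Prop.~\ref{prop:asymptotic_R_G} applies'' for the \emph{Gaussian} ensemble; the paper's one-line proof also skips it. Def.~\ref{def:power_law} only bounds $\abs{\tr_G}$ on unit-norm tensors. The proof of Prop.~\ref{prop:asymptotic_R_G} uses it as a pointwise bound $\abs{\tr_G(S^{(N)},\bar S^{(N)})}\ge KN^{-r}$ on the low-probability event, and implicitly uses $R_G\ge 0$. For the Gaussian tensor you only have $\abs{\tr_G(T,\bar T)}=\norm{T}^{2k(G)}\abs{\tr_G(\hat T,\bar{\hat T})}$ with $\hat T=T/\norm{T}$, and $\norm{T}$ is not bounded away from $0$. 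Without a normalization hypothesis the proposition genuinely fails. Rescale $T_U$ by an independent scalar equal to $\mathrm{e}^{-N^3}$ with probability $N^{-2}$ (and $1$ otherwise). This perturbs the Ansatz and the criterion only by factors $1+O(N^{-2})$, but shifts $\mean{R_G}$ by $2k(G)N$.

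The repair is short and uses your own transfer remark. The direction $\hat T$ is Haar-distributed and $N^D\norm{T}^2$ is $\mathrm{Gamma}(N^D,1)$-distributed. With $\psi$ the digamma function, $\mean{\ln\norm{T}^2}=\psi(N^D)-D\ln N=O(N^{-D})$, so
\[
\mean{R_G(T,\bar T)}=\mean{R_G(\hat T,\bar{\hat T})}-k(G)\,\mean{\ln\norm{T}^2}=\mean{R_G(T_U,\bar T_U)}+O(N^{-D})\,.
\]
Likewise $-\ln\mean{\tr_G(T,\bar T)}=-\ln\mean{\tr_G(T_U,\bar T_U)}+\ln f_{k,D,N}$ with $\ln f_{k,D,N}=O(N^{-D})$. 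As you noted, the criterion for $T$ implies it for $T_U$. Hence the Gaussian case reduces to the Haar case, where $\norm{T_U}=1$ and Prop.~\ref{prop:asymptotic_R_G} applies as proven.
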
 
\begin{proof}
This follows straightforwardly from the concentration phenomenon of Prop.~\ref{prop:concentration}, Prop.~\ref{prop:asymptotic_R_G}, and the fact that $\mean{\tr_G \left( \cT^{(N)},\bar{\cT}^{(N)} \right)}$ is positive for any $G\in \cG_D$.
\end{proof}

Cyclic graphs and realignment moments with $\abs{M_3} \geq \abs{M_1},\abs{M_2}$ follow a power law (see Ref.~\cite{Carrozza:2026qcf}). Furthermore, both exhibit tree-like dominant pairings (see Thm.~\ref{thm:families-tree-like-LO}), a fact which, by Thm.~\ref{main-theo3}, leads to their large-$N$ factorization. As such, they serve as concrete examples of graphs satisfying the asymptotic relation in Eq.~\eqref{eq:asymptotic-RG_Haar} (for more examples, see Ref.~\cite[87]{Carrozza:2026qcf}).

\ 

In conclusion, we have seen so far in this section that, considering a graph $G$ whose associated trace-invariant $\tr_G$ admits no zeroes (so that $R_G$ is real-valued), the \emph{typical value} of $R_G \left( \cT^{(N)},\bar{\cT}^{(N)} \right)$ in the Haar random state is easily determined whenever $G$ obeys the large $N$ factorization property \eqref{eq:factorization_criterion_Haar}. If $G$ obeys in addition a power law (Def.~\ref{def:power_law}), then an asymptotic expression for the expectation value $\mean{R_G \left( \cT^{(N)},\bar{\cT}^{(N)} \right)}$ can be straightforwardly determined by commuting the average and the logarithm (in other words, the "annealed" and "quenched" computations yield the same result). In the bipartite setting ($D=2$), it turns out that those two conditions are always verified, hence expectation values of R\'{e}nyi-$k$ entanglement entropies (with integer $k$) can be easily evaluated in the Haar random state (which is well-known). By contrast, neither of those conditions on $G$ automatically hold in the multipartite setting ($D\geq 3$), and neither of them implies the other. When they do not, determining the average of $\mean{R_G \left( \cT^{(N)},\bar{\cT}^{(N)} \right)}$ (or even its typical value) becomes more challenging in the sense that we should expect the result of the quenched calculation to fail to accurately reproduce $\mean{R_G \left( \cT^{(N)},\bar{\cT}^{(N)} \right)}$. Moreover, properly defining $\mean{R_G \left( \cT^{(N)},\bar{\cT}^{(N)} \right)}$ may itself be challenging whenever $\tr_G$ admits zeroes. We briefly consider this problem in the next paragraph, focusing on the particularly interesting case of non-factorizing maximally single-trace invariants. 

\paragraph{Non-factorizing maximally single-trace invariants.} Consider a maximally single-trace graph $H$ that fails to obey the large $N$ factorization property \eqref{eq:factorization_criterion_Haar}, such as \eg the graph $H$ of Fig.~\ref{fig:MST_incomp}. We also expect such an invariant to admit zeroes,\footnote{While a general proof is currently missing, Ref.~\cite{Carrozza:2026qcf} provides one example of maximally single-trace invariant admitting zeroes.} and in particular to fail to obey a power law (Def.~\ref{def:power_law}). In fact, the existence of zeroes of $\tr_H$ makes the very definition of $\mean{R_H \left( \cT^{(N)},\bar{\cT}^{(N)} \right)}$ somewhat questionable, since $R_H$ formally evaluates to $+\infty$ on any such zero.\footnote{Note that, at fixed $N$, the locus of points where $\tr_H$ vanishes is by definition an algebraic set; it is thus of measure $0$ with respect to the Haar measure. Hence, the conventional value we assign to $R_H$ at those zeroes is irrelevant for the calculation discussed here; the issue is that $R_H$ reaches arbitrarily large positive values in neigborhoods of those zeroes that contribute non-trivially to the Haar measure.} To resolve this difficulty, we will introduce a regulator to suppress the contributions of large values of $R_H$, and study how this regulator could possibly be removed in a second step. At fixed $N$, a natural way of implementing such a strategy would be to introduce a constant $\Lambda > 0$ and compute the average of the real random variable $\min\left(R_H\left( \cT^{(N)},\bar{\cT}^{(N)} \right),\ln(\Lambda)\right)$. However, since we are interested in taking a scaling limit, we will make the cut-off explicitly dependent on $N$. To fix this dependency, it is instructive to first look at the quenched version of this calculation. Let $G= H \sqcup \bar H$. Since $\tr_G (\cdot) = \abs{\tr_H(\cdot)}^2$, we have $R_H= \frac{1}{2}R_G$. We can thus define a \emph{quenched average} of $R_H$ as:
\begin{equation}
    \mean{R_H\left( \cT^{(N)},\bar{\cT}^{(N)} \right)}_{\mathrm{quenched}} \eqdef - \frac{1}{2} \ln \mean{\exp\left( R_G\left( \cT^{(N)},\bar{\cT}^{(N)} \right) \right) } = - \frac{1}{2} \ln \mean{\tr_G \left( \cT^{(N)},\bar{\cT}^{(N)} \right) }\,. 
\end{equation}
The right-hand side is well-defined since the expectation value $\mean{\tr_G \left( \cT^{(N)},\bar{\cT}^{(N)} \right) }$ is strictly positive and, by Prop.~\ref{prop:all-cum-LO}, it is easily evaluated to:
\begin{equation}
    \mean{R_H\left( \cT^{(N)},\bar{\cT}^{(N)} \right)}_{\mathrm{quenched}} =  \frac{D k(H)}{2} \ln(N)-\frac{1}{2}\ln(\mu^{\conn}_G) + o(1)\,. 
\end{equation}
The previous expression suggests to work with the following regularized version of $R_H\left( \cT^{(N)},\bar{\cT}^{(N)} \right)$ in the annealed calculation: for any $\Lambda > 0$ and $N\in \mathbb{N}^*$, we define
\begin{equation}
    R_H^{(\Lambda,N)}\left( \cT^{(N)},\bar{\cT}^{(N)} \right) \eqdef \min\left(R_H\left( \cT^{(N)},\bar{\cT}^{(N)} \right),\frac{D k(H)}{2} \ln(N)+\ln(\Lambda)\right)\,.
\end{equation}
It is straightforward to evaluate its expectation value explicitly if we know the asymptotic distribution of the random variable $N^{Dk(H)}\tr_{G}\left( \cT^{(N)},\bar{\cT}^{(N)} \right)$. How to determine this distribution for an arbitrary non-factorizing maximally single-trace invariant has been outlined in and around Prop.~\ref{prop:all-cum-LO-distr}. Let us restrict our attention to the special cases discussed in detail in that proposition. Namely, we now make the further assumptions that $\mF(H)< \frac D 2 k(H)$ and: either $\mF^{\conn}(H\sqcup H)=\mF^{\conn}(H\sqcup \bar H)$, or  $\mF^{\conn}(H\sqcup H)=\mF^{\conn}(H\sqcup \bar H)$ and $\mu^{\conn}_G = \mu^{\conn}_{H\sqcup H}$. According to Prop.~\ref{prop:all-cum-LO-distr}, $N^{Dk(H)}\tr_{G}\left( \cT^{(N)},\bar{\cT}^{(N)} \right)$ converges in distribution to an exponential or a Gamma random variable in the limit $N \to +\infty$. Let us denote by $\rho:\mathbb{R}_+^* \to \mathbb{R}_+$ the probability density of this limit distribution on $\mathbb{R}_+^*$ (note that it is continuous, non-zero everywhere, and decays at least exponentially towards $+\infty$). We then find that 
\begin{equation}
    \mean{R_H^{(\Lambda,N)}\left( \cT^{(N)},\bar{\cT}^{(N)} \right) } \underset{N \to +\infty}{=} \alpha_\Lambda \ln(N) + \beta_\Lambda + o(1)
\end{equation}
with
\begin{align}
\alpha_\Lambda &= \frac{Dk(H)}{2} \left( 1 + \int_{0}^{\Lambda^{-2}} \rho(x) \,\extd x \right) \underset{\Lambda \to +\infty }{\rightarrow} \frac{Dk(H)}{2} \,,\\
    \beta_\Lambda &= - \frac{1}{2} \int_{\Lambda^{-2}}^{+\infty} \rho(x) \ln(x) \,\extd x +\ln(\Lambda)  \int_{0}^{\Lambda^{-2}} \rho(x) \, \extd x \underset{\Lambda \to +\infty }{\rightarrow} - \frac{1}{2} \int_{0}^{+\infty} \rho(x) \ln(x) \, \extd x\,.
\end{align}
Assuming that the large $N$ and large $\Lambda$ limits commute (which we will not attempt to rigorously justify here), we conclude that the regularization can be removed, yielding the annealed result:
\begin{equation}
    \mean{R_H\left( \cT^{(N)},\bar{\cT}^{(N)} \right) } \underset{N \to +\infty}{=} \frac{Dk(H)}{2} \ln(N)  - \frac{1}{2} \int_{0}^{+\infty} \rho(x) \ln(x) \,\extd x + o(1)
\end{equation}
Given that
\begin{equation}
    \int_{0}^{+\infty} \rho(x) \ln(x) \,\extd x \neq \ln(\mu_G^{\conn})
\end{equation}
in both cases covered by Prop.~\ref{prop:all-cum-LO-distr},\footnote{Indeed, one finds $\int_0^{+\infty} \rho(x) \ln(x) \,\extd x = \ln \mu_G^{\conn} - \gamma$ if $\rho$ is an exponential distribution (case $\mF^{\conn}(H\sqcup H)<\mF^{\conn}(H\sqcup \bar H)$) and $\int_0^{+\infty} \rho(x) \ln (x)\, \extd x = \ln \mu_G^{\conn} - \gamma - 2 \ln 2$ if $\rho$ is a Gamma distribution (case $\mF^{\conn}(H\sqcup H)=\mF^{\conn}(H\sqcup \bar H)$ and $\mu^{\conn}_G = \mu^{\conn}_{H\sqcup H}$).} we thus conclude that the quenched and annealed calculations are inequivalent in the large $N$ limit, with the difference showing up at first subleading order: in other words, we have
\begin{equation}
    \mean{R_H\left( \cT^{(N)},\bar{\cT}^{(N)} \right) } \underset{N \to +\infty}{=} \mean{R_H\left( \cT^{(N)},\bar{\cT}^{(N)} \right) }_{\mathrm{quenched}} + o\left(\ln(N)\right)\,,
\end{equation}
but
\begin{equation}
    \mean{R_H\left( \cT^{(N)},\bar{\cT}^{(N)} \right) } \underset{N \to +\infty}{\neq} \mean{R_H\left( \cT^{(N)},\bar{\cT}^{(N)} \right) }_{\mathrm{quenched}} + o(1)\,.
\end{equation}

\addcontentsline{toc}{section}{References}

\printbibliography

\end{document}

%% file: Packages.tex
\usepackage{graphicx} 
\usepackage[left=2cm,right=2cm,top=2cm,bottom=2cm]{geometry} 
\usepackage{xcolor} 
\usepackage{enumitem}
\usepackage{pifont}  
\usepackage{comment} 
\usepackage{float} 
\usepackage{multicol} 

\usepackage{authblk}

\usepackage{soul}

\usepackage{mathtools} 
\usepackage{esvect} 
\usepackage{stackrel,amssymb} 
\usepackage{amsmath} 
\usepackage{amsthm} 
\usepackage{mathrsfs} 
\usepackage{cancel} 
\usepackage{mathabx} 
\usepackage{stackengine} 
\usepackage{tablefootnote}
\usepackage{caption}

\usepackage{pgfplots}
\pgfplotsset{compat=1.18}

\usepackage{slashbox} 
\usepackage{diagbox} 
\usepackage{multirow} 

\usepackage{graphicx}      
\usepackage{eso-pic}       
\graphicspath{ {./images/} }

 \usepackage[
     backend=biber,
     style=numeric-comp,
     sorting=none,
   ]{biblatex}
\usepackage{ifthen}
\usepackage{stackengine} 
\usepackage{scalerel}

\usepackage{listings} 

\definecolor{codegreen}{rgb}{0,0.6,0}
\definecolor{codegray}{rgb}{0.5,0.5,0.5}
\definecolor{codepurple}{rgb}{0.58,0,0.82}
\definecolor{backcolour}{rgb}{0.95,0.95,0.92}

\lstdefinestyle{mystyle}{
    backgroundcolor=\color{backcolour},   
    commentstyle=\color{codegreen},
    keywordstyle=\color{magenta},
    numberstyle=\tiny\color{codegray},
    stringstyle=\color{codepurple},
    basicstyle=\ttfamily\footnotesize,
    breakatwhitespace=false,         
    breaklines=true,                 
    captionpos=b,                    
    keepspaces=true,                 
    numbers=left,                    
    numbersep=5pt,                  
    showspaces=false,                
    showstringspaces=false,
    showtabs=false,                  
    tabsize=2
}

\usepackage{hyperref}
\hypersetup{
    colorlinks=true,
    linkcolor=black,
    citecolor=black,      
    urlcolor=black,
    pdftitle={Tensorial invariants for the classification of multipartite entanglement},
    pdfpagemode=FullScreen,
}

\usepackage{lscape}
\usepackage{graphicx}

%% file: Commandes.tex
\makeatletter 

\newcommand*\Neginternal[3]{\mathpalette\Neg@{{#1}{#2}{#3}}}
\newcommand*\Neg@[2]{\Neg@@{#1}#2}
\newcommand*\Neg@@[4]{%
  \mathrel{\ooalign{%
    $\m@th#1#4$\cr
    \hidewidth$\m@th#3{#1}\mkern\muexpr#2*2$\hidewidth\cr
  }}%
}

\newcommand*\negslash[1]{\m@th#1\not\mathrel{\phantom{=}}}
\newcommand*\snegslash[1]{\rotatebox[origin=c]{60}{$\m@th#1-$}}
\newcommand*\ssnegslash[1]{\rotatebox[origin=c]{60}{$\m@th#1{\dabar@}\mkern-7mu{\dabar@}$}}
\newcommand*\sssnegslash[1]{\rotatebox[origin=c]{60}{$\m@th#1\dabar@$}}
\makeatother  

\newcommand{\lat}[1]{\textsl{#1}}
\newcommand{\apriori}{\lat{a~priori} }
\newcommand{\ie}{\lat{i.e.}~}

\newcommand{\eg}{\lat{e.g.}~}
 
\renewcommand{\rm}[1]{\text{#1}}

\newcommand{\e}[1]{\rm{\textup{e}}^{#1}}           

\renewcommand{\d}{\rm{\textup{d}}} 		      	

\newcommand{\pa}[1]{\left( #1 \right)}
\newcommand{\pac}[1]{\left[ #1 \right]}
\newcommand{\paa}[1]{\left\{ #1 \right\}}

\DeclareMathOperator{\tr}{Tr}          

\newcommand{\mF}{\cal{F}_0}
\newcommand{\mFc}{\cal{F}_0^{\conn}}
\newcommand{\gG}{\mathbf{G}}
\newcommand{\gH}{\mathbf{H}}

\DeclareMathOperator{\id}{id}

\renewcommand{\vec}[1]{\vv{#1}} 				
\renewcommand{\tilde}[1]{\widetilde{#1}}

\newcommand{\norm}[1]{\left\lVert#1\right\rVert}

\newcommand{\ket}[1]{\left\vert #1 \right\rangle} 			
\newcommand{\mean}[1]{\left\langle #1 \right\rangle}

\newcommand{\abs}[1]{\left\lvert #1 \right\rvert}

\renewcommand{\bar}[1]{\overline{#1}}

\newcommand\eqdef{\mathrel{\overset{\makebox[0pt]{\mbox{\normalfont\tiny\sffamily def}}}{=}}}

\newcommand{\bb}[1]{\mathbb{#1}}

\newcommand{\Prob}[1]{\bb{P}\pa{#1}}

\newcommand{\MD}[1]{\cal{M}}

\theoremstyle{plain}
\newtheorem{theo}{Theorem}[section]
\newtheorem{prop}[theo]{Proposition}
\newtheorem{lem}[theo]{Lemma}
\newtheorem{rem}[theo]{Remark}
\newtheorem{ex}[theo]{Example}
\newtheorem{defi}[theo]{Definition}
\newtheorem{cor}[theo]{Corollary}

\renewcommand{\cal}[1]{\mathcal{#1}}

\definecolor{cadmiumred}{rgb}{0.89, 0.0, 0.13}
\newboolean{ifprintcolor}	

\ifthenelse{\boolean{ifprintcolor}}{}{}

\newcommand{\LU}{\mathsf{LU}}
\newcommand{\LO}{\mathsf{LO}}

\newcommand{\cZ}{\mathcal{Z}}

\newcommand{\cT}{\cal{T}}
\newcommand{\I}{%
  \textup{\uppercase\expandafter{\romannumeral 1}}%
}
\newcommand{\II}{%
  \textup{\uppercase\expandafter{\romannumeral 2}}%
}

\DeclareMathOperator{\conn}{c}

\newcommand\restr[2]{{
  \left.\kern-\nulldelimiterspace 
  #1 
  \vphantom{\big|} 
  \right|_{#2} 
  }}
\def\extd{\mathrm {d}}
\newcommand{\cG}{\mathcal{G}}